\keywords{Cyclic proofs, Proof theory, Logical complexity, Peano arithmetic, Induction}
\renewcommand{\phi}{\varphi}
\renewcommand{\emptyset}{\varnothing}
\newcommand{\dfn}{:=}
\newcommand{\dfntrm}[1]{\textbf{{#1}}}
\newcommand{\proves}{\vdash}
\newcommand{\notproves}{\nvdash}
\renewcommand{\models}{\vDash}
\newcommand{\notmodels}{\nvDash}
\newcommand{\Nat}{\mathbb{N}}
\newcommand{\lang}{\mathcal{L}}
\newcommand{\id}{\mathit{id}}
\newcommand{\sub}{\mathit{sub}}
\newcommand{\cut}{\mathit{cut}}
\newcommand{\wk}{\mathit{wk}}
\newcommand{\rul}{\mathsf{r}}
\newcommand{\lefrul}[1]{#1\text{-}l}
\newcommand{\rigrul}[1]{#1\text{-}r}
\newcommand{\axeqrefn}{=_1}
\newcommand{\axeqfn}{=_2}
\newcommand{\axeqpred}{=_3}
\newcommand{\daemon}{\bullet}
\newcommand{\cand}{\wedge}
\newcommand{\corr}{\vee}
\newcommand{\cimp}{\supset}
\newcommand{\cnot}{\neg}
\newcommand{\ciff}{\equiv}
\newcommand{\seqar}{\Rightarrow}
\newcommand{\prf}{\mathsf{Prf}}
\newcommand{\Prov}{\square_n }
\newcommand{\cons}[1]{\mathsf{Con}({#1})}
\newcommand{\consd}[1]{\mathsf{Con}_d({#1})}
\newcommand{\ind}{\mathit{ind}}
\newcommand{\Branch}{\mathrm{Branch}}
\renewcommand{\succ}{\mathsf{s}}
\newcommand{\Rfn}[2]{#1\text{-}\mathsf{Rfn}({#2})}
\newcommand{\Rfnd}[2]{#1\text{-}\mathsf{Rfn}_d({#2})}
\newcommand{\ArAcc}{\mathrm{ArAcc}}
\newcommand{\Sin}[2]{{\Sigma}^{#1}_{#2}}
\newcommand{\Pin}[2]{{\Pi}^{#1}_{#2}}
\newcommand{\Din}[2]{{\Delta}^{#1}_{#2}}
\newcommand{\IC}[1]{\mathrm{I}#1}
\newcommand{\CC}[1]{\mathrm{C}#1}
\newcommand{\ISn}[1]{\IC{\Sin{}{#1}}}
\newcommand{\IPn}[1]{\IC{\Pin{}{#1}}}
\newcommand{\IDn}[1]{\IC{\Din{}{#1}}}
\newcommand{\CSn}[1]{\CC{\Sin{}{#1}}}
\newcommand{\CPn}[1]{\CC{\Pin{}{#1}}}
\newcommand{\CDn}[1]{\CC{\Din{}{#1}}}
\newcommand{\bool}{\mathrm{Bool}}
\newcommand{\CIND}[1]{#1\text{-}\mathsf{IND}}
\newcommand{\CCA}[1]{#1\text{-}\mathsf{CA}}
\newcommand{\pspace}{\mathbf{PSPACE}}
\newcommand{\lth}{\mathbf{LTH}}
\newcommand{\flth}{\mathbf{FLTH}}
\newcommand{\np}{\mathbf{NP}}
\newcommand{\PA}{\mathsf{PA}}
\newcommand{\RCA}{\mathsf{RCA}_0}
\newcommand{\Q}{\mathsf{Q}}
\newcommand{\CA}{\mathsf{CA}}
\newcommand{\McN}{\mathsf{McNaughton}^-}
\newcommand{\wkl}{\mathsf{WKL}}
\newcommand{\bwwkl}{\mathsf{bwWKL}}
\newcommand{\LKID}{\mathsf{LKID}}
\newcommand{\CLKID}{\mathsf{CLKID}}
\newcommand{\FOLID}{\mathrm{FOL}_{\mathit{ID}}}
\newcommand{\PHP}{\mathsf{PHP}}
\newcommand{\Inj}{\mathrm{Inj}_f }
\newcommand{\inn}{\in}
\newcommand{\lift}[1]{\lceil #1 \rceil}
\newcommand{\compl}[1]{\overline{#1}}
\newcommand{\univ}{\mathrm{Univ}}
\newcommand{\Empty}{\mathrm{Empty}}
\newcommand{\conc}{\mathsf{conc}}
\newcommand{\pair}[2]{\langle #1,#2\rangle}
\newcommand{\trarr}[2]{\underset{#2}{\overset{#1}{\rightarrow}}}
\newcommand{\trrra}[2]{\underset{#2}{\overset{#1}{\leftarrow}}}
\newcommand{\longtrarr}[2]{\underset{#2}{\overset{#1}{\longrightarrow}}}
\newcommand{\final}{\infty}
\theoremstyle{plain} 
\begin{document}

\title{On the logical complexity of cyclic arithmetic}

\author[A.~Das]{Anupam Das}	
\address{University of Birmingham, United Kindgdom}	
\email{a.das@bham.ac.uk}  
\thanks{The author is supported by a Marie Sk\l{}odowska-Curie fellowship, \href{http://cordis.europa.eu/project/rcn/209401_en.html}{ERC project 753431}.}	





\begin{abstract}
  \noindent 
  We study the logical complexity of proofs in cyclic arithmetic ($\CA$), as introduced by Simpson in \cite{Sim17:cyclic-arith}, in terms of quantifier alternations of formulae occurring. Writing $\CSn n$ for (the logical consequences of) cyclic proofs containing only $\Sin{}{n}$ formulae, our main result is that $\ISn{n+1}$ and $\CSn n$ prove the same $\Pin{}{n+1}$ theorems, for $n \geq 0$. Furthermore, due to the `uniformity' of our method, we also show that $\CA$ and Peano Arithmetic ($\PA$) proofs of the same theorem differ only exponentially in size. 
  
  The inclusion $\ISn{n+1} \subseteq \CSn n$ is obtained by proof theoretic techniques, relying on normal forms and structural manipulations of $\PA$ proofs. It improves upon the natural result that $\ISn n \subseteq \CSn n$.
  The converse inclusion, $\CSn n \subseteq \ISn{n+1}$, is obtained by calibrating the approach of \cite{Sim17:cyclic-arith} with recent results on the reverse mathematics of B\"uchi's theorem \cite{KMPS16:buchi-reverse}, and carefully specialising to the case of cyclic proofs. These results improve upon the bounds on proof complexity and logical complexity implicit in \cite{Sim17:cyclic-arith} and \cite{BerTat17:lics}. 
  
  The uniformity of our method also allows us to recover a metamathematical account of fragments of $\CA$; in particular we show that, for $n\geq 0$, the {consistency} of $\CSn{n}$ is provable in $\ISn{n+2}$ but not $\ISn{n+1} $.
  As a result, we show that certain versions of McNaughton's theorem (the determinisation of $\omega$-word automata) are not provable in $\RCA$, partially resolving an open problem from \cite{KMPS16:buchi-reverse}.
\end{abstract}

\maketitle

\section{Introduction}

\emph{Cyclic} and \emph{non-wellfounded} proofs have been studied by a number of authors as an alternative to proofs by induction.
This includes cyclic systems for fragments of the modal $\mu$-calculus, e.g.\ \cite{NiwWal96:games-mucalc,SprDam03:mucalc,DaxHofLan06:multl,BaeDouHirSau16:multl,Dou17:multl,AfsLei17:cut-free-mucalc}, structural proof theory for logics with fixed-points, e.g.\ \cite{San02:circ-proofs-categorical-semantics,ForSan13:cuts-circ-proofs-sem-cut-elim,For14:phdthesis,BaeDouSau16:cut-elim}, (automated) proofs of program termination in separation logic, e.g.\ \cite{BroBorCal08:cyc-proofs-term,BroDisPet11:aut-cyc-ent,BroRow17:aut-cyc-term} and, in particular, cyclic systems for first-order logic with inductive definitions, e.g.\ \cite{Bro05:cyc-proofs-folid,Bro06:phdthesis,BroSim07:comp-seq-calc-ind-inf-desc,BroSim11:seq-calc-ind-inf-desc}.
Due to the somewhat implicit nature of invariants they define, cyclic systems can be advantageous for metalogical analysis, for instance offering better algorithms for proof search,~e.g.~\cite{BroGorPet12:gen-cyc-prover,DasPou17:cut-free-cyc-prf-sys-kl-alg}.

Cyclic proofs may be seen as more intuitively analogous to proofs by `infinite descent' than proofs by induction (see, e.g., \cite{Sim17:cyclic-arith}); this subtle difference is enough to make inductive invariants rather hard to generate from cyclic proofs. 
Indeed it was recently shown that simulating cyclic proofs using induction is not possible for some sub-arithmetic languages \cite{BerTat17:cfolid-neq-folid}, but becomes possible once arithmetic reasoning is available \cite{Sim17:cyclic-arith,BerTat17:lics}.


\emph{Cyclic arithmetic} was proposed as a general subject of study by Simpson in \cite{Sim17:cyclic-arith}.
Working in the language of arithmetic, it replaces induction by non-wellfounded proofs with a certain `fairness' condition on the infinite branches.
The advantage of this approach to infinite proof theory as opposed to, say, infinite well-founded proofs via an $\omega$-rule (see, e.g., \cite{Sch77:prf-th}), is that it admits a notion of \emph{finite proofs}: those that have only finitely many distinct subproofs, and so may be represented by a finite (possibly cyclic) graph.

Cyclic arithmetic itself is to cyclic proofs what Peano arithmetic is to traditional proofs: it provides a general framework in which many arguments can be interpreted and/or proved in a uniform manner, and this is one reason why it is an interesting subject of study.
This is already clear from, say, the results of \cite{BerTat17:lics}, where the study of cyclic proofs for first-order logic with inductive definitions relied on an underlying arithmetic framework.
We elaborate further on this in Sect.~\ref{sect:conc}.


\subsection*{Contribution}
In \cite{Sim17:cyclic-arith}, Simpson showed that Peano Arithmetic ($\PA$) is able to simulate cyclic reasoning by proving the {soundness} of the latter in the former. (The converse result is obtained much more easily.)
Nonetheless, several open questions remain from \cite{Sim17:cyclic-arith}, concerning constructivity, normalisation, logical complexity and proof complexity for cyclic and non-wellfounded proofs.

In this work we address the \emph{logical complexity} and \emph{proof complexity} of proofs in Cyclic Arithmetic ($\CA$), as compared to $\PA$.
Namely, we study how quantifier alternation of proofs in one system compares to that in the other, and furthermore how the size of proofs compare.
Writing $\CSn{n}$ for (the logical consequences of) cyclic proofs containing only $\Sin{}{n}$ formulae, we show, for $n \geq 0$:

\begin{enumerate}[start=1,label={(\arabic*)}]
	\item\label{item:cyc-sim-ind} $\ISn{n+1} \subseteq \CSn n$ over $\Pin{}{n+1}$ theorems (Sect.~\ref{sect:cyc-sim-ind}, Thm.~\ref{thm:cyclic-sim-ind}).
	\item\label{item:proof-complexity} $\CA$ and $\PA$ proofs of the same theorem differ only exponentially in size (Sect.~\ref{sect:ind-sim-cyc}, Thm.~\ref{thm:elementary-simulation}).
	\item\label{item:ind-sim-cyc} $\CSn n \subseteq \ISn{n+1}$ over all theorems (Sect.~\ref{sect:nonuniform-ind-sim-cyc}, Thm.~\ref{thm:nonuniform-ind-sim-cyc}).
\end{enumerate}

\ref{item:cyc-sim-ind} is obtained by proof theoretic techniques, relying on normal forms and structural manipulations of Peano Arithmetic proofs. It improves upon the natural result that $\ISn n \subseteq \CSn n$, although induces a non-elementary blowup in the size of proofs.
\ref{item:proof-complexity} is obtained via a certain `uniformisation' of the approach of \cite{Sim17:cyclic-arith}. In particular, by specialising the key intermediate results to the case of cyclic proofs, we are able to extract small $\PA$ proofs of some required properties of infinite word automata from analogous ones in `second-order' (SO) arithmetic.
Finally, \ref{item:ind-sim-cyc} is obtained by calibrating the argument of \ref{item:proof-complexity}
with recent results on the reverse mathematics of B\"uchi's theorem \cite{KMPS16:buchi-reverse},
allowing us to bound the logical complexity of proofs in the simulation.
Together, these results almost completely characterise the logical and proof complexity theoretic strength of cyclic proofs in arithmetic, answering the questions (ii) and (iii), Sect.~7 of \cite{Sim17:cyclic-arith}.


After demonstrating these results, we give a metamathematical analysis of provability in cyclic theories, in particular showing that the consistency of $\CSn n $ is provable in $\ISn{n+2}$ but not $\ISn{n+1}$, by appealing to a form of \emph{G\"odel incompleteness} for cyclic theories.
We also use these observations to show that certain formulations of McNaughton's theorem, that every nondeterministic B\"uchi automaton has an equivalent deterministic parity (or Rabin, Muller, etc.) automaton, are not provable in the SO theory $\RCA$. 
This partially resolves the question of the logical strength of McNaughton's theorem left open in \cite{KMPS16:buchi-reverse}.

\subsection*{Structure of the paper}
In Sects.~\ref{sect:prelims-pa-pt} and \ref{sect:prelims-ca-aut} we introduce some preliminaries on Peano Arithmetic, proof theory, cyclic proofs and automaton theory.
In Sect.~\ref{sect:cyc-sim-ind} we present \ref{item:cyc-sim-ind} and give an example of the translation in App.~\ref{sect:php-case-study}. The contents of Sects.~\ref{sect:prelims-pa-pt}, \ref{sect:prelims-ca-aut} and \ref{sect:cyc-sim-ind} and App.~\ref{sect:php-case-study} are more-or-less self contained and should be accessible to the general proof theorist.

We briefly introduce some SO theories of arithmetic in Sect.~\ref{sect:so-theories} that are conservative over the fragments of $\PA$ we need in order to conduct some of the intermediate arguments on infinite word automata.
In Sect.~\ref{sect:ind-sim-cyc} we present \ref{item:proof-complexity}, and in Sect.~\ref{sect:nonuniform-ind-sim-cyc} we adapt the argument to obtain \ref{item:ind-sim-cyc}.
In Sect.~\ref{sect:further-metalogical} we give our metamathematical analysis of cyclic theories, and in Sect.~\ref{sect:red-to-det} we explain their consequences for the logical strength of certain forms of McNaughton's theorem.
We conclude with some further remarks and perspectives in Sect.~\ref{sect:conc}, including a comparison with the results of \cite{Sim17:cyclic-arith} and \cite{BerTat17:lics}.

For Sects.~\ref{sect:so-theories}, \ref{sect:ind-sim-cyc}, \ref{sect:nonuniform-ind-sim-cyc} and \ref{sect:red-to-det} it would be helpful for the reader to have some background in subsystems of second-order arithmetic (see e.g.~\cite{Sim09:reverse-math,Hir14:reverse-math}) and $\omega$-automaton theory (see e.g.~\cite{Tho97:aut-chapt}).
For Sect.~\ref{sect:further-metalogical} it would be helpful for the reader to have some background in the metamathematics of first-order arithmetic (see e.g.~\cite{HajPud:93}). 
Nonetheless we aim to give sufficient details for the general proof theorist to appreciate all the content herein.

%
%
%



\section{Preliminaries on first-order arithmetic proof theory}
\label{sect:prelims-pa-pt}
We present only brief preliminaries, but the reader is encouraged to consult, e.g., \cite{Bus98:handbook-of-pt} for a more thorough introduction to first-order arithmetic.

%
We work in first-order (FO) logic with equality, $=$, with variables written $x,y,z $ etc., terms written $s,t,u$ etc., and formulae written $\phi, \psi$ etc., construed over the logical basis $\{ \cnot, \corr ,\cand , \exists , \forall \}$.
We will usually assume formulae are in \textbf{De Morgan normal form}, with negation restricted to atomic formulae.
Nonetheless, we may write $\cnot \phi$ for the De Morgan `dual' of $\phi$, defined as follows:
\[
\neg \neg \phi \dfn \phi
\quad
\begin{array}{rcl}
\neg (\phi \cand \psi) &\dfn &\neg \phi \corr \neg \psi 
\\
\neg (\phi \corr \psi) &\dfn &\neg \phi \cand \neg \psi
\end{array} 
\quad
\begin{array}{rcl}
\neg \forall x . \phi &\dfn &\exists x . \neg \phi
\\
\neg \exists x . \phi &\dfn &\forall x . \neg \phi
\end{array}
\]
We also write $\phi \cimp \psi $ for $\cnot \phi \corr \psi$ and $\phi \ciff \psi $ for $(\phi \cimp \psi ) \cand (\psi \cimp \phi)$.

FO logic has equality `built-in', i.e.\ we always assume the following axioms are present:
\begin{enumerate}[start=1,label={(eq\arabic*)}]
	\item\label{item:eq-ax-refl} $\forall x . x = x $. 
	\item\label{item:eq-ax-fn} $\forall \vec x , \vec y . ( (x_1 = y_1 \cand \cdots \cand x_k = y_k) \cimp f(\vec x ) = f(\vec y)$, for each $k\in \Nat$ and each function symbol $f$ of arity $k$.
	\item\label{item:eq-ax-pred} $\forall x,y . (((x_1 = y_1 \cand \cdots \cand x_k = y_k)\cand P(\vec x)) \cimp P(\vec y))$, for each $k \in \Nat$ and each predicate symbol $P$ of arity $k$.
\end{enumerate}

\noindent
Following \cite{Sim17:cyclic-arith}, the \textbf{language of arithmetic} (with inequality) is formulated as $\{ 0, \succ, +, \times , < \}$, with their usual interpretations over $\Nat$.
A \textbf{theory} is a set $T$ of closed formulae over this language.
We write $T\proves \phi$ if $\phi$ is a logical consequence of $T$.
We write $T_1 \subseteq T_2 $ if $T_1 \proves \phi$ implies $T_2 \proves \phi$, and $T_1 = T_2$ if $T_1 \subseteq T_2$ and $T_2 \subseteq T_1$.

The theory of \textbf{Robinson arithmetic} (with inequality), written $\Q$, is axiomatised by:

\begin{enumerate}[start=1,label={(Q\arabic*)}]
	\item $\forall x . \succ x \neq 0 $. 
	\item $\forall x, y. (\succ x = \succ y \cimp x = y)$.
	\item\label{item:rob-succ-non-zer} $\forall x . (x \neq 0 \cimp \exists y . x = \succ y)$.
	\item $\forall x .\  x + 0 = x$.
	\item $\forall x , y .\  x + \succ y = \succ (x + y)$.
	\item $\forall x .\  x \cdot 0 = 0 $.
	\item $\forall x , y .\  x \cdot \succ y = x\cdot y + x$.
	\medskip
	\item\label{item:rob-ineq-dfn} $\forall x , y . (x<y \ciff \exists z . (x + \succ z = y) )$
\end{enumerate}

%
\noindent
Notice that, above and elsewhere, we may write $\cdot$ instead of $\times $ in terms, or even omit the symbol altogether, and we assume it binds more strongly than $+$.
%
%
We also write $\forall x< t. \phi$ and $\exists x < t. \phi$ as abbreviations for $\forall x . (x< t \cimp \phi)$ and $\exists x. ( x < t \cand \phi )$ resp.
Formulae with only such quantifiers are called \textbf{bounded}.

As usual, we may assume that $\Q$ is axiomatised by the universal closures of bounded formulae. In particular the existential quantifiers in axioms \ref{item:rob-succ-non-zer} and \ref{item:rob-ineq-dfn} above may be bounded by $x$ and $y$ resp., provably under quantifier-free induction.
We will implicitly assume this bounded axiomatisation for the sequent calculus formulation of arithmetic later.

\begin{rem}
	\label{rmk:diff-ax-simpson}
	Our basic axioms and, later, our inference rules differ slightly from those in \cite{Sim17:cyclic-arith}, however it is routine to see that the theories $\PA$ and $\CA$ defined in this work coincide with those of \cite{Sim17:cyclic-arith}.
	In particular the axiomatisations are equivalent once even open induction is present (this is weaker than any theory we will consider).
	We chose a slightly different presentation so that we could readily apply certain metalogical results, such as Thm.~\ref{thm:free-cut-elim}, with no intermediate proof manipulation.
\end{rem}

\begin{defi}
	[Arithmetical hierarchy]
	For $n\geq 0$, we define:
	\begin{itemize}
		\item $\Din{}{0} = \Pin{}{0} = \Sin{}{0}$ is the class of bounded formulae.
		\item $\Sin{}{n+1}$ is the class of formulae of the form $\exists \vec x . \phi$, where $\phi \in \Pin{}{n}$.
		\item $\Pin{}{n+1}$ is the class of formulae of the form $\forall \vec x . \phi$, where $\phi \in \Sin{}{n}$.
	\end{itemize}
\end{defi}
%
%
\noindent
Notice in particular that, by definition of De Morgan normal form, if $\phi  \in \Sin{}{n}$ then $\cnot \phi \in \Pin{}{n}$ and vice-versa.
{In practice we often consider these classes of formulae up to logical equivalence.}
We say that a formula is in $\Din{}{n}$ (in a theory $T$) if it is equivalent (resp.\ provably equivalent in $T$) to both a $\Sin{}{n}$ and $\Pin{}{n}$ formula.

\begin{defi}
	[Arithmetic]
	\textbf{Peano Arithmetic} ($\PA$) is axiomatised by $\Q$ and the axiom schema of \textbf{induction}:
	\begin{equation}
	\label{eqn:ind-ax}
	( \phi(0) \cand \forall x . (\phi(x) \cimp \phi(\succ x)) ) \cimp \forall x . \phi(x)
	\end{equation}
	For a class of formulae $\Phi$, we write $\CIND{\Phi}$ for the set of induction axiom instances when $\phi \in \Phi$ in \eqref{eqn:ind-ax}.
	We write $\IC{\Phi}$ for the theory $\Q + \CIND{\Phi}$.

\end{defi}

The following is a classical result:
\begin{prop}
	[See e.g.~\cite{Bus98:handbook-of-pt,Kay91:models-of-pa}]
	\label{prop:ipn-equals-isn}
	For $n\geq 0$, we have $\ISn{n} = \IPn{n}$.
\end{prop}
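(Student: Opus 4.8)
The plan is to prove both inclusions $\ISn{n} \subseteq \IPn{n}$ and $\IPn{n} \subseteq \ISn{n}$ at once, by a single symmetric argument: a ``downward induction'' that converts an induction instance for a formula $\phi$ into an induction instance for its De Morgan dual, read along a \emph{decreasing} sequence of arguments. For $n = 0$ there is nothing to do, since $\Sin{}{0} = \Pin{}{0} = \Din{}{0}$ by definition; so I fix $n \geq 1$ and describe the inclusion $\ISn{n} \subseteq \IPn{n}$, the reverse one being obtained \emph{mutatis mutandis} by interchanging $\Sin{}{n}$ and $\Pin{}{n}$ throughout.

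Let $\phi(x) \in \Pin{}{n}$, and work in $\ISn{n}$ under the two antecedents $\phi(0)$ and $\forall x(\phi(x) \cimp \phi(\succ x))$ of the induction axiom for $\phi$; the goal is to derive $\forall a\, \phi(a)$. Fixing the parameter $a$, the crux is to introduce the auxiliary formula $\sigma(u) \dfn \cnot\phi(a \mathbin{\dot{-}} u)$, where $\mathbin{\dot{-}}$ denotes truncated subtraction. Since the language has no symbol for $\mathbin{\dot{-}}$, I read this through the quantifier-free graph $\gamma(u,w) \dfn (w + u = a) \corr (a < u \cand w = 0)$, and I present $\sigma(u)$ in the \emph{existential} form $\exists w(\gamma(u,w) \cand \cnot\phi(w))$, which — as $\gamma \in \Din{}{0}$ and $\cnot\phi \in \Sin{}{n}$ — keeps $\sigma$ in $\Sin{}{n}$ (for the converse inclusion, where $\cnot\phi \in \Pin{}{n}$, one instead uses the universal presentation $\forall w(\gamma(u,w) \cimp \cnot\phi(w))$ to stay in $\Pin{}{n}$). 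The elementary properties of $\mathbin{\dot{-}}$ needed below — totality of $\gamma$, $a \mathbin{\dot{-}} 0 = a$, $a \mathbin{\dot{-}} a = 0$, and $u < a \cimp a \mathbin{\dot{-}} u = \succ(a \mathbin{\dot{-}} \succ u)$ — are all provable in $\ISn{1}$, hence in $\ISn{n}$.

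Now, towards a contradiction, suppose $\cnot\phi(a)$, i.e.\ $\sigma(0)$. Using the contrapositive $\cnot\phi(\succ x) \cimp \cnot\phi(x)$ of the step hypothesis, together with the identity $a \mathbin{\dot{-}} u = \succ(a \mathbin{\dot{-}} \succ u)$ for $u < a$ — and noting that $a \mathbin{\dot{-}} u = 0$ (so $\sigma$ is constant) once $u \geq a$ — one verifies $\forall u(\sigma(u) \cimp \sigma(\succ u))$. The $\Sin{}{n}$-induction axiom for $\sigma$ then yields $\forall u\, \sigma(u)$, and instantiating at $u = a$ gives $\cnot\phi(a \mathbin{\dot{-}} a)$, i.e.\ $\cnot\phi(0)$, contradicting $\phi(0)$. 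Hence $\phi(a)$; generalising on $a$ gives $\forall a\, \phi(a)$, so $\ISn{n}$ proves the induction axiom for $\phi$, as required.

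The argument is otherwise entirely routine arithmetic available already in $\ISn{1}$, so the one step that deserves attention is the complexity bookkeeping in the second paragraph: one must choose the existential (resp.\ universal) presentation of the graph of $\mathbin{\dot{-}}$ so that the reflected formula $\sigma$ stays within $\Sin{}{n}$ (resp.\ $\Pin{}{n}$), appealing to the fact that for $n \geq 1$ the classes $\Sin{}{n}$ and $\Pin{}{n}$ are closed, up to provable equivalence in $\ISn{1}$, under conjunction (resp.\ disjunction) with $\Din{}{0}$ formulae and under like quantifiers. With that in place, both inclusions follow from the single template above.
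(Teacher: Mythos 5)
Your proof is correct and is precisely the standard argument found in the references the paper cites for this proposition (the paper itself gives no proof): the truncated-subtraction ``reversal'' $\sigma(u) \dfn \cnot\phi(a \mathbin{\dot{-}} u)$, with the graph of $\mathbin{\dot{-}}$ presented existentially or universally as needed so that $\sigma$ lands in the right class, followed by induction on $\sigma$ and instantiation at $u=a$. One small labelling slip: under the paper's convention that $T_1 \subseteq T_2$ means every theorem of $T_1$ is a theorem of $T_2$, deriving the $\Pin{}{n}$-induction axiom \emph{inside} $\ISn{n}$ establishes $\IPn{n} \subseteq \ISn{n}$ rather than ``$\ISn{n} \subseteq \IPn{n}$'' as you announce it --- but since your argument is symmetric and you explicitly invoke the swap for the other direction, both inclusions are in fact covered.
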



\subsection{A sequent calculus presentation of $\PA$}

\begin{figure}
	\[
	\begin{array}{ccc}
	\vlinf{
		\id
	}{}{\Gamma, \phi \seqar \Delta, \phi}{} 
	\ & \ 
	\vlinf{\axeqrefn}{}{\Gamma \seqar \Delta , t=t}{}
	\ & \ 
	\vlinf{\axeqfn}{}{\Gamma, s_1 = t_1 , \dots , s_k = t_k \seqar \Delta, f(\vec s) = f(\vec t)}{}
	\\
	\vlinf{\lefrul\cnot}{}{\Gamma , \phi, \cnot \phi \seqar \Delta}{}
	\ & \ 
	\vlinf{\rigrul \cnot}{}{\Gamma \seqar \Delta, \phi, \cnot \phi}{} 
	\ & \ 
	\vlinf{\axeqpred}{}{\Gamma, s_1 = t_1 , \dots , s_k = t_k , P(\vec s)\seqar \Delta, P(\vec t)}{}
	\end{array}
	\]
	\medskip
	\[
	\begin{array}{ccc}
	\vlinf{
		\theta\text{-}
		\sub
	}{}{\theta(\Gamma) \seqar \theta(\Delta)}{\Gamma \seqar \Delta}
	\ & \ 
	\vliinf{
		\cut
	}{}{\Gamma \seqar \Delta}{\Gamma \seqar \Delta , \phi}{\Gamma, \phi \seqar \Delta}
	\ & \ 
	\vlinf{
		\wk
	}{}{\Gamma, \Gamma'\seqar \Delta , \Delta'}{\Gamma \seqar \Delta}
	\end{array}
	\]
	\medskip
	\[
	\hspace{-1em}
	\begin{array}{cccc}
	\vliinf{\lefrul \corr}{}{\Gamma, \phi \corr \psi \seqar \Delta }{ \Gamma , \phi \seqar \Delta }{\Gamma, \psi \seqar \Delta}
	&
	\vlinf{\lefrul \cand}{
	}{\Gamma , \phi_0 \cand \phi_1 \seqar \Delta}{\Gamma, \phi_i \seqar \Delta }
	&
	\vlinf{\lefrul \exists}{
	}{\Gamma , \exists x . \phi \seqar \Delta}{\Gamma , \phi [a/x] \seqar \Delta}
	& \vlinf{\lefrul \forall}{}{\Gamma , \forall x . \phi \seqar \Delta}{\Gamma, \phi [t/x] \seqar \Delta}
	\\
	\noalign{\medskip}
	\vliinf{\rigrul \cand}{}{\Gamma \seqar \Delta, \phi \cand \psi}{\Gamma \seqar \Delta, \phi }{\Gamma \seqar \Delta, \psi }
	&
	\vlinf{\rigrul \corr}{
	}{\Gamma\seqar \Delta , \phi_0 \corr \phi_1}{ \Gamma \seqar \Delta , \phi_i}
	&
	\vlinf{\rigrul \forall}{
	}{\Gamma \seqar \Delta, \forall x . \phi}{\Gamma  \seqar \Delta, \phi [a/x] }
	& \vlinf{\rigrul \exists}{}{\Gamma \seqar \Delta, \exists x. \phi}{\Gamma\seqar \Delta, \phi [t/x] }
	\end{array}
	\]
	\caption{The sequent calculus for FO logic with equality, where $a$ occurs only as indicated and $i \in \{0,1\}$.}
	\label{fig:seq-calc}
\end{figure}
We will work with a standard sequent calculus presentation of FO logic, given in Fig.~\ref{fig:seq-calc}, where $i \in \{0,1\}$ and $a$, known as the `eigenvariable', is fresh, i.e.\ does not occur free in the lower sequent.
Two important considerations are that we work with cedents as \emph{sets}, i.e.\ there is no explicit need for contraction rules, and that we have an explicit \textbf{substitution} rule.
{In the $\theta$-$\mathit{sub}$ rule the `substitution' $\theta$ is a mapping from variables to terms, which is extended in the natural way to cedents.}
Substitution is important for the definition of a cyclic arithmetic proof in the next section, but does not change provability in usual proofs. 
%

The sequent calculus for $\Q$ is obtained from the FO calculus in the language of arithmetic by adding appropriate initial sequents for each instantiation of an axiom of $\Q$ by terms.
For theories extending $\Q$ by (at least quantifier-free) induction, we assume that these intial sequents contain only $\Din{}{0}$ formulae by appropriately bounding the existential quantifiers. The schema $\CIND{\Phi}$, for $\Phi$ closed under subformulas and substitution, is implemented in the calculus by adding the induction rule,
\[
\vliinf{\ind}{}{\Gamma \seqar \phi(t), \Delta}{\Gamma \seqar \phi (0), \Delta}{\Gamma , \phi(a) \seqar \phi(\succ a), \Delta }
\]
for formulae $\phi \in \Phi$.
{Here we require $a$ to not occur free in the lower sequent.} {Notice that this satisfies the subformula property, in the `wide' sense of FO logic, i.e.\ up to substitution.}
For fragments of $\PA$ with induction axioms of bounded logical complexity, we also have the \emph{bounded quantifier rules}:
\[
\begin{array}{cc}
\vlinf{}{}{\Gamma,\exists x < s . \phi(x) \seqar \Delta}{\Gamma, a < s , \phi(a) \seqar \Delta}
\quad &\quad 
\vlinf{}{}{\Gamma, t< s \seqar \Delta, \exists x < s . \phi(x)}{\Gamma \seqar \Delta , \phi(t)}
\\
\noalign{\bigskip}
\vlinf{}{}{ \Gamma \seqar \Delta , \forall x < s . \phi(x) }{ \Gamma,a< s \seqar \Delta , \phi(a) }
\quad &
\quad
\vlinf{}{}{\Gamma , t< s , \forall x < s . \phi(x) \seqar \Delta}{\Gamma , \phi(t) \seqar \Delta}
\end{array}
\]
In all cases the eigenvariable $a$ occurs only as indicated,

The following normalisation result is well-known in the proof theory of arithmetic, and will be one of the main structural proof theoretic tools in this work:

\begin{thm}
	[Free-cut elimination, e.g.\ \cite{Bus98:handbook-of-pt}]
	\label{thm:free-cut-elim}
	Let $\mathcal S$ be a sequent system extending FO by the induction rule and some other nonlogical rules/axioms closed under substitution.
	Then any $\mathcal S$-proof can be effectively transformed into one of the same conclusion containing only (substitution instances of) subformulae of the conclusion, an induction formula or a formula occurring in another nonlogical step.
\end{thm}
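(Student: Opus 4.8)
The plan is to run the standard Gentzen--Tait cut-reduction argument, treating $\ind$ and the other nonlogical rules/axioms as \emph{anchors} that block reduction. Call a cut \emph{anchored} if at least one of its two cut-formula occurrences descends (along the threads into the two premises) from the principal formula of a nonlogical inference --- an instance of $\ind$, an equality or $\Q$ axiom, or another nonlogical rule --- or from the conclusion; otherwise call it \emph{free}. The target is a proof with no free cuts. Once that is achieved, a routine thread-tracing argument using the (wide, i.e.\ up-to-substitution) subformula property of the logical and bounded-quantifier rules shows that every formula occurring in the proof is a substitution instance of a subformula of the conclusion, of an induction formula, or of a formula occurring in a nonlogical step, which is exactly the claim.

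First I would fix the usual measures: the \emph{grade} of a cut is the logical complexity of its cut formula, and the \emph{rank} is the total height of the two subderivations sitting above it. Arguing by a double induction --- outer on the maximal grade among free cuts, inner on rank --- I would select a topmost free cut of maximal grade and reduce it. The cases are the textbook ones: if the cut formula is principal in both premises, apply the key reduction for its outermost connective, replacing the cut by cuts of strictly smaller grade on the immediate subformulae (for $\forall$ and $\exists$ this produces a substitution instance of a subformula, which is harmless since we only claim the subformula property up to substitution); if the cut formula is not principal in some premise, permute the cut upward past the last inference of that premise, strictly decreasing the rank; the identity axioms and weakening are handled by the usual degenerate cases. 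Since anchored cuts are never selected, induction formulae and the formulae of the other nonlogical steps are never decomposed and so legitimately survive into the final bound --- this is precisely what keeps free-cut-free $\PA$ (and its fragments) strong.

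The points needing care, and the main source of tedium, are the substitution rule and the eigenvariable conditions. When a cut is to be permuted above a $\theta$-$\sub$ step in one premise, I would instead apply $\theta$ to the \emph{entire} other subderivation --- legitimate exactly because the nonlogical rules (and $\ind$) are assumed closed under substitution, and logical rules trivially are --- and then cut; this is the mechanism by which tracked formulae acquire substitution instances rather than literal subformulae. Permuting past $\rigrul\forall$, $\lefrul\exists$, a bounded-quantifier rule, or $\ind$ can clash with the freshness requirement on the eigenvariable $a$; this is resolved by first renaming $a$ throughout the offending subderivation to a genuinely fresh variable, which leaves the endsequent unchanged. The real work is then bookkeeping: checking that the anchoring relation is preserved under every reduction step, so that the induction measure genuinely decreases on free cuts and no free cut of higher grade is created. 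Everything beyond this is the classical argument, for which I would refer to \cite{Bus98:handbook-of-pt}.
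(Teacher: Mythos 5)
The paper does not prove this theorem at all: it is stated as a classical result with a pointer to \cite{Bus98:handbook-of-pt}, and your sketch is precisely the standard free-cut elimination argument from that source (anchored vs.\ free cuts, double induction on grade and rank, substitution steps absorbed by applying $\theta$ to the whole opposite subderivation, eigenvariable renaming). Your outline is correct and identifies the right points of care, so there is nothing to compare against in the paper beyond the citation.
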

Naturally, this applies to the various fragments of $\PA$ that we consider. In particular, notice that a free-cut free proof in $\ISn{n}$ or $\IPn{n}$ of sequents containing only $\Sin{}{n}$ or $\Pin{}n$ formulae, resp., contains just $\Sin{}{n}$ or $\Pin{}{n}$ formulae, resp.
It is well known that Thm.~\ref{thm:free-cut-elim} can itself be proved within $\ISn 1$ and even weaker theories (see, e.g., \cite{HajPud:93}), under an appropriate coding of mathematical objects. We use this observation later in Sect.~\ref{sect:further-metalogical}.

\medskip

We say that a sequent is $\Sin {}  n$ (or $\Pin{}{n}$) if it contains only $\Sin{}{n}$ (resp.\ $\Pin {} n$) formulae.
A slight issue that will be relevant later in Sect.~\ref{sect:cyc-sim-ind} is that we have not defined $\Sin{}{n}$ and $\Pin{}{n}$ as being syntactically closed under positive Boolean combinations, even if semantically we know that they are.
In fact, this does not cause a problem for the result above, since we can always prenex `on the fly' in a proof by cutting against appropriate derivations.
For instance, in a proof, a step of the form,
\[
\vliinf{\cand}{}{ \Gamma \seqar \Delta , \forall x. \phi \cand \forall y . \psi }{\Gamma \seqar \Delta, \forall x . \phi}{\Gamma \seqar \Delta, \forall y . \psi}
\]
may be locally replaced by a derivation of the form:
\[\vlderivation{
	\vliin{\cut}{}{ \Gamma \seqar \Delta, \forall x , y. (\phi \cand \psi) }{
		\vlhy{\Gamma \seqar \Delta, \forall x . \phi}
	}{
		\vliin{\cut}{}{ \Gamma , \forall y . \psi \seqar \Delta, \forall x,y . (\phi \cand \psi) }{
			\vlhy{\Gamma \seqar \Delta , \forall y . \psi}
		}{
			\vliq{}{}{\forall x . \phi , \forall y . \psi \seqar \forall x , y . (\phi \cand \psi) }{\vlhy{}}
		}
	}
}
\]
In a similar way we will often assume that a `block' of existential or universal quantifiers is coded by a single quantifier, using pairings and G\"odel $\beta$ functions, whose basic properties are all formalisable already in $\IDn 0$ (see, e.g., \cite{Bus98:handbook-of-pt}).

\section{Preliminaries on cyclic arithmetic and automata}
\label{sect:prelims-ca-aut}

Before presenting `cyclic arithmetic', we will present the general notion of non-wellfounded proofs in arithmetic, from \cite{Sim17:cyclic-arith}.

By convention, we say \emph{binary tree} to mean a nonempty set $T\subseteq \{0,1\}^*$ that is prefix-closed, i.e.\ if $\sigma i \in T$ then $\sigma \in T$.
We construe such $T$ as a bona fide tree with nodes $T$ and directed edges from $\sigma$ to $\sigma i $, if $\sigma i \in T$, for $i \in \{0,1\}$.
The empty word, $\varepsilon$, is the \emph{root} of $T$.
\begin{defi}
	A \textbf{preproof} is a possibly infinite binary tree labelled by sequents and rules in a locally correct manner in the calculus for $\Q$. {Following \cite{Sim17:cyclic-arith}, we treat inference steps as nodes of the tree and sequents as edges.}
	A preproof is \textbf{regular} if it has only finitely many distinct (labelled) subtrees or, equivalently, if it is the unfolding of a finite labelled directed graph, possibly with cycles.
\end{defi}

The following notions are variants of those from Dfns.~1 and 2 in \cite{Sim17:cyclic-arith}:
\begin{defi}
	[Precursors, traces, $\infty$-proofs]
	Let $(\Gamma_i \seqar \Delta_i )_{i \geq 0} $ be an infinite branch through a preproof.
	For terms $t, t'$ we say that $t'$ is a \dfntrm{precursor} of $t$ at $i$ if one of the following holds:
	\begin{enumerate}[label={\roman*)}]
		\item\label{item:prec-sub} $\Gamma_i \seqar \Delta_i$ concludes a $\theta$-$\sub$-step and $t $ is $ \theta (t')$.
		\item\label{item:prec-eq} $\Gamma_i \seqar \Delta_i$ concludes any other step and $t' = t$ occurs in $ \Gamma_i$.
		\item\label{item:prec-id} $\Gamma_i \seqar \Delta_i$ concludes any other step and $t'$ is $t$.
	\end{enumerate}
	\noindent
	A \dfntrm{trace} along $(\Gamma_i \seqar \Delta_i)_{i \geq 0}$ is a sequence $(t_i)_{i \geq n}$, for some $n\geq 0$, such that whenever $i\geq n$ the term $t_i$ occurs in $\Gamma_i \seqar \Delta_i $ and,

	\begin{enumerate}[label={(\alph*)}]
		\item 
		$t_{i+1}$ is a precursor of $t_i$ at $i$; or
		\item\label{item:progress-point} 
		the atomic formula $t_{i+1} < t $ occurs in $\Gamma_{i+1}$, where $t$ is a precursor of $t_i$ at $i$.
	\end{enumerate}
	When 
	\ref{item:progress-point} 
	holds, we say that the trace \dfntrm{progresses} at $i+1$.
	
	An \dfntrm{$\infty$-proof} is a preproof for which any infinite branch has a trace that progresses infinitely often.
	If it is regular then we simply call it a \dfntrm{cyclic proof}.
	$\CA$ is the theory induced by cyclic proofs in the calculus for $\Q$.
\end{defi}


\begin{rem}
	\label{rmk:traces-occurrences}
	When defining explicit traces, for the precursor case \ref{item:prec-id}, we will typically not worry about whether the term $t_i $ in a trace occurs in the sequent or not. All that matters is that, if the current step is $\lefrul \exists$, $\rigrul \forall$ or $\sub$, $t_{i}$ does not contain the associated eigenvariables.\footnote{We say that $a$ is an eigenvariable of a $\theta$-$\sub$ step if it is in the support of the substitution $\theta$.} As long as we satisfy this constraint we may simply consider an equivalent proof that prepends $t_i=t_i$ to the antecedent to make sure that $t_i$ `occurs'. We use this assumption implicitly in the remainder of this work. 
\end{rem}

The reader may consult \cite{Sim17:cyclic-arith} for several examples of $\infty$-proofs.
Notably, $\infty$-proofs are sound and {complete} for the standard model $\Nat$ (Thm.~4, \cite{Sim17:cyclic-arith}).
%
%
%
%
(Similar results for other logics, with respect to standard models, were known before \cite{Bro06:phdthesis,BroSim11:seq-calc-ind-inf-desc}.)
We recall the proof of soundness since we will have to formalise a variant of it in Sect.~\ref{sect:ind-sim-cyc}, and also since the quantifier case in the argument of \cite{Sim17:cyclic-arith} is omitted, whereas this subtlety will need some consideration when it is formalised.

\begin{prop}
	[Soundness of $\infty$-proofs]
	\label{prop:sound-cyclic}
	If $\pi$ is an $\infty$-proof of $\phi$, then $\Nat \models \phi$.
\end{prop}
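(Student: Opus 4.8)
The plan is to argue by contradiction, following the standard ``infinite descent'' recipe for soundness of non-wellfounded calculi. Suppose $\pi$ is an $\infty$-proof with end-sequent $\seqar\phi$ but $\Nat\notmodels\phi$; fix an assignment $\rho_0$ of naturals to the free variables under which $\phi$ is false, so that $\seqar\phi$ is \emph{falsified} by $\rho_0$, meaning $\rho_0$ satisfies every formula in the antecedent and no formula in the succedent. I would then grow, node by node, an infinite branch $(\Gamma_i\seqar\Delta_i)_{i\geq 0}$ of $\pi$ together with assignments $\rho_i$ such that $\Gamma_i\seqar\Delta_i$ is falsified by $\rho_i$ for every $i$.

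The engine is a \emph{local soundness} lemma: for each inference rule of the calculus for $\Q$, if the conclusion is falsified by an assignment $\rho$, then some premise is falsified by some assignment $\rho'$. This is the expected case analysis. The logical and structural rules are routine --- for $\lefrul\cand$ both conjuncts hold so either premise works; for $\cut$ one branches on whether the cut formula holds under $\rho$; for $\wk$ one keeps $\rho$; $\lefrul\forall$ and $\rigrul\exists$ keep $\rho$ and instantiate --- and the initial sequents (instances of $\id$, $\axeqrefn$, $\axeqfn$, $\axeqpred$, the $\cnot$-rules, and the axioms of $\Q$) are valid, hence never falsified, so the branch can never end at a leaf. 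The delicate cases --- precisely those the argument of \cite{Sim17:cyclic-arith} passes over --- are the quantifier rules introducing an eigenvariable and the substitution rule: for $\lefrul\exists$ (dually $\rigrul\forall$) one picks a witness $n$ and sets $\rho'=\rho[a\mapsto n]$, using freshness of the eigenvariable $a$ to keep the side formulae falsified; for a $\theta$-$\sub$ step with conclusion $\theta(\Gamma)\seqar\theta(\Delta)$ one sets $\rho'\dfn\rho\circ\theta$, i.e.\ $\rho'(x)$ is the value of $\theta(x)$ under $\rho$, invoking the substitution lemma $\Nat,\rho\models\theta(\psi)\iff\Nat,\rho'\models\psi$. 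Iterating this lemma (using dependent choice) yields the desired infinite branch $(\Gamma_i\seqar\Delta_i,\rho_i)_{i\geq 0}$, since, as just noted, it never gets stuck at a leaf.

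Since $\pi$ is an $\infty$-proof, the branch just built carries a trace $(t_i)_{i\geq n}$ progressing at infinitely many $i$. Writing $s^\sigma$ for the value of a term $s$ under an assignment $\sigma$, I would track the naturals $v_i\dfn t_i^{\rho_i}$ (for $i\geq n$) and show that (a) $v_{i+1}\leq v_i$ for all $i\geq n$, and (b) $v_{i+1}<v_i$ whenever the trace progresses at $i+1$. For (a), inspect the precursor clauses: if $t_{i+1}$ is $t_i$, or if the equation $t_{i+1}=t_i$ occurs in $\Gamma_i$, then $\rho_{i+1}$ and $\rho_i$ agree on the free variables of the terms involved --- here one uses the eigenvariable side conditions together with Rem.~\ref{rmk:traces-occurrences} to ensure the terms avoid any freshly introduced eigenvariable, and in the second subcase one uses that $\rho_i$ satisfies $\Gamma_i$ --- so $v_{i+1}=v_i$; and if the step is $\theta$-$\sub$ with $t_i=\theta(t_{i+1})$, then $v_i=(\theta(t_{i+1}))^{\rho_i}=t_{i+1}^{\rho_{i+1}}=v_{i+1}$ by the substitution lemma again. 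For (b), at a progress point $i+1$ the atom $t_{i+1}<t$ lies in $\Gamma_{i+1}$ with $t$ a precursor of $t_i$ at $i$; since $\rho_{i+1}$ satisfies $\Gamma_{i+1}$ this gives $v_{i+1}<t^{\rho_{i+1}}$, and by the same bookkeeping as in (a) one has $t^{\rho_{i+1}}=v_i$. Hence $(v_i)_{i\geq n}$ is a non-increasing sequence of naturals with infinitely many strict decreases, contradicting the well-foundedness of $(\Nat,<)$; this contradiction proves $\Nat\models\phi$.

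The only genuine obstacle here is disciplined bookkeeping rather than any new idea: pinning down how traces interact with eigenvariables and with the substitution rule, so that the values $v_i$ are preserved exactly across $\sub$-steps and eigenvariable-introducing steps and strictly decrease at progress points. This is exactly the subtlety the text flags as omitted in \cite{Sim17:cyclic-arith}, and it is the same careful reasoning that will have to be revisited --- and ultimately formalised within a fragment of $\PA$ --- in Sect.~\ref{sect:ind-sim-cyc}.
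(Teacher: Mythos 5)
Your proposal is correct and follows essentially the same route as the paper's proof: construct an invalid branch with falsifying assignments by local soundness (handling $\sub$ via $\rho_{i+1}=\rho_i\circ\theta$ and the eigenvariable rules via witness choice), then follow the infinitely progressing trace to obtain a non-increasing sequence of naturals with infinitely many strict decreases, contradicting well-foundedness. Your bookkeeping of the precursor cases and the values $v_i$ is the same argument the paper sketches, just spelled out in slightly more detail.
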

\begin{proof}
	Suppose otherwise, i.e.\ $\Nat \models \cnot \phi$.
	We will inductively construct an infinite branch $(\Gamma_i \seqar \Delta_i)_{i \geq 0}$ of $\pi$ and associated assignments $\rho_i$ of natural numbers to each sequent's free variables, such that $\Nat , \rho_i \notmodels \Gamma_i \seqar \Delta_i$.
	Assuming $\phi$ is closed (by taking its universal closure), we set $\Gamma_0 \seqar \Delta_0 $ to be $\seqar \phi$ and $\rho_0 = \emptyset$.
	
	Each step except for substitution, $\rigrul{\forall}$ and $\lefrul{\exists}$ constitutes a true implication, so if $\Nat, \rho_i \notmodels \Gamma_i \seqar \Delta_i$ then $\rho_i$ also must not satisfy one of its premisses.
	We may thus choose one such premiss as $\Gamma_{i+1} \seqar \Delta_{i+1}$ and set $\rho_{i+1} = \rho_i$.

	If $\Gamma_i \seqar \Delta_i$ concludes a $\theta$-$\sub$ step, we may set $\rho_{i+1} = \rho_i \circ \theta$.
	If $\Gamma_i \seqar \Delta_i$ concludes a $\rigrul{\forall}$ step, let $\forall x . \phi $ be the principal formula and assume $x $ does not occur free in the conclusion.
	Since $\Nat, \rho_i \notmodels \Gamma_i \seqar \Delta_i$, we must have that $\Nat , \rho_i \models \exists x . \cnot \phi$. We choose a value $k \in \Nat$ witnessing this existential and set $\rho_{i+1} = \rho_i \cup \{ x \mapsto k \}$.
	The $\lefrul{\exists}$ case is dealt with similarly.
	
	This infinite branch must have an infinitely progressing trace, say $(t_i)_{i \geq n}$, by the definition of $\infty$-proof.
	However notice that, for $i \geq n$, $\rho_i (t_i) \geq \rho_{i+1} (t_{i+1})$ and, furthermore, at a progress point along the trace, $\rho_i (t_i) > \rho_{i+1} ( t_{i+1})$. 
	Thus, $(\rho_i (t_i))_{i \geq n}$ is a monotone decreasing sequence of natural numbers that does not converge, contradicting the fact that $\Nat$ is well-ordered.
\end{proof}
\noindent
Later, in Sect.~\ref{sect:ind-sim-cyc}, we will use the fact that the choices for generating an invalid branch in the proof above can be made \emph{uniformly} in an arithmetic setting.

\subsection{Defining $\CSn{n}$}
Simpson proposes in \cite{Sim17:cyclic-arith} to study systems of cyclic proofs containing only $\Sin{}{n}$ formulae, and to compare such systems to $\ISn{n}$.
This is rather pertinent in light of the free-cut elimination result we stated, Thm.~\ref{thm:free-cut-elim}: any $\ISn{n}$-proof of a $\Sin{}{n}$-sequent can be assumed to contain just $\Sin{}{n}$ formulae (possibly at a non-elementary cost in proof size), whence the comparison.
However, in order to be able to admit routine derivations of more complex formulae, {e.g.\ the $\Sin{}{n+1}$ law of excluded middle or the universal closure of a $\Sin{}{n}$ sequent}, we will close this notion under \emph{logical consequence}.


\begin{defi}
	\label{dfn:cphi}
	Let $\Phi$ be a set of formulae closed under subformulae and substitution.
	$\CC{\Phi}$ is the first-order theory axiomatised by the universal closures of conclusions of cyclic proofs containing only $\Phi$-formulae.
\end{defi}
%
Notice that, by the free-cut elimination result, Thm.~\ref{thm:free-cut-elim}, and the subformula property, any $\CSn n$ proofs of $\Sin{}{n}$-sequents contain only $\Sin{}{n}$-sequents anyway, without loss of generality.
%
This more `robust' definition allows us to easily compare fragments of cyclic arithmetic.
For instance, we have the following:


\begin{prop}
	\label{prop:csn-eq-cpn}
	$\CSn n = \CPn n$, for $n \geq 0$.
\end{prop}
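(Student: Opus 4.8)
The plan is to show the two inclusions $\CSn n \subseteq \CPn n$ and $\CPn n \subseteq \CSn n$ separately, and by the symmetry of the situation (swapping the roles of $\Sin{}{n}$ and $\Pin{}{n}$ throughout) it suffices to exhibit one of them; I will describe the argument for $\CSn n \subseteq \CPn n$. So suppose $\phi$ is (the universal closure of) the conclusion of a cyclic proof $\pi$ containing only $\Sin{}{n}$ formulae. I want to manufacture a cyclic proof of $\phi$ (or of something from which $\phi$ follows logically) that uses only $\Pin{}{n}$ formulae.

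The key step is to exploit the relationship between $\Sin{}{n}$ and $\Pin{}{n}$ formulae via bounded quantification and the De Morgan duality already set up in the excerpt. A $\Sin{}{n+1}$ formula $\exists \vec x.\,\psi$ with $\psi\in\Pin{}n$ is not itself $\Pin{}{}$-anything, but one can instead work with witnessing: passing to the contrapositive, $\cnot\phi\in\Pin{}{n}$ whenever $\phi\in\Sin{}{n}$ and vice-versa, so a cyclic proof of $\seqar \phi$ built from $\Sin{}{n}$ formulae dualises, read bottom-up with antecedents and succedents swapped and all formulae negated, to a cyclic derivation of $\cnot\phi\seqar$ built from $\Pin{}{n}$ formulae. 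One must check that this dualisation preserves local correctness of every rule in Fig.~\ref{fig:seq-calc} (the left/right pairs $\lefrul\corr/\rigrul\cand$, $\lefrul\cand/\rigrul\corr$, $\lefrul\exists/\rigrul\forall$, $\lefrul\forall/\rigrul\exists$ are precisely swapped, $\cut$, $\wk$, $\sub$, $\id$ and the equality/arithmetic initial sequents are each self-dual up to rewriting, using $\rigrul\cnot$/$\lefrul\cnot$ to handle the atomic axioms), and — crucially — that it preserves the trace condition: precursors and progress points are defined in terms of occurrences of atoms $t<t'$ in \emph{antecedents}, so one needs to observe that the relevant $<$-atoms, being $\Din{}0$, are unaffected and that the dual proof's infinite branches are the ``same'' branches with the same progressing traces. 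Since $\pi$ was regular, the dual is regular too, so it is a genuine cyclic proof.

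From a $\Pin{}{n}$ cyclic proof of $\cnot\phi\seqar$ I recover $\phi$ by a purely logical, finite piece of bookkeeping: append a $\rigrul\cnot$-style step (or cut against the trivial derivation of $\seqar\phi,\cnot\phi$) to obtain $\seqar\phi$; this step involves only $\phi$ and $\cnot\phi$, the latter being $\Pin{}n$, and the former appearing only in the final sequent, so by Thm.~\ref{thm:free-cut-elim} and the subformula property this does not disturb the $\Pin{}n$-ness of the core proof — and in any case $\CPn n$ is closed under logical consequence by Dfn.~\ref{dfn:cphi}, so we may simply note $\cnot\phi\proves\phi$ is false but $\{\seqar\phi\}$'s universal closure is a logical consequence of the dualised proof's conclusion once we re-negate. (Cleanest: the dual proof directly proves $\seqar\phi'$ for $\phi'$ logically equivalent to $\phi$, by also dualising the endsequent, so no post-processing beyond De Morgan rewriting is needed.) The symmetric argument, dualising $\Pin{}n$ cyclic proofs to $\Sin{}n$ ones, gives $\CPn n\subseteq\CSn n$, completing the equality.

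The main obstacle I anticipate is the verification that the dualisation preserves the global trace/progress condition rather than merely local rule-correctness — in particular making sure that the notion of ``precursor at $i$'' (clauses \ref{item:prec-sub}, \ref{item:prec-eq}, \ref{item:prec-id}) transports correctly when antecedent and succedent are swapped, and that the eigenvariable freshness side-conditions of $\lefrul\exists$, $\rigrul\forall$ (now $\rigrul\exists$, $\lefrul\forall$ in the dual) still hold. This is where Remark~\ref{rmk:traces-occurrences} is useful: it lets us be cavalier about whether trace terms literally occur, provided the eigenvariable constraint is met, which it is since the dual proof has exactly the same eigenvariables attached to exactly the dual rules. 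Everything else is routine syntactic manipulation.
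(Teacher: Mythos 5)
Your overall strategy is exactly the paper's: dualise the whole proof by De Morgan, swapping antecedents and succedents, so that a $\Sin{}{n}$ cyclic proof becomes a $\Pin{}{n}$ one with the same branch structure, patch up local correctness with a few logical steps, and invoke closure under logical consequence for the endsequent. The one place where your write-up goes wrong is precisely the point you yourself flag as the main obstacle. You claim the relevant $<$-atoms are ``unaffected'' because they are $\Din{}{0}$; but being $\Din{}{0}$ is irrelevant here. The trace conditions are \emph{positional}: a precursor via clause \ref{item:prec-eq} needs $t'=t$ to occur in the \emph{antecedent} $\Gamma_i$, and a progress point needs $t_{i+1}<t$ to occur in the \emph{antecedent} $\Gamma_{i+1}$. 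Under your literal transformation (``antecedents and succedents swapped and all formulae negated''), every such atom becomes a negated atom sitting in the succedent, where the definitions of precursor and progress no longer see it, so every trace of the original proof is destroyed and the dual preproof need not be an $\infty$-proof at all.

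The fix is small but has to be built into the transformation rather than argued away: leave the atomic formulae of the antecedent where they are and dualise only the rest. That is, send $\vec p, \Gamma \seqar \Delta$ (with $\vec p$ the atomic part of the antecedent) to $\vec p, \compl{\Delta} \seqar \compl{\Gamma}$. Since atoms are both $\Sin{}{n}$ and $\Pin{}{n}$, this costs nothing in logical complexity, and now the $=$- and $<$-atoms that witness precursors and progress points remain in antecedent position, so traces transport verbatim; local rule-correctness is then restored by inserting the routine $\cnot$-steps you describe. With that amendment your argument coincides with the paper's proof.
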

\begin{proof}
	For the left-right inclusion, replace each $\Sin{}{n}$ sequent $\vec p ,\Gamma \seqar \Delta$ with the sequent $\vec p, \compl \Delta \seqar \compl \Gamma$, where $\compl \Gamma$ and $ \compl \Delta $ contain the De Morgan dual formulae of $\Gamma $ and $ \Delta$ resp.\ and $\vec p$ exhausts the atomic formulae of the antecedent.
	Any traces will be preserved and the proof can be made correct by locally adding some logical steps.
	The converse implication is proved in the same way.
\end{proof}


%
%
%
%

%
%

Using a standard technique, e.g.\ from \cite{Bro06:phdthesis}, we also can rather simply show the following result, which we will later strengthen in Sect.~\ref{sect:cyc-sim-ind}:\footnote{A similar result was given in \cite{Sim17:cyclic-arith}, but that argument rather shows that $\ISn n \subseteq \CSn{n+1}$.}

\begin{prop}
	\label{prop:isn-in-csn}
	$\CSn n $ proves any $\Pin{}{n+1}$ theorem of $\ISn n$, for $n\geq 0$.
\end{prop}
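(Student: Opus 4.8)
The plan is to take an $\ISn n$-proof of a $\Pin{}{n+1}$ theorem $\forall \vec x.\, \psi(\vec x)$ with $\psi \in \Sin{}{n}$, and convert the single application of induction into a cycle. First I would apply free-cut elimination (Thm.~\ref{thm:free-cut-elim}) together with Prop.~\ref{prop:ipn-equals-isn} to assume the $\ISn n$-proof is in a convenient normal form: using $\ISn n = \IPn n$ and prenexing `on the fly' as described after Thm.~\ref{thm:free-cut-elim}, I can collapse blocks of like quantifiers and assume every induction step in the proof acts on a $\Pin{}{n}$ formula (equivalently, via Prop.~\ref{prop:csn-eq-cpn}, we may aim to build a $\CPn n$ proof). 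So the proof contains only $\Sin{}{n} \cup \Pin{}{n}$ formulae, which are exactly the formulae allowed in a $\CPn n = \CSn n$ cyclic proof.

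Next, the core step: replace each induction rule instance
\[
\vliinf{\ind}{}{\Gamma \seqar \phi(t), \Delta}{\Gamma \seqar \phi (0), \Delta}{\Gamma , \phi(a) \seqar \phi(\succ a), \Delta }
\]
by a cyclic derivation using the standard `descent' trick (as in \cite{Bro06:phdthesis}). The idea is to prove $\Gamma \seqar \phi(t),\Delta$ by instead deriving, at the bud, a sequent of the shape $\Gamma, a < t \seqar \phi(a), \Delta$ (or $\Gamma, a \le t \seqar \phi(a),\Delta$), doing a case split on whether $a = 0$ or $a = \succ b$ for some $b < a$: in the first case we close using the base-case premiss $\Gamma \seqar \phi(0),\Delta$; in the second we cut against the step-case premiss $\Gamma,\phi(b) \seqar \phi(\succ b),\Delta$ and appeal back to the root of the loop with the strictly smaller witness $b < a$, creating a back-edge. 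The atomic formula $b < a$ introduced by the use of axiom \ref{item:rob-succ-non-zer} is precisely what supplies a progress point for the trace on the term following $a$ around the cycle; every infinite branch through the loop must take this back-edge infinitely often, so it has an infinitely progressing trace, and the resulting preproof is a genuine cyclic proof. Since $\phi \in \Pin{}{n}$ and $\Gamma,\Delta$ are at worst $\Sin{}{n}\cup\Pin{}{n}$ (they are subformula-bounded after free-cut elimination), the whole loop stays within $\Pin{}{n}$-formulae, so the construction lives in $\CPn n$.

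Finally, iterating this replacement over all induction instances (innermost first, say) and taking universal closures yields a cyclic proof in $\CPn n = \CSn n$ of the original conclusion, giving $\CSn n \supseteq$ the $\Pin{}{n+1}$ theorems of $\ISn n$. I expect the main obstacle to be bookkeeping rather than conceptual: ensuring the trace condition is actually met when several nested loops and substitution rules interact, in particular that the eigenvariable conventions of Remark~\ref{rmk:traces-occurrences} are respected so that the `descending witness' term genuinely persists along the branch and is not captured; and making sure that bounding the existential in axiom \ref{item:rob-succ-non-zer} (so that the initial sequents are $\Din{}{0}$, as required for the fragments with only bounded induction axioms available in the calculus) does not inflate the logical complexity of the formulae in the loop. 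The restriction to $\Pin{}{n+1}$ conclusions is exactly what lets the closing-off of the bud be done purely with $\Pin{}{n}$ reasoning; this is why the naive bound gives $\CSn n$ rather than something smaller, and why the improvement to $\ISn{n+1} \subseteq \CSn n$ in Sect.~\ref{sect:cyc-sim-ind} genuinely requires more work.
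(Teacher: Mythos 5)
Your proposal is correct and follows essentially the same route as the paper: free-cut elimination to restrict to $\Sin{}{n}$ (resp.\ $\Pin{}{n}$) formulae, then local replacement of each induction step by the standard descent loop from \cite{Bro06:phdthesis}, with the case split $b=0$ versus $b=\succ a$, $a<b$ supplying the progress point. The only cosmetic difference is that the paper loops on the unbounded sequent $\Gamma \seqar \phi(b),\Delta$ and recovers $\phi(t)$ by a final substitution step, rather than carrying an explicit bound $a<t$ in the bud, but both variants yield the same infinitely progressing trace.
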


\begin{proof}
	[Proof sketch]
	Suppose $\ISn n $ proves $\forall \vec x . \psi(\vec x)$ where $\phi$ is $\Sin{}{n}$. 
	Let $\pi$ be a free-cut free $\ISn n$ proof of $\psi(\vec a)$, so in particular contains only $\Sin{}n$ formulas, by the subformula property. 
	We may now construct a $\CSn n $ proof of $\phi(\vec a)$ by simply simulating every local inference step of $\pi$; the only nontrivial case is the induction rule:
	\[
	\vliinf{\ind}{}{\Gamma \seqar \phi(t), \Delta}{\Gamma \seqar \phi(0), \Delta}{ \Gamma, \phi(a) \seqar \phi(\succ a), \Delta}
	\]
	This is simulated by the following cyclic derivation (omitting some routine proof steps),
	\[
	\vlderivation{
		\vlin{\sub}{}{\Gamma \seqar \phi(t), \Delta}{
			\vliin{\cut}{\bullet}{\Gamma \seqar \phi(b), \Delta}{
				\vlin{=}{}{0=b, \Gamma \seqar \phi(b), \Delta}{
					\vlhy{\Gamma \seqar \phi(0), \Delta}
				}
			}{
				\vlin{}{}{0< b , \Gamma \seqar \phi(b), \Delta}{
					\vlin{}{}{b= \succ a, \Gamma \seqar \phi(b), \Delta}{
						\vliin{\cut}{}{\underline{a< b}, \Gamma \seqar \phi(\succ a), \Delta}{
							\vlin{\sub}{}{\Gamma \seqar \phi(a) , \Delta}{
								\vlin{\cut}{\bullet}{\Gamma \seqar \phi(b), \Delta}{\vlhy{\vdots}}
							}
						}{
							\vlhy{\Gamma, \phi(a) \seqar \phi(\succ a), \Delta}
						}
					}
				}
			}
		}
	}
	\]
	where we have written $\bullet$ to mark roots of identical subtrees.
	An infinite branch that does not have a tail in the proofs of the two premisses of $\ind$ must eventually loop on $\bullet$. Therefore it admits an infinitely progressing trace alternating between $a$ and $b$, with the progress point underlined above.
	Now the proposition follows by simple application  of $\rigrul \forall$. 
\end{proof}

%
%
%
Following Rmk.~\ref{rmk:traces-occurrences}, notice that, e.g.\ in the simulation of induction above, traces need not be connected in the graph of ancestry of a proof. This deviates from other settings where it is \emph{occurrences} that are tracked, rather than terms, e.g.\ in  \cite{BaeDouHirSau16:multl,BaeDouSau16:cut-elim,Dou17:multl}.
%

\subsection{B\"uchi automata: checking correctness of cyclic proofs}
\label{sect:sect:automata-prelims}
A cyclic preproof can be effectively checked for correctness by reduction to the inclusion of `B\"uchi automata', yielding a $\pspace$ bound. {As far as the author is aware, this is the best known upper bound, although no corresponding lower bound is known.
	As we will see later in Sect.~\ref{sect:ind-sim-cyc}, this is one of the reasons why we cannot hope for a `polynomial simulation' of cyclic proofs in a usual proof system, and so why elementary simulations are more pertinent.}

\begin{defi}
	A \dfntrm{nondeterministic B\"uchi automaton} (NBA) $\mathcal A$ is a tuple $(A, Q, \delta , q_0 , F)$ where:
	$A$ is a finite set, called the \dfntrm{alphabet},
	$Q$ is a finite set of \dfntrm{states},
	$\delta \subseteq (Q \times A ) \times Q$ is the \dfntrm{transition relation},
	$q_0 \in Q$ is the \dfntrm{initial} state,
	and
	$F\subseteq Q$ is the set of \dfntrm{final} or \dfntrm{accepting} states.
	%
	We say that $\mathcal A$ is \dfntrm{deterministic} (a DBA) if $\delta $ is (the graph of) a function $Q \times A \to Q$.
	A `word' $(a_i)_{i \geq 0} \in A^\omega $ is \dfntrm{accepted} or \dfntrm{recognised} by $\mathcal A$ if there is a sequence $(q_i)_{i \geq 0} \in Q^\omega$ such that:
	for each $i\geq 0$, $(q_i, a_i , q_{i+1}) \in \delta$,
	and
	for infinitely many $i\geq 0$ we have $q_i \in F$.
	We write $\lang (\mathcal A)$ for the set of words in $A^\omega$ accepted by $\mathcal A$.
\end{defi}

From a cyclic preproof $\pi$ we can easily define two automata, say $\mathcal A^\pi_b$ and $\mathcal A^\pi_t$,\footnote{These are rather called $B_p$ and $B_t$ respectively in \cite{Sim17:cyclic-arith}.} 
respectively accepting just the infinite branches and just the infinite branches with infinitely progressing traces. {See \cite{Sim17:cyclic-arith} for a construction of $\mathcal A^\pi_t$.}
We point out that $\mathcal A^\pi_b$ is essentially just the dependency graph of $\pi$ with all states final, and so is in fact deterministic;\footnote{Technically the transition relation here is not total, but this can be `completed' in the usual way by adding a non-final `sink' state for any outstanding transitions.} 
we will rely on this observation later in Sects.~\ref{sect:ind-sim-cyc}, \ref{sect:nonuniform-ind-sim-cyc} and \ref{sect:red-to-det}.
We now state the well-known `correctness criterion' for cyclic proofs:

\begin{propC}
	[\cite{Sim17:cyclic-arith}]
	\label{prop:reg-proog-correctness-condition}
	A cyclic preproof $\pi$ is a $\infty$-proof iff $\lang(\mathcal A^\pi_b) \subseteq \lang (\mathcal A^\pi_t) $.
\end{propC}




\section{A translation from $\ISn{n+1}$ to $\CSn n$, over $\Pin{}{n+1}$-theorems}
\label{sect:cyc-sim-ind}
We show in this section our first result, that cyclic proofs containing only $\Sin{}{n}$-formulae are enough to simulate $\ISn{n+1}$ over not-too-complex formulae:

\begin{thm}
	\label{thm:cyclic-sim-ind}
	$\ISn{n+1} \subseteq \CSn{n}$, over $\Pin{}{n+1}$ theorems, for $n\geq 0$.
\end{thm}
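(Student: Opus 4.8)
The plan is to improve the simulation $\ISn{n} \subseteq \CSn{n}$ of Prop.~\ref{prop:isn-in-csn} by one level of the arithmetical hierarchy, using the slack in quantifier complexity afforded by working over $\Pin{}{n+1}$ conclusions rather than $\Sin{}{n}$ ones. First I would take a $\Pin{}{n+1}$ theorem $\forall \vec x.\psi(\vec x)$ of $\ISn{n+1}$ with $\psi \in \Sin{}{n}$, and appeal to free-cut elimination (Thm.~\ref{thm:free-cut-elim}) together with Prop.~\ref{prop:ipn-equals-isn}: since $\ISn{n+1} = \IPn{n+1}$, I can work with an $\IPn{n+1}$-proof whose only formulae are (substitution instances of) subformulae of the $\Sin{}{n}$ conclusion and of the $\Pin{}{n+1}$ induction formulae. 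Thus the proof is `almost' $\Sin{}{n}$: the only genuinely $\Pin{}{n+1}$-complex formulae appear in the induction steps and their immediate unpackings, and between induction steps everything can be arranged to be $\Sin{}{n}$ (a $\Pin{}{n+1}$ formula $\forall \vec y.\theta(\vec y)$ is handled by its $\Sin{}{n}$ `matrix' $\theta$, with the $\rigrul\forall$ introducing it, and the $\lefrul\forall$ eliminating it to a $\Sin{}{n}$ instance).

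The key step is a refined cyclic simulation of the $\Pin{}{n+1}$-induction rule
\[
\vliinf{\ind}{}{\Gamma \seqar \forall \vec y.\theta(\vec y, t), \Delta}{\Gamma \seqar \forall \vec y.\theta(\vec y, 0), \Delta}{\Gamma, \forall \vec y.\theta(\vec y, a) \seqar \forall \vec y.\theta(\vec y, \succ a), \Delta}
\]
where $\theta \in \Sin{}{n}$. The idea is to run the cyclic loop of Prop.~\ref{prop:isn-in-csn} on the induction variable, but to `strip' the outer block $\forall\vec y$ before entering the loop and re-introduce it only at the very end: inside the loop one carries the $\Sin{}{n}$ formula $\theta(\vec b, x)$ (with $\vec b$ fresh eigenvariables) rather than $\forall \vec y.\theta(\vec y, x)$. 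Concretely, replace the occurrence of $\phi(x)$ in the cyclic derivation of Prop.~\ref{prop:isn-in-csn} by $\theta(\vec b, x)$; the induction-step premiss $\Gamma,\forall\vec y.\theta(\vec y,a)\seqar\forall\vec y.\theta(\vec y,\succ a),\Delta$ is used by first applying $\lefrul\forall$ to instantiate its hypothesis at $\vec y \mapsto$ (suitable terms) and $\rigrul\forall$/$\lefrul\forall$ bookkeeping so that the formulae actually manipulated at the cut-points are the $\Sin{}{n}$ matrices $\theta(\vec b, a)$ and $\theta(\vec b', \succ a)$. The progress point is still the term decreasing along the induction variable (the underlined $a < b$), so the trace condition is unaffected. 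After the loop closes we obtain $\Gamma \seqar \theta(\vec b, t), \Delta$ with $\vec b$ fresh, and a single $\rigrul\forall$ block yields $\Gamma \seqar \forall\vec y.\theta(\vec y,t),\Delta$. All sequents appearing in this gadget are $\Sin{}{n}$ (the $\forall\vec y.\theta$ formulae only appear transiently, introduced by $\rigrul\forall$ at the exit and eliminated by $\lefrul\forall$ at the entry), so the resulting cyclic proof lives in $\CSn{n}$. Finally a $\rigrul\forall$ on $\vec x$ gives the $\Pin{}{n+1}$ conclusion, as in Prop.~\ref{prop:isn-in-csn}.

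The main obstacle I expect is the careful bookkeeping needed to keep every sequent genuinely $\Sin{}{n}$ while still having a locally correct proof: one must check that the $\rigrul\forall$ introducing $\forall\vec y.\theta(\vec y,\succ a)$ in the induction-step premiss does not clash with eigenvariable conditions in the loop, that the substitution steps $\sub$ used to `reset' the loop carry the fresh $\vec b$ correctly (and that $\vec b$ does not get captured, cf.\ Rmk.~\ref{rmk:traces-occurrences}), and that the whole construction remains regular. A secondary subtlety is that the matrix $\theta$ itself may contain nested quantifier blocks, so that `stripping the outer $\forall\vec y$' must be done via the $\beta$-function coding mentioned after Thm.~\ref{thm:free-cut-elim} to treat $\vec y$ as a single variable; I would state this reduction once and use it silently. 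A worked instance (e.g.\ the pigeonhole principle) is deferred to the appendix, as indicated in the introduction.
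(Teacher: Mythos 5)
Your proposal is correct and follows essentially the same route as the paper: pass to $\IPn{n+1}$ via Prop.~\ref{prop:ipn-equals-isn}, apply free-cut elimination so that only $\Pin{}{n+1}$ formulae occur, strip outer universal blocks down to their $\Sin{}{n}$ matrices (coding the blocks with $\beta$-functions), and simulate $\Pin{}{n+1}$-induction by the cyclic loop of Prop.~\ref{prop:isn-in-csn} with the progress point on the induction variable. The ``bookkeeping'' you flag as the main obstacle is exactly what the paper's Lemma~\ref{lem:ind-to-cyc-trans} packages as a structural induction whose invariant turns each antecedent $\Pin{}{n+1}$ formula $\forall x_i . \phi_i$ into an \emph{open assumption sequent} $\Gamma \seqar \Delta, \phi_i$, discharged at $\lefrul\forall$ and $\cut$ steps by a $\sub$ step followed by a cut on the $\Sin{}{n}$ instance --- this is what lets an induction hypothesis (or cut formula) be instantiated at several different terms inside a subproof, a point your single fresh eigenvariable $\vec b$ glosses over.
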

{One example of such logical power in cyclic proofs was given in \cite{Sim17:cyclic-arith}, in the form of $\CSn 1$ proofs of the totality of the Ackermann-P\'eter function. This already separates it from $\ISn 1$, which only proves the totality of the primitive recursive functions \cite{Par72:n-quant-ind}.}
%
To prove the theorem above, we will rather work in $\IPn{n+1}$, cf.~Prop.~\ref{prop:ipn-equals-isn}, since the exposition is more intuitive.
We first prove the following intermediate lemma.

\begin{lem}
	\label{lem:ind-to-cyc-trans}
	Let $\pi $ be a $\IPn{n+1}$ proof, containing only $\Pin{}{n+1}$ formulae, of a sequent,
	\begin{equation}
	\label{eqn:pitwo-conc}
	\Gamma, \forall x_1 . \phi_1 , \dots , \forall x_l . \phi_l \seqar \Delta, \forall y_1 . \psi_1 , \dots , \forall y_m . \psi_m 
	\end{equation}
	where $\Gamma, \Delta, \phi_i, \psi_j$ are $\Sin{}{n}$ and $x_i , y_j$ occur only in $ \phi_i, \psi_j$ respectively.
	Then there is a $\CSn n$ derivation $\lift \pi$ of the form:
	\[
	\toks0={.5}
	\vlderivation{
		\vltrf{\lift \pi}{\Gamma \seqar \Delta, \psi_1 , \dots , \psi_m }{\vlhy{}}{\vlhy{\left\{ \Gamma \seqar \Delta,  \phi_i  \right\}_{i\leq l}}}{\vlhy{}}{\the\toks0}
	}
	\]
	Moreover, no free variables of \eqref{eqn:pitwo-conc} occur as eigenvariables for $\lefrul \exists$, $\rigrul \forall$ or $\sub$ steps in $\lift \pi$.
\end{lem}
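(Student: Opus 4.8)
The plan is to prove the lemma by induction on the structure of $\pi$, building $\lift\pi$ by simulating each inference step of $\pi$ while \emph{opening up} leading universal quantifiers so as never to leave the class $\Sin{}{n}$. First I would set up some reductions: by Prop.~\ref{prop:ipn-equals-isn} it suffices to work in $\IPn{n+1}$; by free-cut elimination (Thm.~\ref{thm:free-cut-elim}) and the subformula property we may assume $\pi$ contains only $\Pin{}{n+1}$ formulae; and, using the pairing and $\beta$-function coding recalled at the end of Sect.~\ref{sect:prelims-pa-pt}, we may assume every $\Pin{}{n+1}$ formula occurring in $\pi$ is either already $\Sin{}{n}$ or of the shape $\forall v.\sigma$ with $v$ a single variable and $\sigma\in\Sin{}{n}$, so that no Boolean combination or existential formula is genuinely $\Pin{}{n+1}$. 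The induction is carried out for every sequent $S$ of $\pi$, not only the conclusion: writing $S$ as a $\Sin{}{n}$-core $\Gamma_S\seqar\Delta_S$ together with the lists $\forall\vec x.\vec\phi$ and $\forall\vec y.\vec\psi$ of its genuinely-$\Pin{}{n+1}$ formulae on the left and right, I would produce a $\CSn n$ derivation of $\Gamma_S\seqar\Delta_S,\vec\psi$ whose open leaves are exactly the $\Gamma_S\seqar\Delta_S,\phi_i$ (with $\vec x,\vec y$ kept free) and whose eigenvariables are all chosen genuinely fresh; the final clause of the lemma is then immediate.

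Most cases are routine. Logical and equational steps of $\pi$ not acting on a genuinely-$\Pin{}{n+1}$ formula live entirely in the $\Sin{}{n}$-core and are simulated verbatim, while $\wk$, $\cut$ and $\sub$ are polymorphic and are dealt with by routine splicing --- for a $\cut$ on a genuinely-$\Pin{}{n+1}$ formula $\forall v.\sigma$ one grafts the opened goal-derivation of $\sigma$ onto the matching open leaf of the opened hypothesis-derivation, which is legitimate because the eigenvariable discipline we maintain lets us $\sub$ freely into that leaf. A $\rigrul\forall$ step on a top-level $\forall y.\psi$ becomes vacuous after opening, the opened sequent already carrying $\psi$ with $y$ free, and the eigenvariable condition on that step in $\pi$ is just what keeps $y$ out of the eigenvariables of $\lift\pi$. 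A $\lefrul\forall$ step instantiating a top-level $\forall x.\phi$ by a term $t$ is handled by taking the leaf $\Gamma_S\seqar\Delta_S,\phi$, applying $\sub$ with $x\mapsto t$, and cutting the result --- a $\Sin{}{n}$ formula, hence an admissible cut --- against the derivation above the step, converting that leaf into an internal inference.

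The crux is the induction rule
\[
\vliinf{\ind}{}{\Gamma\seqar\phi(t),\Delta}{\Gamma\seqar\phi(0),\Delta}{\Gamma,\phi(a)\seqar\phi(\succ a),\Delta}
\]
in the case that the induction formula $\phi(b)=\forall z.\chi(z,b)$ is genuinely $\Pin{}{n+1}$, with $\chi\in\Sin{}{n}$. I would re-run the cyclic unfolding from the proof of Prop.~\ref{prop:isn-in-csn}, but with the induction formula replaced throughout by its matrix $\chi(c,b)$, treating the stripped variable $c$ as a \emph{passenger} travelling alongside the tracked term $b$. The progressing trace is still the one on $b$, progressing where $b=\succ a$ forces $a<b$; $c$ neither enters this trace nor is substituted in a way depending on it (the eigenvariables of the new $\sub$ steps are $b$ and $c$, neither occurring in a tracked term, cf.~Rmk.~\ref{rmk:traces-occurrences}), so the correctness criterion Prop.~\ref{prop:reg-proog-correctness-condition} still applies. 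The successor premise $\Gamma,\forall z.\chi(z,a)\seqar\forall z.\chi(z,\succ a),\Delta$ is supplied by the induction hypothesis of the lemma applied to its subproof, yielding a $\CSn n$ derivation of $\Gamma\seqar\Delta,\chi(c',\succ a)$ from the single leaf $\Gamma\seqar\Delta,\chi(c,a)$; since, by the final clause of the lemma, this derivation uses $a$ (a free variable of $\pi$'s successor premise) as no eigenvariable, it can be grafted into the cycle just below the point where the unfolding introduces $a$, with the loop discharging its leaf after a $\sub$ renaming the passenger. The base premise is lifted analogously, and no re-universalisation happens at this stage: the leading quantifiers of the conclusion are reinstated only when Thm.~\ref{thm:cyclic-sim-ind} is deduced from the lemma, by a final $\rigrul\forall$ step.

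I expect this last case to be the main obstacle: one must check carefully that handing the induction formula's leading quantifier to a passenger variable leaves the progressing trace on the induction variable genuinely intact, and that the inductively obtained lift of the successor premise splices into the cycle without introducing any new, non-progressing infinite branch. The remainder is bookkeeping --- maintaining the $\Sin{}{n}$/$\Pin{}{n+1}$ split and the list of leaf formulae coherently through every inference --- so that the construction terminates with exactly the derivation, conclusion and leaves claimed.
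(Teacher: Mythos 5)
Your proposal is correct and follows essentially the same route as the paper's proof: a structural induction on $\pi$ that strips the outermost universals (normalised to a single quantifier via pairing and $\beta$-functions), keeps the left-hand $\forall\vec x.\vec\phi$ as open leaves to be grafted at $\Pin{}{n+1}$-cuts and discharged by $\sub$-plus-cut at $\lefrul\forall$ steps, and handles the induction rule by the cyclic unfolding with the stripped quantifier's variable riding as a passenger, the trace on the induction variable kept intact precisely by the eigenvariable clause of the lemma. The only gloss is that the "bookkeeping" you defer (adjusting the leaf sequents $\Gamma_S\seqar\Delta_S,\phi_i$ through propositional and weakening steps) is carried out in the paper by explicit cuts and weakenings against the leaves, but this is exactly the routine work you anticipate.
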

\begin{proof}
	We proceed by induction on the structure of $\pi$.
	Notice that we may assume that any $\Pin{}{n+1}$ formulae occurring have just a single outermost $\forall$ quantifier, by interpreting arguments as pairs and using G\"odel's $\beta$ functions. {(This introduces only cuts on formulae of the same form.)}
	%
	We henceforth write $\vec \phi$ for $\phi_1 , \dots , \phi_l$ and $\vec \psi$ for $\psi_1 , \dots , \psi_m$ and, as an abuse of notation, $\forall \vec x . \vec \phi $ and $\forall \vec y . \vec \psi$ for $\forall x_1 . \phi_1 , \dots , \forall x_l . \phi_l $ and $\forall y_1 . \psi_1 , \dots , \forall y_m . \psi_m $ respectively. {(Notice that this is a reasonable abuse of notation, since $\forall$s can be prenexed outside conjunctions and disjunctions already in pure FO logic.)}

	Propositional logical steps are easily dealt with, relying on invertibility and cuts, with possible structural steps.
	Importantly, due to the statement of the lemma, such steps apply to only $\Sin{}{n}$ formulae (recall the discussion at the end of Sect.~\ref{sect:prelims-pa-pt}).
	For instance, if $\pi$ extends a proof $\pi'$ by a $\cand$-left step,
	\[
	\vlinf{\lefrul \cand}{}{\Gamma , \chi_0 \cand \chi_1, \forall \vec x . \vec \phi \seqar \Delta , \forall \vec y . \vec \psi}{\Gamma , \chi_0, \chi_1, \forall \vec x . \vec \phi \seqar \Delta , \forall \vec y . \vec \psi}
	\]
	then we define $\lift \pi$ as,
	\[
	\toks0={.25}
	\vlderivation{
		\vlin{\lefrul\cand}{}{\Gamma, \chi_0\cand \chi_1 \seqar \Delta , \vec \psi}{
			\vltrf{ \lift{\pi'} }{ \Gamma, \chi_0 , \chi_1 \seqar \Delta , \vec \psi }{\vlhy{}}{
				\vlhy{
					\left\{
					\vlderivation{
						\vliin{\cut}{}{\Gamma, \chi_0 , \chi_1 \seqar \Delta, \phi_i }{
							\vliq{}{}{\chi_0 , \chi_1 \seqar \chi_0 \cand \chi_1}{\vlhy{}}	
						}{ \vlhy{\Gamma, \chi_0 \cand \chi_1 \seqar \Delta, \phi_i} }
					}
					\right\}_{i\leq l}
				}	
			}{\vlhy{}}{\the\toks0}	
		}
	}
	\]
	and if $\pi$ extends proofs $\pi_0$ and $ \pi_1$ by a $\cand$-right step,
	\[
	\vliinf{\rigrul \cand}{}{ \Gamma , \forall \vec x . \vec \phi \seqar \Delta , \chi_0 \cand \chi_1 , \forall \vec y . \vec \psi }{\Gamma , \forall \vec x . \vec \phi \seqar \Delta , \chi_0 , \forall \vec y . \vec \psi}{\Gamma , \forall \vec x . \vec \phi \seqar \Delta ,  \chi_1 , \forall \vec y . \vec \psi}
	\]
	then we define $\lift \pi $ as:
	%
	
	\[
	\toks0={.3}
	\vlderivation{
		\vlin{\rigrul \cand}{\forall j <2}{\Gamma \seqar \Delta, \chi_0 \cand \chi_1, \vec \psi}{
			\vltrf{\lift{\pi_j}}{ \Gamma \seqar \Delta, \chi_j, \vec \psi }{\vlhy{}}{\vlhy{
					\left\{
					\vlderivation{
						\vliin{\cut}{}{\Gamma \seqar \Delta, \chi_j, \phi_i}{\vlhy{\Gamma \seqar \Delta, \chi_0 \cand \chi_1, \phi_i}}{
							\vliq{}{}{ \chi_0 \cand \chi_1 \seqar \chi_j }{\vlhy{}}	
						}		
					}
					\right\}_{i \leq l}
			}}{\vlhy{}}{\the\toks0}	
		}
	}
	\]
	
	If $\pi$ extends a proof $\pi'$ by a thinning step,
	\[
	\vlinf{\wk}{}{ \Gamma', \Pi, \Gamma, \forall \vec x . \vec \phi  \seqar \Delta , \forall \vec y . \vec \psi , \Delta', \forall \vec z . \vec \chi}{\Gamma, \forall \vec x . \vec \phi \seqar \Delta , \forall \vec y . \vec \psi}
	\]
	where $\Gamma', \Delta', \vec \chi$ are $\Sin{}{n}$ and $\Pi$ is $\Pin{}{n+1}$, then we define $\lift \pi$ as: 
	\[
	\toks0={0.5}
	\vlderivation{
		\vlin{\wk}{}{\Gamma', \Gamma \seqar \Delta, \vec \psi , \Delta' , \vec \chi}{
			\vltrf{\Gamma',\lift{\pi'}, \Delta'}{ \Gamma', \Gamma \seqar \Delta, \vec \psi , \Delta'}{\vlhy{}}{\vlhy{\left\{ \Gamma',\Gamma \seqar \Delta, \phi_i , \Delta' \right\}_{i \leq l} } }{\vlhy{}}{\the\toks0}
		}
	}
	\]
	where $\Gamma', \lift{\pi'}, \Delta'$ is obtained from $\lift{\pi'}$ by prepending $\Gamma'$ and appending $\Delta'$ to each sequent. 
	For this we might need to rename some free variables in $\pi'$ so that eigenvariable conditions are preserved after the transformation; this does not affect the cedents $\Gamma, \Delta$ by the assumption from the inductive hypothesis.
	{Notice that we are simply ignoring the extra premisses due to $\Pi$.}

	If $\pi$ extends proofs $\pi_0 $ and $\pi_1$ by a $\cut$ step on a $\Pin{}{n+1}$ formula,
	\[
	\vliinf{\cut}{}{\Gamma , \forall \vec x . \vec \phi \seqar \Delta , \forall \vec y . \vec \psi }{\Gamma , \forall \vec x . \vec \phi \seqar \Delta , \forall \vec y . \vec \psi , \forall z . \chi}{\Gamma , \forall \vec x . \vec \phi , \forall z . \chi \seqar \Delta , \forall \vec y . \vec \psi}
	\]
	then we define $\lift \pi$ as:
	\[
	\toks0={0.4}
	\toks1={0.3}
	\vlderivation{
		\vlid{}{}{\Gamma \seqar \Delta , \vec \psi}{
			\vltrf{\lift{\pi_1}, \vec \psi}{ \Gamma \seqar \Delta , \vec \psi , \vec \psi }{
				\vlhy{\vlderivation{\vltrf{\lift{\pi_0}}{\Gamma \seqar \Delta, \vec \psi , \chi }{\vlhy{}}{\vlhy{ \left\{ \Gamma \seqar \Delta , \phi_i \right\}_{i \leq l} }}{\vlhy{}}{\the\toks0}}}
			}{
				\vlhy{}
			}{
				\vlhy{
					\left\{
					\vlderivation{
						\vlin{\wk}{}{\Gamma \seqar \Delta , \vec \psi , \phi_i  }{\vlhy{ \Gamma \seqar \Delta, \phi_i } }
					}
					\right\}_{i \leq l}
				}
			}{
				\the\toks1
			}
		}
	}
	\]
	The final dotted `contraction' step is implicit, since we treat cedents as sets.
	Again, 
	we might need to rename some variables in $\pi_1$.
	If instead the cut formula were $\Sin{}n$, say $\chi$, we would define $\lift \pi $ as:
	\[
	\toks0={0.55}
	\toks1={0.4}
	\vlderivation{
		\vliin{\cut}{}{ \Gamma \seqar \Delta , \vec \psi }{
			\vltrf{\lift{\pi_0}}{ \Gamma \seqar \Delta , \chi , \vec \psi }{\vlhy{}}{
				\vlhy{\{ \Gamma \seqar \Delta , \phi_i \}_{i\leq l}}
			}{\vlhy{}}{\the\toks0}
		}{
			\vltrf{\lift{\pi_1}}{ \Gamma, \chi \seqar \Delta , \vec \psi}{\vlhy{}}{
				\vlhy{\left\{
					\vlderivation{
						\vlin{\wk}{}{\Gamma, \chi \seqar \Delta, \phi_i}{\vlhy{\Gamma \seqar \Delta , \phi_i}}	
					}
					\right\}_{i \leq l}}
			}{\vlhy{}}{\the\toks1}
		}
	}
	\]

	If $\pi$ extends a proof $\pi'$ by a $\forall$-left step,
	\[
	\vlinf{\lefrul \forall}{}{\Gamma , \forall z. \chi(z), \forall \vec x . \vec \phi \seqar \Delta, \forall \vec y. \vec \psi}{\Gamma , \chi(t), \forall \vec x . \vec \phi \seqar \Delta, \forall \vec y. \vec \psi}
	\]
	where $\forall z . \chi(z)$ is $\Pin{}{n+1}$, we define $\lift \pi$ as follows:
	\[
	\toks0={0.3}
	\vlderivation{
		\vliin{\cut}{}{\Gamma \seqar \Delta, \vec \psi}{
			\vlin{\sub}{}{\Gamma \seqar \Delta , \chi(t)}{\vlhy{\Gamma \seqar \Delta, \chi(z)}}
		}{
			\vltrf{\lift{\pi'}}{\Gamma , \chi(t) \seqar \Delta, \vec \psi}{\vlhy{}}{
				\vlhy{
					\left\{ \vlinf{\wk}{}{\Gamma, \chi(t) \seqar \Delta , \phi_i}{\Gamma \seqar \Delta, \phi_i} \right\}_{i\leq l}
				}
			}{\vlhy{}}{\the\toks0}
		}
	}
	\]
	(Notice that, although $z$ occurs as an eigenvariable for a $\sub$ step here, it is already bound in the conclusion of $\pi$, so we preserve the inductive hypothesis.)
	If $\pi$ extends a proof $\pi'$ by a $\forall$-right step,
	\[
	\vlinf{\rigrul \forall}{}{\Gamma , \forall \vec x . \vec \phi \seqar \Delta, \forall \vec y . \vec \psi , \forall z . \chi}{\Gamma , \forall \vec x . \vec \phi \seqar \Delta, \forall \vec y . \vec \psi , \chi}
	\]
	where $\forall z .\chi$ is $\Pin{}{n+1}$, then we define $\lift \pi$ as:
	\[
	\toks0={0.5}
	\vlderivation{
		\vltrf{\lift{\pi'}, \chi}{ \Gamma \seqar \Delta , \vec \psi , \chi }{\vlhy{}}{ 
			\vlhy{\left\{\vlderivation{\vlin{\wk}{}{\Gamma \seqar \Delta , \phi_i , \chi}{\vlhy{\Gamma \seqar \Delta, \phi_i}} } \right\}_{ i \leq l} }	
		}{\vlhy{}}{\the\toks0}
	}
	\]
	%
	%
	If $\pi$ extends a proof $\pi'$ by a $\exists$-right step,
	\[
	\vlinf{\rigrul \exists}{}{ \Gamma , \forall \vec x . \vec \phi \seqar \Delta, \forall \vec y . \vec \psi , \exists z . \chi(z) }{ \Gamma , \forall \vec x . \vec \phi \seqar \Delta, \forall \vec y . \vec \psi , \chi(t) }
	\]
	where $\exists z . \chi(z)$ is $\Sin{}{n}$, then we define $\lift \pi$ as:
	\[
	\toks0={0.3}
	\vlderivation{
		\vlin{\rigrul \exists}{}{ \Gamma \seqar \Delta, \vec \psi , \exists z . \chi(z) }{
			\vltrf{\lift{\pi'}, \exists z . \chi(z)}{ \Gamma \seqar \Delta , \vec \psi , \chi(t) , \exists z . \chi(z) }{\vlhy{}}{
				\vlhy{
					\left\{
					\vlinf{\wk}{}{\Gamma \seqar \Delta , \phi_i , \chi(t) , \exists z . \chi(z)}{\Gamma \seqar \Delta , \phi_i , \exists z . \chi(z)}
					\right\}_{i\leq l}	
				}
			}{\vlhy{}}{\the\toks0}
		}
	}
	\]	
	Again, some eigenvariables of $\pi'$ might have to be renamed.
	Any other quantifier steps are dealt with routinely.
	
	Finally, if $\pi$ extends proofs $\pi_0$ and $\pi'$ by an induction step,
	\[
	\vliinf{\ind}{}{\Gamma , \forall \vec x . \vec \phi \seqar \Delta, \forall \vec y . \vec \psi , \forall z . \chi(t)}{\Gamma , \forall \vec x . \vec \phi \seqar \Delta, \forall \vec y . \vec \psi , \forall z . \chi(0)}{ \Gamma , \forall \vec x . \vec \phi , \forall z . \chi(c) \seqar \Delta, \forall \vec y . \vec \psi , \forall z . \chi(\succ c) }
	\]
	we define $\lift \pi$ to be the following cyclic proof,
	\[
	\toks0={0.3}
	\toks1={0.5}
	\vlderivation{
		\vlin{\sub}{}{\Gamma \seqar \Delta, \vec \psi , \chi(t) }{
			\vliin{}{\daemon}{\Gamma \seqar \Delta ,  \vec \psi , \chi(d)}{
				\vlin{}{}{d=0, \Gamma \seqar \Delta , \vec \psi , \chi(d)}{
					\vltrf{\lift{\pi_0}}{\Gamma \seqar \Delta, \vec \psi , \chi(0)}{\vlhy{}}{\vlhy{ \left\{ \Gamma \seqar \Delta, \phi_i  \right\}_{i\leq l} }}{\vlhy{}}{\the\toks1}
				}
			}{
				\vlin{}{}{\underline{c<d}, d = \succ c , \Gamma \seqar \Delta, \vec \psi , \chi(d)}{
					\vltrf{\lift{\pi'},\vec \psi }{ \Gamma \seqar \Delta , \vec \psi , \chi(\succ c)}{
						\vlin{\sub}{}{\Gamma \seqar \Delta, \vec \psi, \chi(c)}{
							\vlin{}{\daemon}{\Gamma \seqar \Delta ,\vec \psi , \chi(d) }{\vlhy{\vdots}}
						} 
					}{
						\vlhy{}
					}{
						\vlhy{\left\{ \vlinf{\wk}{}{\Gamma \seqar \Delta, \vec \psi , \phi_i}{\Gamma \seqar \Delta , \phi_i} \right\}_{i\leq l} }
					}{
						\the\toks0
					}
				}
			}
		}
	}
	\]
	where we have written $\daemon$ to mark roots of identical subtrees.	
	Notice that any branch hitting $\daemon$ infinitely often will have an infinitely progressing trace alternating between $c$ and $d$, by the underlined progress point $c< d$: thanks to the assumption from the inductive hypothesis, $c$ will not occur in $\lift{\pi'}$ as an eigenvariable for $\lefrul \exists$, $\rigrul \forall$ or $\sub$ steps so the trace along $c$ in $\lift{\pi ' }$ remains intact, cf.~Rmk.~\ref{rmk:traces-occurrences}.
	Any other infinite branch has a tail that is already in $\lift{\pi'}$ or $\lift{\pi_0}$ and so has an infinitely progressing trace by the inductive hypothesis.
\end{proof}

The lemma above gives us a simple proof of the main result of this section:

\begin{proof}
	[Proof of Thm.~\ref{thm:cyclic-sim-ind}]
	Let $\pi$ be a $\IPn{n+1} $ proof of a sequent $\seqar \forall x . \phi$, where $\phi \in \Sin{}n$, under Prop.~\ref{prop:ipn-equals-isn}.
	By Thm.~\ref{thm:free-cut-elim} we may assume that $\pi$ contains only $\Pin{}{n+1}$ cuts, whence we may simply apply Lemma~\ref{lem:ind-to-cyc-trans} to obtain
	a $\CSn n $ proof of $\seqar \phi$.
	(Notice that there are no assumption sequents after applying the lemma since the antecedent is empty.)
	Now the result follows simply by an application of $\rigrul \forall$.
\end{proof}

For the interested reader, we have given an example of this translation in action in App.~\ref{sect:php-case-study}, for a `relativised' version of arithmetic with an uninterpreted function symbol.

\section{Second-order theories for reasoning about automata}
\label{sect:so-theories}
%
We now consider a two-sorted, or `second-order' (SO), version of FO logic, with variables $X,Y,Z, $ etc.\ ranging over sets of individuals, and new atomic formulae $t \in X$, sometimes written $X(t)$.
We also have SO quantifiers binding the SO variables with the natural interpretation.
Again, we give only brief preliminaries, but the reader is encouraged to consult the standard texts \cite{Sim09:reverse-math} and \cite{Hir14:reverse-math}.

We write $\Q_2$ for an appropriate extension of $\Q$ by basic axioms governing sets (see, e.g., \cite{Sim09:reverse-math} or \cite{Hir14:reverse-math}), and write $\Sin{0}{n}$ and $\Pin{0}{n}$ for the classes $\Sin{}{n}$ and $\Pin{}{n}$ respectively, but now allowing free set variables to occur.
%
%

\begin{defi}
	The \textbf{recursive comprehension} axiom schema is the following:\footnote{Notice that there is an unfortunate coincidence of the notation $\CA$ for `comprehension axiom' and `cyclic arithmetic', but the context of use should always avoid any ambiguity.}
	\[
	\CCA{\Din 0 1}
	\ :\  
	\forall \vec y , \vec Y . (\forall x . ( \phi(x,\vec y , \vec Y) \ciff \cnot \psi(x, \vec y , \vec Y ) ) \cimp \exists X . \forall x. (X(x) \ciff \phi (x)))
	\]
	where $\phi,\psi$ are in $\Sin 0 1 $ and $X$ does not occur free in $\phi$ or $\psi$.
	From here, the theory $\RCA$ is defined as $\Q_2 + \CCA{\Din 0 1} + \CIND{\Sin 0 1 }$.
	%
	%
	%
\end{defi}

Since we will always work in extensions of $\RCA$, which proves the totality of primitive recursive functions, we will conservatively add function symbols for primitive recursive functions on individuals whenever we need them.
%
%
We will also henceforth consider FO theories extended by `oracles', i.e.\ uninterpreted set/predicate variables, in order to access `uniform' classes of FO proofs. We write $\ISn n(X)$ for the same class of proofs as $\ISn n$ but where $X$ is allowed to occur as a predicate symbol. The usefulness of a $\ISn{n}(X)$ proof is that we may later substitute $X$ for a FO formula, say $\phi(-) \in \Din{}{m+1}$, to arrive at a $\ISn{m+n}$ proof of size $O(|\phi|)$. This `parametrisation' of a FO proof allows us to avoid unnecessary blowups in proof size induced by `non-uniform' translations from second-order theories; we implicitly use this observation for proof complexity bounds later, particularly in Sect.~\ref{sect:ind-sim-cyc}.

\smallskip

The following result is an adaptation of well known conservativity results, e.g.\ as found in \cite{Sim09:reverse-math,Hir14:reverse-math}, but we include a proof anyway for completeness.

\begin{prop}
	\label{prop:conservativity}
	$\RCA + \CIND{\Sin 0 n}$ is conservative over $\ISn n (X)$.
\end{prop}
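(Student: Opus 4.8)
The plan is to prove the two directions of the conservativity separately, in the standard style of reverse mathematics (cf.\ \cite{Sim09:reverse-math,Hir14:reverse-math}), but keeping track of the set parameter $X$ so that the result is genuinely about $\ISn n(X)$ rather than just $\ISn n$. Recall that $\RCA + \CIND{\Sin 0 n} = \Q_2 + \CCA{\Din 0 1} + \CIND{\Sin 0 n}$, and that $\ISn n(X) = \Q(X) + \CIND{\Sin{}{n}(X)}$ in the one-sorted language with an uninterpreted predicate symbol $X$.

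First I would observe that $\ISn n(X)$ trivially interprets into $\RCA + \CIND{\Sin 0 n}$: read the predicate symbol $X$ as membership in the set variable $X$, so that every $\Sin{}{n}(X)$ formula becomes a $\Sin 0 n$ formula, every axiom of $\Q(X)$ becomes an axiom of $\Q_2$, and every instance of $\Sin{}{n}(X)$-induction becomes an instance of $\CIND{\Sin 0 n}$. Hence if $\ISn n(X) \proves \phi$ for $\phi$ in the FO language (with the oracle $X$), then certainly $\RCA + \CIND{\Sin 0 n} \proves \phi$. So the content is the converse: any FO consequence (over the language with oracle $X$, and in particular any genuinely arithmetical sentence, taking $\phi$ without $X$) of $\RCA + \CIND{\Sin 0 n}$ is already a theorem of $\ISn n(X)$.

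For the converse, the standard device is a \emph{$\Din 0 1$-definable model of second-order arithmetic inside the FO theory}: work in $\ISn n(X)$ and, for each arithmetical formula $\theta(x,\vec y)$ in the language of $\Q(X)$ that is $\Din{}{1}(X)$ (equivalently provably equivalent to a $\Sin{}1(X)$ and a $\Pin{}1(X)$ formula, using only $\Din 0 0$ reasoning), introduce a ``second-order object'' named by $\theta$ together with its parameters. Then verify that, under this interpretation, all axioms of $\RCA + \CIND{\Sin 0 n}$ become theorems of $\ISn n(X)$: the $\Q_2$ set axioms are routine; the comprehension axiom $\CCA{\Din 0 1}$ holds because a $\Din 0 1$ formula with set parameters, once each set parameter is replaced by a $\Din{}1(X)$ first-order formula, is itself $\Din{}{1}(X)$ over $\ISn n(X)$, so the comprehending ``set'' is exactly such a first-order formula (here one uses the basic fact, provable already in $\ISn 1$, that $\Din{}1(X)$ formulae are closed under bounded quantification and Boolean combinations, so the relevant equivalences go through); and the induction schema $\CIND{\Sin 0 n}$ becomes, after the translation, induction for $\Sin{}n(X)$ formulae (the set parameters being substituted by $\Din{}1(X)$ formulae, the overall complexity stays $\Sin{}n(X)$ for $n \geq 1$, and absorbs into $\Sin{}n(X)$-induction; for $n = 0$ the statement is about $\IDn 0(X)$ and is likewise routine). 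Since every axiom of $\RCA + \CIND{\Sin 0 n}$ is provable under the interpretation, any purely first-order theorem $\phi$ of $\RCA + \CIND{\Sin 0 n}$ — whose translation is just $\phi$ itself, as it contains no set quantifiers — is provable in $\ISn n(X)$. Formally this is an interpretation argument: one defines the translation $(\cdot)^*$ on formulae of the SO language, shows $\ISn n(X) \proves \chi^*$ for each axiom $\chi$ of $\RCA+\CIND{\Sin 0 n}$, and concludes by induction on derivations that $\ISn n(X) \proves \phi^*$ for every theorem $\phi$, noting $\phi^* = \phi$ for FO $\phi$.

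The main obstacle — really the only place where care is needed — is checking that the complexity bookkeeping is exactly right: one must confirm that substituting a $\Din{}1(X)$ formula for a set parameter inside a $\Sin 0 n$ or $\Din 0 1$ formula yields, respectively, a formula that is $\Sin{}n(X)$ or $\Din{}1(X)$ \emph{provably in $\ISn n(X)$ itself} (not merely semantically), since a $\Sin 0 1$ matrix with a $\Din{}1(X)$ substituted set oracle must be contracted back down to $\Sin{}1(X)$ using the $\Pin{}1(X)$ presentation on negative occurrences and the $\Sin{}1(X)$ presentation on positive ones, and this requires $\Sin 1$-collection-type reasoning available in $\ISn n$ for $n \geq 1$ (for $n=0$ one restricts attention to bounded formulae where the manipulation is elementary). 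Everything else — the $\Q_2$ set axioms, the closure of the interpretation under logical rules, the observation $\phi^* = \phi$ for first-order $\phi$ — is entirely routine and well-trodden, so I would only sketch it.
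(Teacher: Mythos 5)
Your proposal is correct in substance but takes a genuinely different route from the paper. The paper's proof is purely proof-theoretic: it reformulates $\CCA{\Din 0 1}$ as initial sequents for fresh set constants $X^{\vec t,\vec Y}_{\phi,\psi}$ that Skolemise away the existential set quantifier of comprehension, applies free-cut elimination (Thm.~\ref{thm:free-cut-elim}) so that the surviving second-order material consists only of these constants and free set variables (no set quantifiers remain anywhere in the derivation), and then syntactically substitutes $\phi$ for positive and $\psi$ for negative occurrences of each constant, sending leftover set variables other than $X$ to $\top$. This yields an explicit, effective proof transformation, which is what the paper has in mind when it subsequently remarks on the possibly non-elementary blowup of passing from SO to FO proofs. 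Your interpretation/definable-model argument is the textbook route (cf.\ \cite{Sim09:reverse-math,Hir14:reverse-math}) and establishes the same statement; it avoids any normalisation of the SO derivation, at the cost of losing the explicit handle on the resulting FO proof. Your complexity bookkeeping --- that substituting $\Din{}{1}(X)$ definitions for set parameters keeps $\Sin 0 n$ matrices provably $\Sin{}{n}(X)$, using $\Sigma_1$-collection available in $\ISn n$ for $n\geq 1$ --- is exactly the right point of care, and is the same point the paper's substitution step relies on implicitly.

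One step you should make explicit: a \emph{bound} set variable cannot literally be ``substituted by a formula''. To interpret the existential set quantifier of $\CCA{\Din 0 1}$ (and any set quantifiers arising in intermediate formulae of the derivation) syntactically, you must quantify over \emph{codes} of $\Din{}{1}(X)$ definitions and evaluate membership via a partial $\Sigma_1(X)$ satisfaction predicate, relativising the SO quantifiers to the ($\Pi_2$) guard asserting that the code yields a total, consistent $\Sigma_1/\Pi_1$ pair; alternatively, run the argument model-theoretically by expanding an arbitrary model of $\ISn n(X)$ with its $\Din{}{1}(X)$-definable sets. Either patch is standard, and the guard never enters an induction formula (those are arithmetical), so your complexity analysis survives; but as written, ``introduce a second-order object named by $\theta$'' only accounts for free set variables. (Note also that the statement is only sensible for $n\geq 1$, since $\RCA$ already contains $\CIND{\Sin 0 1}$; the paper only ever invokes it for $n\in\{1,2\}$, so your claim that the $n=0$ case is routine should be dropped.)
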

\begin{proof}
	[Proof sketch]
	First we introduce countably many fresh set symbols $X^{\vec t, \vec Y}_{\phi, \psi}$, indexed by $\Sin 0 1$ formulae $\phi(x, \vec x, \vec X), \psi(x, \vec x , \vec X)$ with all free variables indicated, FO terms $\vec t$ with $|\vec t | = | \vec x|$ and SO variables $\vec Y$ with $|\vec Y| = | \vec X|$. 
	These will serve as witnesses to the sets defined by comprehension.
	We replace the comprehension axioms by initial sequents of the form:
	\begin{equation}
	\label{eqn:ca-init-pos}\vlinf{}{}{\Gamma, \forall x . ( \phi(x,\vec t , \vec Y) \ciff \cnot\psi(x, \vec t , \vec Y ) ),\phi(t,\vec t , \vec Y) \seqar  t \in X^{\vec t, \vec Y}_{\phi, \psi} , \Delta }{}
	\end{equation}
	\begin{equation}
	\label{eqn:ca-init-neg}
	\vlinf{}{}{\Gamma, \forall x . ( \phi(x,\vec t , \vec Y) \ciff \cnot\psi(x, \vec t , \vec Y ) ), t \in X^{\vec t, \vec Y}_{\phi, \psi}  \seqar \phi(t,\vec t , \vec Y) ,\Delta }{}
	\end{equation}
	
	\medskip
	\noindent
	It is routine to show that these new initial sequents are equivalent to the comprehension axioms for $\phi,\psi$.
	
	Now we apply free-cut elimination, Thm.~\ref{thm:free-cut-elim}, to a proof in such a system and replace every occurrence of $t \in X^{\vec t , \vec Y}_{\phi, \psi}$ with $\phi(t, \vec t , \vec Y)$,
	and every occurrence of $t \notin X^{\vec t , \vec Y}_{\phi,\psi}$ with $\psi(t, \vec t , \vec Y)$. {(Recall here that we assume formulae are in De Morgan normal form.)} 
	Any comprehension initial sequents affected by this replacement become 
 purely logical theorems.
	%
	Furthermore, any induction formulae remain $\Sin{0}{n}$, provably in pure logic, thanks to our consideration of whether $ X^{\vec t , \vec Y}_{\phi, \psi}$ occurs positively or negatively.
	Any extraneous free set variables in induction steps (except $X$), e.g.\ $Y$, may be safely dealt with by replacing any atomic formula $Y(s)$ with $\top$.
	The resulting proof is in $\ISn n (X)$.
\end{proof}
It is worth pointing out that, in general, the transformation from a SO proof to a FO proof can yield a possibly non-elementary blowup in the size of proofs, due to, e.g., the application of (free-)cut elimination.

\subsection{Formalisation of B\"uchi acceptance}
From now on we will be rather informal when talking about finite objects, e.g.\ automata, finite sequences, or even formulae.
In particular we may freely use such meta-level objects within object-level formulae when, in fact, we are formally referring to their `G\"odel numbers'.
%
Also, statements inside quotations, ``-'', will usually be (provably) recursive in any free variables occurring, i.e.\ $\Din 0 1 $.
This way quantifier complexity is (usually) safely measured by just the quantifiers outside quotations.

We often treat a set symbol $X$ as a binary predicate by interpreting its argument as a pair and using G\"odel's `$\beta$ functions' to primitive-recursively extract its components.
We use such predicates to encode sequences by interpreting $X(x,y)$ as ``the $x$\textsuperscript{th} symbol of $X$ is $y$''; this interpretation presumes we already have the totality and determinism of $X$ as a binary relation.
Formally, for a set $S$ and a set symbol $X$ treated as a binary predicate,
we will write $X \in S^\omega$ for the conjunction of the following two formulae,
\begin{equation}
\label{eqn:tot-oracle}
\forall x . \exists y \in S . X(x,y)
\end{equation}
\begin{equation}
\label{eqn:det-oracle}
\forall x , y , z  . ((X(x,y) \cand X(x,z)) \cimp y=z)
\end{equation}
i.e.\ $X$ is, in fact, the graph of a function $\Nat \to S$.
When we know that these formulae hold true for $X$, we may construe the expression $X(x)$ as a term in formulae, for instance writing $\phi(X(x))$ as shorthand for $\exists y . (X(x,y) \cand \phi(y))$ or, equivalently, $\forall y . (X(x,y) \cimp \phi(y))$.



\begin{defi}
	[Language membership]
	\label{dfn:lang-memb-so}
	Let $\mathcal A  = ( A, Q, \delta, q_0 , F )$ be a NBA and treat $X$ as a binary predicate symbol.
	We define the formula
	$X \in \lang (\mathcal A)$ as:
	\begin{equation}
	\label{eqn:nd-memb}
	X \in A^\omega  \ \cand \ \exists Y \in Q^\omega . 
	\left(
	\begin{array}{rl}
	& Y(0, q_0) \\
	\cand & \forall x . \ (Y(x), X(x) , Y(\succ x) ) \in \delta \\
	\cand & \forall x . \exists x' > x . \ Y(x') \in F
	\end{array}
	\right)
	\end{equation}
	If $\mathcal A$ is deterministic and $X \in A^\omega$, we write $q_X (x,y)$ for ``$y$ is the $x$\textsuperscript{th} state of the run of $X$ on $\mathcal A$'', which is provably recursive in $\RCA$.
	Similarly to 
	before,
	 we may write $\phi (q_X(x))$ as shorthand for $\exists y . (q_X(x,y) \cand \phi (y))$ or, equivalently in $\RCA$, for $\forall y . (q_X(x,y) \cimp \phi(y) )$.
	For DBA, we alternatively define $X \in \lang (\mathcal A)$ as:
	\begin{equation}
	\label{eqn:det-acc}
	X \in A^\omega
	\ \cand \ 
	\forall x . \exists x'> x .\  q_X(x') \in F
	\end{equation}
\end{defi}

\noindent
This `double definition' will not be problematic for us, since $\RCA$ can check if an automaton is deterministic or not and, if so, even prove the equivalence between the two definitions:

\begin{prop}
	$\RCA \proves \forall \text{ DBA } \mathcal A . (\eqref{eqn:nd-memb} \ciff \eqref{eqn:det-acc})$.
\end{prop}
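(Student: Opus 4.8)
The plan is to reason informally inside $\RCA$. Fix a code for a DBA $\mathcal A = (A, Q, \delta, q_0, F)$ and a binary predicate $X$; since the whole argument is uniform in the code, the leading quantifier ``$\forall$ DBA $\mathcal A$'' presents no difficulty, as ``$a$ codes a DBA'' and ``$\delta$ is the graph of a total function $Q \times A \to Q$'' are $\Din 0 1$ predicates of the code $a$ that $\RCA$ can verify. Both \eqref{eqn:nd-memb} and \eqref{eqn:det-acc} conjoin $X \in A^\omega$, so we may also assume $X \in A^\omega$, the biconditional being trivial (both sides false) otherwise. The key preliminary is that $\RCA$ proves the existence of the run $q_X$ as a set: since $\delta$ is total and functional, the run is given by the primitive recursion $q_X(0) = q_0$, $q_X(\succ x) = \delta(q_X(x), X(x))$, which as noted in the text is provably recursive in $\RCA$ (uniformly in the code of $\mathcal A$); hence by $\Din 0 1$ comprehension relative to $X$ there is a set $Y_0$ with $\forall x,y.\,(Y_0(x,y) \ciff q_X(x,y))$, and $\RCA$ proves $Y_0 \in Q^\omega$, $Y_0(0,q_0)$ and $\forall x.\, (Y_0(x), X(x), Y_0(\succ x)) \in \delta$.

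For the direction \eqref{eqn:det-acc} $\cimp$ \eqref{eqn:nd-memb}, take $Y := Y_0$ as the witness for the set existential in \eqref{eqn:nd-memb}: the first two conjuncts hold by the properties of $Y_0$ just recorded, and the third is literally the acceptance clause $\forall x.\, \exists x' > x.\, q_X(x') \in F$ assumed in \eqref{eqn:det-acc}. For the direction \eqref{eqn:nd-memb} $\cimp$ \eqref{eqn:det-acc}, suppose $Y \in Q^\omega$ witnesses \eqref{eqn:nd-memb}. I would show $\forall x.\, Y(x) = q_X(x)$ by induction on $x$; unwound as $\exists y.\,(Y(x,y) \cand q_X(x,y))$ this formula is $\Sin 0 1$ (in fact $\Din 0 1$, using totality and determinism of $Y$) in the parameters $X, Y$, so the induction is available from $\CIND{\Sin 0 1}$. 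The base case uses $Y(0,q_0)$ and $q_X(0) = q_0$; in the inductive step, from $Y(x) = q_X(x)$, the transition clause $(Y(x), X(x), Y(\succ x)) \in \delta$, the corresponding clause for $q_X$, and $\delta$ being a function, one obtains $Y(\succ x) = \delta(Y(x), X(x)) = \delta(q_X(x), X(x)) = q_X(\succ x)$. With $\forall x.\, Y(x) = q_X(x)$ in hand, the acceptance clause of \eqref{eqn:nd-memb} for $Y$ transfers to $\forall x.\, \exists x' > x.\, q_X(x') \in F$, which together with $X \in A^\omega$ is exactly \eqref{eqn:det-acc}.

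There is no serious obstacle here; the only point needing a little care is that $\RCA$ really does prove that the run exists as a set together with its defining properties, which rests on it being provably recursive in $\RCA$ (asserted in the text) plus $\Din 0 1$ comprehension with set parameters. The remaining content — that in a deterministic automaton the run is the unique $Q^\omega$-sequence respecting the transition function — is exactly the $\Sin 0 1$ induction above, and everything else is unwinding Dfn.~\ref{dfn:lang-memb-so}.
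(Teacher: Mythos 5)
Your proposal is correct and follows essentially the same route as the paper's proof: the forward direction is the induction showing an accepting run $Y$ must agree with $q_X$ at every position, and the converse defines the witnessing run by $\Din 0 1$ comprehension from the graph of $q_X$ and verifies it is a correct run by induction. You simply spell out in more detail the points the paper leaves as a sketch (the complexity of the induction formula and the provable recursiveness of $q_X$).
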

\begin{proof}
	[Proof sketch]
	Let $\mathcal A = (A, Q , \delta, q_0, F )$ be a deterministic automaton.
	For the left-right implication let $Y\in Q^\omega$ be an `accepting run' of $X$ on $\mathcal A$ and use induction to show that $Y(x, q_X(x))$.
	For the right-left implication, we use comprehension to define an `accepting run' $Y\in Q^\omega$ by:
	\(
	Y(x,q)
	\ \ciff \ 
	q_X (x,q)
	\).
	Clearly the definition of $Y$ is $\Din 0 1$, and we can show that such $Y$ is a `correct run' by induction on $x$.
\end{proof}

Notice that, for a deterministic automaton, the formula for acceptance is \emph{arithmetical} in $X$, i.e.\ there are no SO quantifiers.
This will be rather important for uniformity in the simulation of cyclic proofs in the next section.

\subsection{Formalisations of some automaton constructions}

Recall that we may freely add symbols for primitive recursive functions to our language.
Since we rely on various results from \cite{KMPS16:buchi-reverse} as the `engine' behind some of our proofs, we will use their notions for manipulating automata.

For NBA $\mathcal A , \mathcal A'$, we write $\mathcal A^c$ and $\mathcal A \sqcup \mathcal A'$ to denote the complement and union constructions of automata from \cite{KMPS16:buchi-reverse} (Sects.~5 and 6 resp.).
%
We also write $\Empty(\mathcal A)$ for the recursive algorithm from \cite{KMPS16:buchi-reverse} (Sect.~6), expressed as a $\Sin{}1$ formula in $\mathcal A$, determining whether $\mathcal A$ computes the empty language.
It will also be useful for us later, in order to bound logical and proof complexity, to notice that DBA can already be complemented in $\RCA$. This is a rather unsurprising result but does not appear in \cite{KMPS16:buchi-reverse}, so we give it here.

For a DBA $\mathcal A = (A, Q , \delta , q_0 , F)$, we define a complementary NBA $ \mathcal A^c $ as follows, 
$$\mathcal A^c \ \dfn \ (A, (Q \times \{0\}) \cup ((Q\setminus F) \times \{1\}) , \delta^c ,(q_0 , 0), (Q \setminus F) \times \{1\} )$$ 
where $\delta^c \subseteq (Q^c \times A) \times Q^c $ (writing $Q^c$ for $Q \times \{0\} \cup (Q\setminus F) \times \{1\}$) is defined as:
\[
\begin{array}{rl}
& \{ ((q,0), a , (q',0)) \ : \ (q,a,q') \in \delta \} \\
\cup & \{ ( (q,i), a , (q',1) ) \ : \ (q,a,q') \in \delta, i = 0,1 ,  q' \in Q\setminus F  \}
\end{array}
\]
The idea behind this construction is that a run of $\mathcal A^c$ follows $\mathcal A$ freely for some finite time (in the `$0$' component), after which it may no longer visit final states of $\mathcal A$ (once in the `$1$' component). The determinism of $\mathcal A$ guarantees that such a word is not accepted by it.

By directly inspecting the definitions from \cite{KMPS16:buchi-reverse}, and DBA complementation above, we have the following properties:

\begin{obs}
	\label{obs:complexity-aut-constructions}
	For NBA $\mathcal A, \mathcal A'$ we have that:
	\begin{enumerate}
		\item $\Empty(\mathcal A)$ is a polynomial-time predicate in $\mathcal A$.
		\item $\mathcal A \sqcup \mathcal A' $ is constructible in polynomial-time from $\mathcal A$ and $\mathcal A'$.
		\item $\mathcal A^c$ is constructible in exponential-time from $\mathcal A$.
	\end{enumerate}
	For a DBA $\mathcal A$, we have that:
	\begin{enumerate}
		\setcounter{enumi}{3}
		\item $\mathcal A^c$ is constructible in polynomial-time from $\mathcal A$.
	\end{enumerate}
\end{obs}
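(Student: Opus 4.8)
The plan is simply to unwind each of the four constructions and read off the claimed resource bound; there is no new idea beyond bookkeeping of state-set sizes together with the remark that the conditions defining the transition relations are decidable by inspection. I would treat the items in the order stated.

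For item (1), I would recall that $\Empty(\mathcal A)$ is equivalent to the assertion that $\mathcal A$ has no reachable accepting state lying on a cycle, i.e.\ no lasso-shaped accepting run. This is a purely graph-theoretic condition on the transition graph of $\mathcal A$: compute the states reachable from $q_0$, compute the strongly connected components of the restriction to those states, and check whether some reachable non-trivial SCC (or a self-loop) meets $F$. Each step is polynomial in $|\mathcal A|$, so $\Empty$ is a polynomial-time predicate, and the $\Sin{}{1}$ formula of \cite{KMPS16:buchi-reverse} is just one first-order presentation of this decision procedure. For item (2), I would observe that the union automaton $\mathcal A \sqcup \mathcal A'$ of \cite{KMPS16:buchi-reverse} (Sect.~6) has state set of size $O(|Q| + |Q'|)$ — essentially a disjoint copy of each state set with the obvious bookkeeping for the shared initial behaviour — alphabet $A$, and transition relation obtained componentwise from $\delta$ and $\delta'$; all of this is computable from $\mathcal A,\mathcal A'$ in time polynomial in $|\mathcal A| + |\mathcal A'|$.

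For item (3), I would point to the complementation construction of \cite{KMPS16:buchi-reverse} (Sect.~5), which follows the usual Ramsey/subset-style B\"uchi complementation: its states are (pairs of) subsets of $Q$, hence their number is $2^{O(|Q|^2)}$, and membership of a triple in the complement transition relation is decided by evaluating simple reachability conditions between such subsets, each polynomial in $|\mathcal A^c|$. Hence $\mathcal A^c$ is constructible in time exponential in $|\mathcal A|$. For item (4), I would inspect the explicit DBA complement $\mathcal A^c$ defined just above the observation: its state set $Q\times\{0\}\cup (Q\setminus F)\times\{1\}$ has size at most $2|Q|$, and membership of a triple in $\delta^c$ is decided by the two displayed clauses, each a constant-time check given $\delta$; so $\mathcal A^c$ is constructible in polynomial time.

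The only point that takes a moment of thought is item (1): although \cite{KMPS16:buchi-reverse} packages emptiness as a $\Sin{}{1}$ statement, one must note that a non-empty B\"uchi automaton is always witnessed by a lasso of length polynomial in $|\mathcal A|$, so the decision problem lies in polynomial time rather than merely in $\np$ — this is the standard fact underlying model checking of $\omega$-automata. Everything else is immediate from the sizes of the constructed automata and the fact that the defining side-conditions are evaluable within the stated time bounds. I therefore expect this verification of item (1), rather than the routine state-counting for items (2)--(4), to be the only place where any argument beyond direct inspection is needed.
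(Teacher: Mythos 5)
Your proposal is correct and matches the paper's intent exactly: the paper offers no proof beyond the phrase ``by directly inspecting the definitions from \cite{KMPS16:buchi-reverse}, and DBA complementation above'', and your item-by-item inspection (lasso/SCC argument for emptiness, state-counting for union, the $2^{O(|Q|^2)}$ bound for the Ramsey-style NBA complement, and the $\leq 2|Q|$-state DBA complement) is precisely the routine verification being left implicit. The only cosmetic slip is describing the NBA complement's states as ``(pairs of) subsets of $Q$'' rather than transition matrices $Q\times Q\to\{0,1,\final\}$, but this does not affect the exponential bound.
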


\noindent
None of these bounds are surprising, due to known bounds on the complexity of union, complementation and emptiness checking for (non)deterministic B\"uchi automata.
Nonetheless it is important to state them for the particular constructions used in this work for bounds on proof complexity later.

\begin{lem}
	\label{lem:aut-clos-props-in-so-arith}
	From \cite{KMPS16:buchi-reverse} we have the following:
	\begin{enumerate}
		\item\label{item:emptiness-rca} $ \RCA\proves \forall \text{ NBA }\mathcal A . (\Empty (\mathcal A) \ciff \forall X \in A^\omega. X \notin \lang (\mathcal A))$.
		\item\label{item:union-rca} $ \RCA\proves \forall \text{ NBA } \mathcal A_1 , \mathcal A_2 . (X \in \lang (\mathcal A_1 \sqcup \mathcal A_2) \ciff (X \in \lang (\mathcal A_1)  \corr X \in \lang (\mathcal A_2) ))$. 
		\item\label{item:compl-rca-s2ind} $\RCA + \CIND{\Sin 0 2} \proves  \forall \text{ NBA } \mathcal A . ( X \in A^\omega \cimp (X \in \lang (\mathcal A^c) \ciff X \notin \lang (\mathcal A)))$.
	\end{enumerate}
	We also have that:
	\begin{enumerate}
		\setcounter{enumi}{3}
		\item\label{item:compl-dba-rca}	$\RCA \proves \forall \text{ DBA } \mathcal A .\ ( X\in A^\omega \cimp  (X \in \lang (\mathcal A^c) \ciff X \notin \lang (\mathcal A ) )) $.
	\end{enumerate}
\end{lem}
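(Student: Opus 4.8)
The plan is as follows. Items~(1) and~(2) are direct specialisations of the corresponding statements in \cite{KMPS16:buchi-reverse}, and item~(3) is their complementation theorem for NBA (which indeed uses $\CIND{\Sin 0 2}$ in an essential way, see below), so the only genuinely new content is item~(4), which I shall prove directly in $\RCA$ from the explicit definition of $\mathcal A^c$ given above. Fix a complete DBA $\mathcal A = (A,Q,\delta,q_0,F)$ and a set $X$ with $X \in A^\omega$, and write $\pi_1,\pi_2$ for the projections of a pair. Recall that $q_X(x,y)$ is provably recursive in $\RCA$; I use the deterministic acceptance formula \eqref{eqn:det-acc} for $\mathcal A$ and the nondeterministic one \eqref{eqn:nd-memb} for the NBA $\mathcal A^c$, the equivalence of the two for $\mathcal A$ being available from the previous proposition.

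For the right-to-left direction, assume $X \notin \lang(\mathcal A)$. Since $X \in A^\omega$, unfolding \eqref{eqn:det-acc} and pushing negations through quantifiers in pure logic yields some $x_0$ with $q_X(x') \notin F$ for all $x' > x_0$, so, putting $N \dfn x_0 + 1 \geq 1$, we have $q_X(x) \notin F$ for all $x \geq N$. Using $\CCA{\Din 0 1}$ I then form the set $Y$ given by
\[
Y(x,(q,i)) \quad \ciff \quad \left( q_X(x) = q \ \cand \ \bigl( (x < N \cand i = 0) \ \corr \ (x \geq N \cand i = 1) \bigr) \right),
\]
whose defining formula is recursive in the parameters $q_X$ and $N$. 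It then remains to check that this $Y$ witnesses $X \in \lang(\mathcal A^c)$ via \eqref{eqn:nd-memb}: that $Y \in (Q^c)^\omega$ (totality and functionality are immediate, and $q_X(x) \notin F$ for $x \geq N$ keeps the value in $Q^c$); that $Y(0) = (q_0,0)$ (as $q_X(0) = q_0$ and $N \geq 1$); that consecutive values of $Y$ lie in $\delta^c$ (a three-way case split on $x < N-1$, $x = N-1$ and $x \geq N$, each time reading off the required triple of $\delta$ from the definition of $q_X$ and, in the last two cases, using $q_X(x) \notin F$ to meet the side condition in $\delta^c$); and that $Y$ visits $(Q\setminus F)\times\{1\}$ infinitely often (it does so at every $x \geq N$). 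All of this is routine arithmetic, formalisable in $\RCA$.

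For the left-to-right direction, assume $X \in \lang(\mathcal A^c)$, witnessed by some $Y \in (Q^c)^\omega$ satisfying the matrix of \eqref{eqn:nd-memb}. The key step is the claim that $\pi_1(Y(x)) = q_X(x)$ for every $x$, proved by induction on $x$: the base case is $Y(0) = (q_0,0)$ together with $q_0 = q_X(0)$, and the step uses $(Y(x), X(x), Y(\succ x)) \in \delta^c$, which by the definition of $\delta^c$ forces $(\pi_1 Y(x), X(x), \pi_1 Y(\succ x)) \in \delta$, so that $\pi_1 Y(\succ x) = q_X(\succ x)$ by determinism of $\mathcal A$. Separately, since $Y$ visits $(Q\setminus F)\times\{1\}$ infinitely often there is $N$ with $\pi_2 Y(N) = 1$, and since no transition of $\delta^c$ leaves the ``$1$''-component, a further induction gives $\pi_2 Y(x) = 1$ --- hence $Y(x) \in (Q\setminus F)\times\{1\}$ and so $q_X(x) = \pi_1 Y(x) \notin F$ --- for all $x \geq N$. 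Instantiating $x \dfn N$ in \eqref{eqn:det-acc} then shows $X \notin \lang(\mathcal A)$, completing the equivalence.

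I do not anticipate a real obstacle. The two points that need a little care are: (i) that both inductions above remain available in $\RCA$, which they do, since $q_X$ is recursive and $Y$ appears only as a set parameter, so the induction formulae are $\Din 0 1$ in their parameters and covered by $\CIND{\Sin 0 1}$; and (ii) the bookkeeping in the $\delta^c$ case analyses, which is entirely mechanical given the explicit description of $\mathcal A^c$. By contrast, item~(3) for arbitrary NBA genuinely needs $\CIND{\Sin 0 2}$ in \cite{KMPS16:buchi-reverse}: the correctness of their NBA complementation rests on an additive-Ramsey / well-foundedness argument of strictly higher quantifier complexity, and it is precisely the determinism of $\mathcal A$ that lets us bypass it here.
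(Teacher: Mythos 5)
Your proposal is correct and follows essentially the same route as the paper: items (1)--(3) are cited from \cite{KMPS16:buchi-reverse}, and item (4) is proved by defining the witnessing run of $\mathcal A^c$ via $\Din 0 1$-comprehension from $q_X$ and the threshold beyond which no final state is visited (right-to-left), and by $\Sin 0 1$-induction showing any accepting run of $\mathcal A^c$ projects onto $q_X$ and is eventually trapped in the $1$-component (left-to-right). The only differences are cosmetic bookkeeping (your $N = x_0+1$ versus the paper's $x_0$, and explicit projections).
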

\begin{proof}
	\ref{item:emptiness-rca}, \ref{item:union-rca} and \ref{item:compl-rca-s2ind} follow from \cite{KMPS16:buchi-reverse}, namely from Prop.~6.1 and Lemma~5.2, 
	so we give a proof of \ref{item:compl-dba-rca}.
	
	Working in $\RCA$, let $\mathcal A = (A, Q , \delta , q_0 , F)$ be a DBA. 
	For the right-left implication,
	if $X \notin \lang (\mathcal A)$ then $\exists x . \forall x' > x .\  q_X(x) \notin F$, so let $x_0$ witness this existential.
	Now, define by comprehension the run $Y \in (Q^c)^\omega$ as follows:
	\[
	Y(x,y)
	\ \ciff \ 
	((x \leq x_0 \cand y = (q_X(x), 0 )) \corr ( x > x_0 \cand y = (q_X (x), 1) ) )
	\]
	Now, indeed $Y(0, (q_0, 0))$, since $q_X (0) = q_0$, and $Y$ is a correct run of $X$ on $\mathcal A^c$ by considering separately the cases $x< x_0$, $x=x_0$ and $x> x_0$. Finally, for any $x$, $Y$ hits a final state at $\max (x, x_0) + 1 > x$.

	For the left-right implication, suppose $X \in \lang (\mathcal A^c)$ and let $Y\in  (Q^c)^\omega$ be an accepting run.
	By induction we have $\forall x . (Y(x) = (q_X (x), 0) \corr Y(x) = (q_X (x), 1) )$.
	Now, $Y$ must eventually hit an accepting state of $\mathcal A^c$, i.e.\ in the $1$-component, say at position $x_0$.
	Again by induction, we may show that $Y$ remains in the $1$-component of $\mathcal A^c$ after $x_0$, and hence $q_X(x) \notin F$ for $x\geq x_0$, as required.
\end{proof}

\section{An exponential simulation of $\CA$ in $\PA$}
\label{sect:ind-sim-cyc}
In this section we will adapt Simpson's approach in \cite{Sim17:cyclic-arith} for showing that $\CA \subseteq \PA$ into a \emph{uniform} result in $\PA$. 
This essentially constitutes a formalisation of the soundness argument, Prop.~\ref{prop:sound-cyclic}, in a SO theory conservative over the target fragment of $\PA$.
The `uniformity' we aim for ensures that the possibly non-elementary blowup translating from SO proofs to FO proofs occurs once and for all for a single arithmetical theorem. 
Only then do we instantiate the theorem (inside $\PA$) by the cyclic proof in question, leading to an only elementary blowup.

To give an idea of how the result is obtained, and how our exposition refines that of \cite{Sim17:cyclic-arith}, we take advantage of the following aspects of the soundness argument for cyclic proofs:
\begin{enumerate}[label={(\alph*)}]
	\item\label{item:branch-aut-det} The B\"uchi automaton accepting all infinite branches of a cyclic proof is, in fact, deterministic, and so we can express acceptance of an $\omega$-word in this automaton arithmetically.
	\item\label{item:invalid-branch-uniform} A branch of invalid sequents and corresponding assignments, as in the proof of Prop.~\ref{prop:sound-cyclic}, can be uniformly generated from an initial unsatisfying assignment by an arithmetical formula.
	\item\label{item:finite-closure-ordinals} Since all inductions are only up to $\omega$, we need only \emph{arbitrarily often} progressing traces, rather than explicit infinitely progressing traces.
\end{enumerate}
\noindent
Together, these properties give us just enough `wiggle room' to carry out the soundness argument in a sufficiently uniform way.

Throughout this section we will also carefully track how much quantifier complexity is used in theorem statements, since we will later modify this argument to obtain a converse result to Thm.~\ref{thm:cyclic-sim-ind}. 


\subsection{An arithmetically uniform treatment of automata}
%
%
%
%
%

%
%

Referring to \ref{item:finite-closure-ordinals} above, we define an arithmetical corollary of NBA acceptance that is nonetheless sufficiently strong to formalise the soundness argument for cyclic proofs:
\begin{defi}
	[Arithmetic acceptance]
	\label{dfn:aracc}
	Let $\mathcal A = (A,Q,\delta, q_0, F)$ be a NBA and $X\in A^\omega$, and temporarily write:
	\begin{itemize}
		\item $F(x) \dfn $ ``$x$ is a finite run of $X$ on $\mathcal A$ ending at a final state''.
		\item $E(z,x,y  ) \dfn $ ``$z$ extends $x$ to a finite run of $X$ on $\mathcal A$ hitting $\geq y$ final states''
	\end{itemize}
	We define:
	\begin{equation}
	\label{eqn:aracc-dfn}
	\ArAcc (X, \mathcal{A})
	\  \dfn \  
	X\in A^\omega \cand 
	\exists x . 
	\left(
	F(x)
	\cand
	\forall y . \exists z . E(z,x,y)
	\right)
	\end{equation}
\end{defi}


\noindent
{For intuition, we may consider $\omega$-regular expressions rather than automata, which are of the form $\sum\limits_{i< n} e_i \cdot f_i^\omega$, for some $n\in \Nat$, without loss of generality. 
	The formula $\ArAcc$ for this expression essentially recognises infinite words that have prefixes of the form $\sigma\tau_k$ for some $\sigma \in \lang(e_i)$, for some $i<n$, and $\tau_k \in \lang (f_i^k)$ for each $k \in \Nat$.
	Clearly the condition $\ArAcc$ is a (provable) consequence of acceptance itself:
	
	\begin{prop}
		\label{prop:arith-acc}
		$\RCA \proves \forall \mathcal A . (X \in \lang (\mathcal A) \cimp \ArAcc (X, \mathcal A))$.
	\end{prop}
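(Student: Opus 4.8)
The plan is to reason inside $\RCA$, fix an arbitrary NBA $\mathcal A = (A,Q,\delta,q_0,F)$, and assume $X \in \lang(\mathcal A)$; we must derive $\ArAcc(X,\mathcal A)$. By Definition~\ref{dfn:lang-memb-so}, unwinding $X \in \lang(\mathcal A)$ gives us $X \in A^\omega$ together with a witnessing run $Y \in Q^\omega$ such that $Y(0,q_0)$, the transitions are respected at every step, and $Y$ hits a final state infinitely often. The first conjunct of $\ArAcc(X,\mathcal A)$, namely $X \in A^\omega$, is then immediate. It remains to produce the $\exists x$ in \eqref{eqn:aracc-dfn}: a finite run $x$ of $X$ on $\mathcal A$ ending at a final state such that $\forall y . \exists z . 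E(z,x,y)$.

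First I would use the infinitely-often condition on $Y$ to pick the least position $p > 0$ with $Y(p) \in F$ (using $\Sin 0 1$ induction / least-number principle, available in $\RCA$), and take $x$ to be the code of the finite prefix run $\langle Y(0), Y(1), \dots, Y(p)\rangle$; since $X \in A^\omega$ this is primitive recursive in $X$, $Y$ and $p$, and $F(x)$ holds since it ends at $Y(p) \in F$. For the clause $\forall y . \exists z . E(z,x,y)$, fix $y$. Since $Y$ visits $F$ infinitely often, there is a position $q \geq p$ such that the prefix $\langle Y(0),\dots,Y(q)\rangle$ hits at least $y$ final states — formally, the set of positions $\leq q$ at which $Y \in F$ has size $\geq y$; this can be established by a straightforward induction on $y$ using the infinitely-often hypothesis to find a new final position beyond the previous one each time. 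Take $z$ to be (the code of) the segment $\langle Y(p), Y(p{+}1), \dots, Y(q)\rangle$ (or the whole prefix up to $q$, depending on the exact reading of ``extends''); then $z$ extends $x$ to a finite run of $X$ on $\mathcal A$ hitting at least $y$ final states, so $E(z,x,y)$ holds.

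The only mildly delicate points are bookkeeping ones: being careful that all the auxiliary finite objects (finite runs, their codes, the counting of final states along a prefix) are genuinely primitive recursive in the data, so that the inner matrix $F(x) \cand \forall y . \exists z . E(z,x,y)$ has the advertised quantifier complexity and the whole argument goes through in $\RCA$ without extra comprehension or induction beyond $\Sin 0 1$; and making sure the iterated ``find the next final position'' argument is phrased as a single induction on $y$ rather than an appeal to a choice principle. The main obstacle — such as it is — is simply matching the informal phrases ``finite run ending at a final state'' and ``extends $x$ to a finite run hitting $\geq y$ final states'' to explicit $\Din 0 1$ definitions and checking that the witnesses extracted from $Y$ literally satisfy them; once that is set up, the implication is essentially immediate from the definition of B\"uchi acceptance, which is why the statement is only a ``(provable) consequence'' of acceptance.
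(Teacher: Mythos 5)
Your proposal is correct and follows essentially the same route as the paper's proof: extract the accepting run $Y$, prove by $\Sin 0 1$-induction on $y$ that some finite prefix of $Y$ hits at least $y$ final states, take $x$ to be the least/shortest prefix ending at a final state (so $F(x)$ holds), and observe that the longer prefixes witnessing $E(z,x,y)$ extend $x$ by leastness and the fact that $Y$ is a genuine sequence. The bookkeeping caveats you flag (coding of finite runs, reading of ``extends'') are exactly the points the paper also handles implicitly.
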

	
	\begin{proof}
		Working in $\RCA$, fix $\mathcal A = (A, Q , \delta, q_0, F)$ and suppose $X \in \lang(\mathcal A)$.
		Let $Y\in Q^\omega$ be an `accepting run' of $X$ on $\mathcal A$, cf.~\eqref{eqn:nd-memb}.
		We may show that,
		\begin{equation}
		\label{eqn:fin-pref-run}
		\exists z \in Q^* . 	\text{``$z$ is a finite prefix of $Y$ hitting $\geq y$ final states in $\mathcal A$''}
		\end{equation} 
		by $ {\Sin{0}{1}} $-induction on $y$, appealing to the unboundedness of final states in $Y$
		for both the base case and the inductive steps.
		Now, in the definition of $\ArAcc$ in \eqref{eqn:aracc-dfn}, we set $x$ to be the least such $z$ for  which \eqref{eqn:fin-pref-run}$[1/y]$ holds (again by induction), so that $F(x)$ from \eqref{eqn:aracc-dfn} holds.
		Thus, for any $y \in \Nat$, we may find an appropriate $z$ making $E(z,x,y)$ in \eqref{eqn:aracc-dfn} true by appealing to \eqref{eqn:fin-pref-run}.
		The fact that $z$ extends $x$ follows from leastness of $x$ and that $Y$ is a sequence, cf.~\eqref{eqn:tot-oracle} and \eqref{eqn:det-oracle}.	
	\end{proof}

	%
	
	Let us write $\mathcal A_1 \sqsubseteq \mathcal A_2$ for $\Empty ( ( \mathcal A_1^c \sqcup \mathcal A_2 )^c )$.
	We may now present our main `uniform' result needed to carry out our soundness proof in FO theories.
	\begin{thm}
		\label{thm:arithmetisation-of-correctness}
		$\RCA + \CIND{\Sin 0 2} $ proves:
		\begin{equation}
		\label{eqn:arith-form-prog-traces}
		\forall\ \text{DBA}\ \mathcal A_1 , \forall \ \text{NBA}\ \mathcal A_2 .\ 
		\left( 
		(\mathcal A_1 \sqsubseteq \mathcal A_2 \cand X \in \lang(\mathcal A_1) )
		\cimp 
		\ArAcc (X, \mathcal A_2 ) 
		\right)
		\end{equation}
		
	\end{thm}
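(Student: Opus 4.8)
The plan is to formalise the soundness argument of Prop.~\ref{prop:sound-cyclic} inside $\RCA + \CIND{\Sin 0 2}$, but phrased entirely in terms of automata rather than proofs, exploiting that $\mathcal A_1$ is deterministic so that ``$X$ is accepted by $\mathcal A_1$'' is arithmetical in $X$ (Dfn.~\ref{dfn:lang-memb-so}, \eqref{eqn:det-acc}), and that the conclusion $\ArAcc(X,\mathcal A_2)$ only asks for \emph{arbitrarily often} progressing behaviour, cf.\ point \ref{item:finite-closure-ordinals}. So, working in $\RCA + \CIND{\Sin 0 2}$, fix a DBA $\mathcal A_1$ and an NBA $\mathcal A_2$, assume $\mathcal A_1 \sqsubseteq \mathcal A_2$, i.e.\ $\Empty((\mathcal A_1^c \sqcup \mathcal A_2)^c)$, and assume $X \in \lang(\mathcal A_1)$. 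First I would unwind the hypothesis $\mathcal A_1 \sqsubseteq \mathcal A_2$: by Lemma~\ref{lem:aut-clos-props-in-so-arith}\ref{item:emptiness-rca} the emptiness statement means $\forall Z . Z \notin \lang((\mathcal A_1^c \sqcup \mathcal A_2)^c)$; then by Lemma~\ref{lem:aut-clos-props-in-so-arith}\ref{item:compl-rca-s2ind} (the one place $\Sin 0 2$-induction is genuinely needed, to complement the \emph{nondeterministic} automaton $\mathcal A_1^c \sqcup \mathcal A_2$) this gives $\forall Z \in A^\omega . Z \in \lang(\mathcal A_1^c \sqcup \mathcal A_2)$, and by Lemma~\ref{lem:aut-clos-props-in-so-arith}\ref{item:union-rca} together with DBA complementation Lemma~\ref{lem:aut-clos-props-in-so-arith}\ref{item:compl-dba-rca} (which is available already in $\RCA$), this is exactly $\forall Z \in A^\omega . (Z \notin \lang(\mathcal A_1) \corr Z \in \lang(\mathcal A_2))$, i.e.\ $\lang(\mathcal A_1) \subseteq \lang(\mathcal A_2)$ provably. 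In particular, instantiating at our $X$ and using $X \in \lang(\mathcal A_1)$, we get $X \in \lang(\mathcal A_2)$.

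At that point it would be tempting to finish immediately by invoking Prop.~\ref{prop:arith-acc}, $\RCA \proves X \in \lang(\mathcal A_2) \cimp \ArAcc(X,\mathcal A_2)$ — and indeed this closes the argument. But the reason the theorem is stated with the hypothesis $\mathcal A_1 \sqsubseteq \mathcal A_2$ rather than directly with $X\in\lang(\mathcal A_2)$ is uniformity/complexity: when this theorem is later instantiated (in Sect.~\ref{sect:ind-sim-cyc}) with $\mathcal A_1 = \mathcal A^\pi_b$ and $\mathcal A_2 = \mathcal A^\pi_t$, the hypothesis ``$\mathcal A_1 \sqsubseteq \mathcal A_2$'' is a $\Sin{}{1}$ statement about the fixed finite object $\pi$ (decidable, by Prop.~\ref{prop:reg-proog-correctness-condition} and Obs.~\ref{obs:complexity-aut-constructions}), whereas ``$X \in \lang(\mathcal A_2)$'' carries an existential SO quantifier (over the run $Y$) and cannot be discharged uniformly. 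So the structure I would actually keep is: derive $\lang(\mathcal A_1)\subseteq\lang(\mathcal A_2)$ from the emptiness hypothesis as above, conclude $X\in\lang(\mathcal A_2)$, then feed this into Prop.~\ref{prop:arith-acc}. The body of Prop.~\ref{prop:arith-acc} already does the real work: from an accepting run $Y$ of $X$ on $\mathcal A_2$ it extracts, by $\Sin 0 1$-induction on $y$, finite prefixes of $Y$ hitting $\geq y$ final states, picks $x$ to be the least prefix hitting $\geq 1$ final state, and verifies $F(x) \cand \forall y \exists z . E(z,x,y)$.

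The main obstacle I expect is the complementation step, Lemma~\ref{lem:aut-clos-props-in-so-arith}\ref{item:compl-rca-s2ind}, which is exactly why $\CIND{\Sin 0 2}$ appears in the statement and cannot be weakened to $\RCA$ alone: verifying that the (possibly nondeterministic) automaton $\mathcal A_1^c \sqcup \mathcal A_2$ is correctly complemented requires the $\Sin 0 2$-induction imported from \cite{KMPS16:buchi-reverse} (Lemma~5.2). Everything else — the union law, the emptiness law, DBA complementation, and Prop.~\ref{prop:arith-acc} itself — lives in $\RCA$, so once that one lemma is in hand the remaining bookkeeping is routine chaining of biconditionals, with care only that all the $X, Z$ are quantified as ranging over $A^\omega$ so that the $\in A^\omega$ side-conditions in \eqref{eqn:nd-memb}, \eqref{eqn:det-acc} and \eqref{eqn:aracc-dfn} line up. One should also note in passing that no clause of the argument uses more than $\Sin 0 2$-induction, so the theorem is correctly placed at $\RCA + \CIND{\Sin 0 2}$; this will matter for the logical-complexity bookkeeping in Sect.~\ref{sect:nonuniform-ind-sim-cyc}.
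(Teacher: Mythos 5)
Your proposal is correct and follows essentially the same route as the paper's proof: the same chain of implications unwinding $\mathcal A_1 \sqsubseteq \mathcal A_2$ via Lemma~\ref{lem:aut-clos-props-in-so-arith}.\ref{item:emptiness-rca}, then \ref{item:compl-rca-s2ind} (the sole use of $\CIND{\Sin 0 2}$), then \ref{item:union-rca} and \ref{item:compl-dba-rca}, yielding $X \in \lang(\mathcal A_2)$ and concluding by Prop.~\ref{prop:arith-acc}. Your additional remarks on why the hypothesis is phrased via $\sqsubseteq$ rather than $X \in \lang(\mathcal A_2)$ correctly anticipate the uniformity considerations of Sect.~\ref{sect:ind-sim-cyc}.
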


	\begin{proof}
		Working in $\RCA + \CIND{\Sin 0 2}$, let $\mathcal A_1$ be a DBA and $\mathcal A_2$ be a NBA such that $X \in \lang (\mathcal A_1)$ and $\mathcal A_1 \sqsubseteq \mathcal A_2$.
		We have:
		\[
		\begin{array}{rll}
		& \Empty( (\mathcal A_1^c \sqcup \mathcal A_2)^c ) & \text{since $\mathcal A_1 \sqsubseteq \mathcal A_2$} \\
		\implies & \forall Y \in A^\omega.\  Y\notin \lang ( (\mathcal A_1^c \sqcup \mathcal A_2)^c ) & \text{by Lemma~\ref{lem:aut-clos-props-in-so-arith}.\ref{item:emptiness-rca}} \\
		\implies & \forall Y \in A^\omega.\  Y \in \lang(\mathcal A_1^c \sqcup \mathcal A_2) & \text{by Lemma~\ref{lem:aut-clos-props-in-so-arith}.\ref{item:compl-rca-s2ind}} \\
		\implies & \forall Y \in A^\omega. (Y \in \mathcal \lang(A_1^c) \corr Y \in \lang(\mathcal A_2)) & \text{ by Lemma \ref{lem:aut-clos-props-in-so-arith}.\ref{item:union-rca}}\\
		\implies & \forall Y \in A^\omega . ( Y \in \lang (\mathcal A_1)  \cimp Y \in \lang (\mathcal A_2)) & \text{by Lemma \ref{lem:aut-clos-props-in-so-arith}.\ref{item:compl-dba-rca}}\\
		\implies & X \in \lang (\mathcal A_2) & \text{since $X\in \lang(\mathcal A_1)$}\\
		\implies & \ArAcc(X, \mathcal A_2) & \text{by Prop.~\ref{prop:arith-acc}.} \qedhere
		\end{array}
		\]
	\end{proof}
	
	\noindent
	Noticing that DBA acceptance is also purely arithmetical in $X$ (cf.~\ref{item:branch-aut-det}), by the conservativity result Prop.~\ref{prop:conservativity}, we have:
	
	\begin{cor}\label{cor:is2-prov-aracc}
		$\ISn 2 (X)$ proves \eqref{eqn:arith-form-prog-traces}.
	\end{cor}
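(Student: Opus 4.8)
The plan is to read the corollary off from Theorem~\ref{thm:arithmetisation-of-correctness} via the conservativity result Proposition~\ref{prop:conservativity}. The essential point, already highlighted as item~\ref{item:branch-aut-det}, is that the formula \eqref{eqn:arith-form-prog-traces} is \emph{arithmetical} once $X$ is regarded as an oracle predicate, i.e.\ it contains no second-order quantifiers. Indeed: for a DBA $\mathcal A_1$, the acceptance formula $X\in\lang(\mathcal A_1)$ may be taken to be \eqref{eqn:det-acc}, whose quantifiers range only over individuals since the run predicate $q_X$ is primitive recursive in $X$; the relation $\mathcal A_1 \sqsubseteq \mathcal A_2$, unfolding to $\Empty((\mathcal A_1^c \sqcup \mathcal A_2)^c)$, is a first-order (indeed polynomial-time, by Observation~\ref{obs:complexity-aut-constructions}) predicate of the \emph{finite} objects $\mathcal A_1,\mathcal A_2$, the constructions $(-)^c$ and $\sqcup$ being primitive recursive; and $\ArAcc(X,\mathcal A_2)$ as defined by \eqref{eqn:aracc-dfn} has only first-order quantifiers over a $\Din 0 1$ matrix. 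So \eqref{eqn:arith-form-prog-traces} is a formula in the language of $\ISn 2(X)$.

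First I would invoke Theorem~\ref{thm:arithmetisation-of-correctness} to obtain a proof of \eqref{eqn:arith-form-prog-traces} in $\RCA + \CIND{\Sin 0 2}$. Then, since \eqref{eqn:arith-form-prog-traces} is arithmetical in $X$ by the preceding paragraph, Proposition~\ref{prop:conservativity}, namely the conservativity of $\RCA + \CIND{\Sin 0 2}$ over $\ISn 2(X)$, immediately yields a proof of \eqref{eqn:arith-form-prog-traces} in $\ISn 2(X)$.

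I do not expect any real obstacle here: the only thing to be careful about is the bookkeeping of the first paragraph, namely verifying that none of the automaton-theoretic ingredients appearing in \eqref{eqn:arith-form-prog-traces} secretly reintroduce a set quantifier. This is guaranteed by using the DBA-specific acceptance clause \eqref{eqn:det-acc} (equivalent to \eqref{eqn:nd-memb} for deterministic automata, cf.\ the proposition following Definition~\ref{dfn:lang-memb-so}) together with the complexity bounds of Observation~\ref{obs:complexity-aut-constructions}, after which the corollary is immediate.
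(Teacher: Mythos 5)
Your proposal is correct and is essentially the paper's own argument: the paper likewise observes that \eqref{eqn:arith-form-prog-traces} is arithmetical in $X$ (since DBA acceptance, $\sqsubseteq$, and $\ArAcc$ involve no set quantifiers) and then derives the corollary from Thm.~\ref{thm:arithmetisation-of-correctness} via the conservativity of $\RCA+\CIND{\Sin 0 2}$ over $\ISn 2(X)$, Prop.~\ref{prop:conservativity}. Your version merely spells out the bookkeeping that the paper leaves implicit.
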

	

	\subsection{Formalising the soundness argument for cyclic proofs}
	At this point we are able to mostly mimic the formalisation of the soundness argument from \cite{Sim17:cyclic-arith}, although we must further show that a branch of invalid sequents, cf.~the proof of Prop.~\ref{prop:sound-cyclic}, is uniformly describable (cf.~\ref{item:invalid-branch-uniform}).
	%
	
	\smallskip
	
	For $n\geq 0$, let $\Nat, \rho \models_n \phi$ be an appropriate $\Din{}{n+1}$ formula (provably in $\ISn{n+1}$)
	asserting that a formula $\phi$ is true in $\Nat $ under the assignment $\rho$ of its free variables to natural numbers, as long as $\phi$ is a Boolean combination of $\Sin{}{n}$ (or $\Pin{}{n}$) formulae.\footnote{If $\phi$ is not a Boolean combination of $\Sin{}n$ formulae then $\Nat, \rho \models_n \phi$ crashes and returns $\bot$.}
	Formally, the formula $\Nat, \rho \models_n \phi$ takes as arguments the \emph{codes} of $\rho$ and $\phi$, i.e.\ their G\"odel numbers; the construction of such a formula for $\models_n$ is standard (see, e.g.,~\cite{Bus98:handbook-of-pt,Kay91:models-of-pa,HajPud:93}) and it has size polynomial in $n$.
	Importantly, there are $\ISn{n+1}$ proofs that $\models_n$ satisfies `Tarski's truth conditions'.
	Writing $\bool(\Phi)$ for the class of Boolean combinations of $\Phi$-formulae, we have:
	\begin{prop}
		[Properties of $\models_n$, see e.g.\ \cite{HajPud:93}]
		\label{prop:tarski}
		For $n\geq 0$, the following $\Pin{}{n+1}$ formulae have $\ISn{n+1}$ proofs of size polynomial in $n$:
		\begin{enumerate}
			\item\label{item:tarski-not} $\forall \phi \in \bool (\Sin{}n). \forall \rho. \ (\Nat, \rho \models_n \cnot \phi \ \ciff \ \Nat, \rho \notmodels_n \phi)$.
			\item\label{item:tarski-or} $\forall \phi,\psi \in \bool (\Sin{}n). \forall \rho. \ (\Nat, \rho \models_n (\phi \corr \psi) \ \ciff \ (\Nat, \rho \models_n \phi \ \corr \ \Nat, \rho \models \psi))$.
			\item\label{item:tarski-and} $\forall \phi,\psi \in \bool (\Sin{}n). \forall \rho. \ (\Nat, \rho \models_n (\phi \cand \psi) \ \ciff \ (\Nat, \rho \models_n \phi \ \cand \ \Nat, \rho \models \psi))$.
			\item\label{item:tarski-exists} $\forall \phi \in \Sin{}{n}. \forall \rho. \ ( \Nat, \rho \models_n \exists x . \phi \ \ciff \ \exists y . (\Nat, \rho \cup \{ x \mapsto y \} \models_n \phi ) )$.
			\item\label{item:tarski-forall} $\forall \phi \in \Pin{}{n}. \forall \rho. \ ( \Nat, \rho \models_n \forall x . \phi \ \ciff \ \forall y . (\Nat,  \rho \cup \{ x \mapsto y \} \models_n \phi ) )$.
		\end{enumerate}
		\smallskip 
		
		\noindent
		We also have $\ISn{n+1}$ proofs of size polynomial in $n$ of the \emph{substitution property}: 
		\begin{enumerate}
			\setcounter{enumi}{5}
			
			
			\item\label{item:tarski-subst} $\forall \phi \in \bool (\Sin{}n) . \forall \rho. \forall \text{ terms } t. \ ( \Nat, \rho \cup \{ a \mapsto \rho(t) \} \models_n \phi \ \ciff \ \Nat, \rho \models_n \phi[t/a] )$.
		\end{enumerate}
	\end{prop}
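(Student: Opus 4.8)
The plan is to prove Proposition~\ref{prop:tarski} by a careful formalisation of the standard construction of a \emph{partial truth predicate} $\models_n$ for bounded-quantifier-complexity formulae, and then to verify each of the Tarskian clauses~\ref{item:tarski-not}--\ref{item:tarski-subst} inside $\ISn{n+1}$. First I would fix a concrete arithmetisation: a $\Sin{}{n}$ formula is coded so that it is (provably in $\IDn 0$, say) a block of $n$ alternating quantifier strings prefixed to a $\Din 0$ matrix, and $\bool(\Sin{}n)$-formulae are coded as finite Boolean trees whose leaves are such $\Sin{}n$ or $\Pin{}n$ formulae. The key point is that satisfaction of a $\Din 0$ formula under an assignment is decidable and already expressible by a $\Din{}{1}$ (indeed $\Din 0$, after suitably bounding the search for witnesses) formula, with all the elementary facts about it --- the recursion on subformulae, evaluation of terms, etc.\ --- provable in $\IDn 0$ (or $\ISn 1$ for the induction needed to verify the recursion). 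Then $\models_n$ is built by an explicit formula: one existentially guesses (bounded by the length of the formula) the value of the matrix, and threads through the $n$ outer quantifiers as genuine unbounded $\exists/\forall$ alternations, giving a $\Din{}{n+1}$ formula (it is both $\Sin{}{n+1}$ and $\Pin{}{n+1}$ provably in $\ISn{n+1}$ because the guessing of the matrix value and its verification can be swapped). The size being polynomial in $n$ is immediate from this uniform description.

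Next I would verify the six clauses. Clauses~\ref{item:tarski-not}--\ref{item:tarski-and} (the propositional/De Morgan conditions) are the genuinely uniform part: they hold because $\models_n$ is \emph{defined} by recursion on the Boolean structure of $\bool(\Sin{}n)$-formulae, so each is proved by unwinding one step of that recursion; the only subtlety is that negation must be handled via the De Morgan dual so that we stay inside $\bool(\Sin{}n)$, and one must check the evaluation of the $\Din 0$ matrix respects complementation, which is an $\IDn 0$ fact. Clauses~\ref{item:tarski-exists} and~\ref{item:tarski-forall} expose the outermost quantifier of a $\Sin{}n$ (resp.\ $\Pin{}n$) formula: stripping $\exists x$ from a $\Sin{}n$ formula $\exists x.\phi$ leaves a $\Pin{}{n-1}\subseteq\bool(\Sin{}{n-1})\subseteq\bool(\Sin{}n)$ formula $\phi$ about whose truth $\models_n$ still speaks, so the equivalence $\Nat,\rho\models_n\exists x.\phi \ciff \exists y.(\Nat,\rho\cup\{x\mapsto y\}\models_n\phi)$ is again just the definition of $\models_n$ read one quantifier down, and this needs no induction beyond what is needed to justify the defining recursion, i.e.\ it stays within $\ISn{n+1}$. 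Clause~\ref{item:tarski-subst}, the substitution lemma, is proved by induction on the Boolean structure of $\phi$ (finitely many cases, so $\Sin 1$-induction on the code suffices), using at the base case that term evaluation commutes with substitution, $\val(\rho,t'[t/a])=\val(\rho\cup\{a\mapsto\val(\rho,t)\},t')$, which itself is an $\IDn 0$ induction on $t'$; the quantifier leaves are handled by renaming bound variables away from $a,t$, a routine coding manipulation.

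The main obstacle --- really the only place requiring care --- is pinning down that $\models_n$ is genuinely $\Din{}{n+1}$ \emph{provably in $\ISn{n+1}$} and that this provability, together with the Tarskian clauses, does not secretly require more induction. The standard references (\cite{HajPud:93}, \cite{Bus98:handbook-of-pt}, \cite{Kay91:models-of-pa}) carry this out, and the point is that the only induction ever needed is (i) to show the defining fixed-point recursion for $\models_n$ has a unique solution along a given formula --- an induction on the (finite) length of the formula code, hence $\Sin 1$-induction --- and (ii) to prove the substitution property, again a $\Sin 1$-induction on formula structure; everything else is either $\Din 0$ arithmetic about term/matrix evaluation or a single unwinding of the definition. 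So the write-up would: (a) give the coding and the explicit formula for $\models_n$; (b) note its $\Din{}{n+1}$-ness in $\ISn{n+1}$; (c) observe $|{\models_n}|=\poly(n)$; (d) dispatch~\ref{item:tarski-not}--\ref{item:tarski-forall} by unwinding the recursion, noting each proof has size $\poly(n)$ since it manipulates a $\poly(n)$-sized formula a bounded number of times; and (e) prove~\ref{item:tarski-subst} by the structural induction sketched above. I would cite \cite{HajPud:93} for the bookkeeping and keep the argument at the level of "this is the routine construction, carried out with attention to the induction used", since a fully detailed development would essentially reproduce a chapter of a textbook.
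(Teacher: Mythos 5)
The paper offers no proof of this proposition: it is stated as a standard fact with a pointer to \cite{HajPud:93} (and, in the surrounding text, to \cite{Bus98:handbook-of-pt,Kay91:models-of-pa}), and your proposal is precisely the standard partial-truth-predicate construction being cited, so there is nothing to compare against and your sketch is essentially correct. One bookkeeping point to fix in a write-up: your claim that the substitution lemma and the verification of the Boolean clauses over $\bool(\Sin{}{n})$ need only $\Sin{}{1}$-induction is too strong as literally stated, since the induction formula mentions $\models_n$ and is therefore $\Sin{}{1}$ only relative to a $\Din{}{n+1}$ oracle, i.e.\ it is really a $\Sin{}{n+1}/\Pin{}{n+1}$ induction; this is harmless for the proposition because $\ISn{n+1}=\IPn{n+1}$ supplies exactly that induction, but it should be stated at that level rather than attributed to $\ISn 1$.
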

	
	\noindent
	In particular we have the \emph{reflection property}:
	\begin{prop}
		[Reflection]
		\label{prop:reflection}
		For $n\geq 0$ we have $\ISn{1} \proves \phi \ciff (\Nat, \emptyset \models_n \phi)$ with proofs of size polynomial in $n$ and $|\phi|$, for any closed formula $\phi \in \Sin{}n \cup \Pin{}n$.
	\end{prop}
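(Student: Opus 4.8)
The plan is to derive the reflection property as a straightforward corollary of the Tarski truth conditions established in Proposition~\ref{prop:tarski}, by a routine induction on the structure of the closed formula $\phi$. Since $\phi$ is closed and in $\Sin{}n \cup \Pin{}n$, it is in particular a Boolean combination of $\Sin{}n$ formulae (trivially, of itself), so $\models_n$ does not crash on it, and all the clauses of Proposition~\ref{prop:tarski} are available. I would argue that for every subformula $\chi$ of $\phi$ and every assignment $\rho$ to the free variables of $\chi$, one has $\ISn 1 \proves \chi[\rho] \ciff (\Nat, \rho \models_n \chi)$, where $\chi[\rho]$ denotes $\chi$ with its free variables instantiated by the numerals $\numeral{\rho(x)}$; the desired statement is then the special case $\chi = \phi$, $\rho = \emptyset$.

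The induction would proceed as follows. For an atomic formula (or its negation), $\chi[\rho]$ is a closed $\Din{}0$ sentence, and $\ISn 1$ (indeed $\Q$ plus a little open induction) decides such sentences correctly and proves them equivalent to the corresponding instance of $\models_n$, by inspection of how $\models_n$ treats atoms. For the Boolean cases $\cnot$, $\corr$, $\cand$, I would invoke clauses~\ref{item:tarski-not}, \ref{item:tarski-or}, \ref{item:tarski-and} of Proposition~\ref{prop:tarski} together with the inductive hypothesis applied to the immediate subformulae. For a quantifier $\exists x . \chi'$ with $\chi' \in \Sin{}n$ (the case relevant when $\phi \in \Sin{}n$; the $\forall$ case is dual via clause~\ref{item:tarski-forall}), I would use clause~\ref{item:tarski-exists} to push $\models_n$ past the quantifier, then apply the inductive hypothesis to $\chi'$ under the extended assignment $\rho \cup \{x \mapsto y\}$ for a fresh variable $y$, using the substitution property~\ref{item:tarski-subst} to align $\models_n (\chi'[\rho \cup \{x\mapsto y\}])$ with $\models_n \chi'$ evaluated under that assignment. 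Each of these steps is a constant number of cuts against the polynomial-size proofs furnished by Proposition~\ref{prop:tarski}, so summing over the $O(|\phi|)$ subformulae gives a proof of size polynomial in $n$ and $|\phi|$, as claimed.

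The one point requiring a little care — and the main (minor) obstacle — is the uniformity/size bookkeeping: the statement asserts a \emph{single} $\ISn 1$ proof of size $\poly(n, |\phi|)$ for each $\phi$, so one must check that the induction on the structure of $\phi$ is carried out at the object level in a way that composes the polynomial-size lemma instances without blow-up, rather than as a metatheoretic induction producing a schema. Concretely, one fixes $\phi$, walks its (syntactic) parse tree, and at each node writes down the relevant instance of Proposition~\ref{prop:tarski} (whose size is $\poly(n)$, and into which the finitely many subformulae and assignments of $\phi$ are substituted, contributing a further $O(|\phi|)$ factor) together with the cuts linking it to the children's equivalences; the total is a sum of $O(|\phi|)$ terms each of size $\poly(n, |\phi|)$. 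Since the quantifier depth of $\phi$ is at most $n$ and all intermediate formulae appearing are Boolean combinations of $\Sin{}n$ or $\Pin{}n$ formulae, the whole derivation stays within $\ISn 1$ (which proves the $\Pin{}{n+1}$ statements of Proposition~\ref{prop:tarski}), and nothing of higher logical complexity is introduced.
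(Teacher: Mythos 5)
Your proof is correct and is exactly the argument the paper intends: the paper omits the proof entirely, presenting the proposition as an immediate consequence (``In particular'') of the Tarski conditions of Prop.~\ref{prop:tarski}, and your object-level structural induction over the parse tree of the fixed $\phi$, together with the size bookkeeping, is the standard way to spell this out. The only point worth flagging is your parenthetical claim that $\ISn 1$ proves the statements of Prop.~\ref{prop:tarski} --- the paper only asserts $\ISn{n+1}$ proofs there, so to land in $\ISn 1$ rather than $\ISn{n+1}$ you must appeal to the standard fact (see, e.g., \cite{HajPud:93}) that the finitely many instances of the Tarski conditions for the concrete subformulae of the fixed $\phi$ are already provable in $\ISn 1$; note that every use of the proposition in the paper takes place in a theory containing $\ISn{n+1}$ anyway, so the weaker bound would also suffice.
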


	\noindent
	Henceforth, all our proof complexity bounds in $n$ follow from the fact that proofs are parametrised by $\models_n$ and its basic properties from Prop.~\ref{prop:tarski} above.
	
	\begin{defi}
		[Uniform description of an invalid branch]
		Let $\pi$ be a $\CA$ preproof of a sequent $\Gamma \seqar \Delta$, and let $n\in \Nat$ be such that all formulae occurring in $\pi$ are $\Sin{}{n}$. 
		Let $\rho_0$ be an assignment such that $\Nat, \rho_0 \models_n \bigwedge \Gamma$ but $\Nat ,  \rho_0 \notmodels_n \bigvee \Delta$.
		The branch of $\pi$ \dfntrm{generated} by $\rho_0$ is the invalid branch as constructed in the proof of Prop.~\ref{prop:sound-cyclic}, where at each step that there is a choice of premiss the leftmost one is chosen, and at each step when there is a choice of assignment of a natural number to a free variable the least one is chosen.
		We write $\Branch_n(\pi, \rho_0, x, y)$ for the following predicate:
		\[
		\text{``the $x$\textsuperscript{th} element of the branch generated by $\rho_0$ in $\pi$ is $y$''}
		\]
		To be precise, the `element' $y$ 
		is given as a pair $\pair{\rho_x}{\Gamma_x\seqar \Delta_x}$ consisting of 
		a sequent $\Gamma_x \seqar \Delta_x$ and an assignment $\rho_x$
		that invalidates it.
	\end{defi}

	Notice that $\Branch_n(\pi, \rho_0 , x , y)$ is recursive w.r.t.\ the oracle $\models_n$, and so is expressible by a $\Din{}{1} ( \models_n )$ formula, making it altogether $\Din{}{n+1} $ in its arguments.
	In fact, this is demonstrably the case in $\ISn{n+1}$, which can prove that $\Branch_n (\pi, \rho_0 , - , - )$ is the graph of a \emph{function}, as shown in Prop.~\ref{prop:inv-branch-isn} below.

	Let us
	write $\conc (\pi)$ for the conclusion of a $\CA$ proof $\pi$ and, as in Sect.~\ref{sect:sect:automata-prelims}, $\mathcal A^\pi_b$ and $\mathcal{A}^\pi_t$ for its branch and trace automata, resp.
	When we write $\Nat, \rho \models_n( \Gamma \seqar \Delta)$ we mean the $\Din{}{n+1}$ formula $(\Nat, \rho \notmodels_n \bigwedge\Gamma) \corr (\Nat, \rho \models_n \bigvee \Delta)$.
 
	\begin{prop}
		\label{prop:inv-branch-isn}
		%
		For $n\geq 0$, there are $\ISn{n+1}$ proofs of size polynomial in $n$ of:
		\begin{equation}
		\label{eqn:branch-is-total}
		\begin{array}{l}
		\forall  \pi \text{ a $\CA$ preproof containing only $\Sin{} n$ formulae}. \\
		\forall  \rho_0 . \ ((\Nat, \rho_0 \notmodels_n \conc (\pi) )\cimp \Branch_n (\pi, \rho_0 , - , -) \in \lang ( \mathcal A^\pi_b ))
		\end{array}
		\end{equation}
	\end{prop}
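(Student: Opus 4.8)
The plan is to argue inside $\ISn{n+1}$, fixing a $\CA$ preproof $\pi$ containing only $\Sin{}{n}$ formulae and an assignment $\rho_0$ with $\Nat, \rho_0 \notmodels_n \conc(\pi)$. The first task is to show that $\Branch_n(\pi,\rho_0,-,-)$ is (the graph of) a total function on $\Nat$; this amounts to formalising the inductive construction in the proof of Prop.~\ref{prop:sound-cyclic}. I would do this by $\Sin{}{n+1}$-induction on $x$ (which is available in $\ISn{n+1}$): the statement to induct on is roughly ``there is a unique finite sequence $\pair{\rho_i}{\Gamma_i \seqar \Delta_i}_{i \leq x}$ such that the $0$th element is $\pair{\rho_0}{\conc(\pi)}$, each consecutive pair is related by the canonical leftmost-premiss/least-witness step of the branch construction, and $\Nat, \rho_i \notmodels_n (\Gamma_i \seqar \Delta_i)$ for each $i \leq x$''. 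The successor step is where the real content sits, and it splits into the case analysis of the proof of Prop.~\ref{prop:sound-cyclic}: for rules that constitute a true implication (all propositional rules, $\lefrul\forall$, $\rigrul\exists$, $\cut$, $\wk$, the equality and initial rules, $\ind$, the bounded rules) we use the Tarski conditions, Prop.~\ref{prop:tarski}\ref{item:tarski-not}--\ref{item:tarski-and}, to extract from $\Nat,\rho_x \notmodels_n(\Gamma_x\seqar\Delta_x)$ a premiss that is still not satisfied under $\rho_x$, and we pick the leftmost such premiss; for a $\theta$-$\sub$ step we use the substitution property, Prop.~\ref{prop:tarski}\ref{item:tarski-subst}, to see that $\rho_x \circ \theta$ invalidates the premiss; for $\rigrul\forall$ and $\lefrul\exists$ we use Prop.~\ref{prop:tarski}\ref{item:tarski-exists}/\ref{item:tarski-forall} to obtain a witnessing value $k$ and take the least such $k$ (which exists by least-number principle, itself available in $\ISn{n+1}$), extending $\rho_x$ accordingly. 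Uniqueness at each step is immediate because every choice (leftmost premiss, least witness) is deterministic, and the matrix $\Nat, \rho \models_n(-)$ is $\Din{}{n+1}$ and hence decidable in the relevant sense. Since all these uses of Prop.~\ref{prop:tarski} have proofs of size polynomial in $n$ and the case analysis is over a fixed finite list of rule shapes, the whole argument has size polynomial in $n$.

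Once totality (and functionality) of $\Branch_n(\pi,\rho_0,-,-)$ is established, the second task is to verify that the resulting infinite sequence of sequents is indeed an $\omega$-word accepted by $\mathcal A^\pi_b$, i.e.\ membership in $\lang(\mathcal A^\pi_b)$ in the sense of Dfn.~\ref{dfn:lang-memb-so}. Here I would use the observation, recalled in Sect.~\ref{sect:sect:automata-prelims}, that $\mathcal A^\pi_b$ is essentially the dependency graph of $\pi$ with \emph{all states accepting} and is deterministic; so after completing it with a sink state, acceptance via \eqref{eqn:det-acc} only requires checking that the run never leaves the `real' states and hits an accepting state cofinally --- and since every reachable real state is accepting, it suffices to show the run of the branch word stays among the nodes of $\pi$. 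But by construction $\Branch_n$ only ever moves from a node to one of its premiss-nodes in $\pi$, so a straightforward $\ISn{n+1}$-induction on $x$ (or even the totality statement itself, suitably strengthened to record the current node of $\pi$) shows that $q_{\mathrm{word}}(x)$ is always a node of $\pi$, hence accepting; thus the sink is never visited and \eqref{eqn:det-acc} holds. Packaging this together with the totality claim and noting that $\Din{}{n+1}$ graphs of functions can be turned into the $A^\omega$-style predicates of \eqref{eqn:tot-oracle}--\eqref{eqn:det-oracle} provably in $\ISn{n+1}$ gives \eqref{eqn:branch-is-total}.

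The main obstacle I anticipate is the bookkeeping in the successor step of the induction --- in particular making precise, uniformly in $\pi$ and $n$, the notion ``canonical next element of the branch'' so that a single $\Sin{}{n+1}$ formula describes it for \emph{all} rules at once, and checking that the Tarski/substitution conditions of Prop.~\ref{prop:tarski} are invoked with the correct (Boolean combinations of) $\Sin{}{n}$ formulae in each case, so that $\models_n$ never ``crashes''. This is exactly the place where the remark at the end of Sect.~\ref{sect:prelims-pa-pt} about prenexing and coding blocks of quantifiers matters: one must ensure that the induction and quantifier rules appearing in $\pi$ only manipulate formulae within $\bool(\Sin{}{n})$, which is guaranteed by the hypothesis that $\pi$ contains only $\Sin{}{n}$ formulae together with the subformula property. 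Given that, the rest is routine formalisation, and the polynomial-in-$n$ size bound is inherited directly from Prop.~\ref{prop:tarski} and Prop.~\ref{prop:reflection}.
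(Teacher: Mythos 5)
Your proposal is correct and follows essentially the same route as the paper: $\Sin{}{n+1}$-induction on the branch position with the invariant ``the $x$\textsuperscript{th} element exists and is invalidated by its assignment'', a case analysis on the rule using the Tarski conditions of Prop.~\ref{prop:tarski} (with the least-witness choice for $\rigrul\forall$/$\lefrul\exists$ and the substitution property for $\sub$), followed by a further induction showing the run on $\mathcal A^\pi_b$ only visits (final) nodes of $\pi$. The one small adjustment worth making is to keep the uniqueness clause out of the inductive invariant (the paper proves determinism by a separate induction), since bundling it in threatens to push the induction formula outside $\Sin{}{n+1}$.
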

	\begin{proof}
		Working in $\ISn{n+1}$, let $\pi$ and $\rho_0$ satisfy the hypotheses of \eqref{eqn:branch-is-total} above.
		The fact that $\Branch_n(\pi, \rho_0 , - , - )$ is deterministic, cf.~\eqref{eqn:det-oracle}, follows directly by induction on the position of the branch.
		The difficult part is to show that $\Branch_n (\pi, \rho_0 , - , - )$ is total, cf.~\eqref{eqn:tot-oracle}, i.e.~that it never reaches a deadlock. For this we show,
		\begin{equation}
		\label{eqn:branch-total-invalid}
		\forall x . \exists \pair {\rho_x} {\Gamma_x\seqar \Delta_x} . \ (\Branch_n(\pi, \rho_0 , x , \pair {\rho_x} {\Gamma_x\seqar \Delta_x} ) \cand \ \Nat, \rho_x \notmodels_n (\Gamma_x\seqar \Delta_x) )
		\end{equation}
		by ${\Sin{}{n+1}}$-induction on $x$.
		The base case, when $x=0$, follows by assumption, so we proceed with the inductive case.
		For a given $x$ let $\pair{\rho_x}{\Gamma_x \seqar \Delta_x}$ witness \eqref{eqn:branch-total-invalid} above and let $\rul$ be the rule instance in $\pi$ that $\Gamma_x \seqar \Delta_x$ concludes.
		
		If $\rul$ is a $ \rigrul \exists$ step with associated term $t$, then there is only one premiss which we show remains false in the current assignment. This follows from:
		\[
		\begin{array}{rcll}
		\Nat, \rho \notmodels_n \exists x . \phi & \implies & \Nat, \rho \models_n \forall x . \cnot \phi & \text{by Prop.~\ref{prop:tarski}.\ref{item:tarski-not}} \\
		& \implies & \forall y . (\Nat, \rho \cup  \{ x \mapsto y \} \models_n \cnot \phi ) & \text{by Prop.~\ref{prop:tarski}.\ref{item:tarski-forall}} \\
		& \implies & \Nat, \rho \cup \{ x \mapsto \rho(t) \} \models_n  \cnot \phi & \text{by pure logic} \\
		& \implies & \Nat, \rho \models_n \cnot \phi [t/x] & \text{by Prop.~\ref{prop:tarski}.\ref{item:tarski-subst}} \\
		& \implies & \Nat, \rho \notmodels_n \phi[t/x] & \text{by Prop.~\ref{prop:tarski}.\ref{item:tarski-not}.}
		\end{array}
		\]
		
		If $\rul$ is a $\rigrul{\forall}$ step 
		then there is only one premiss, for which we show that the appropriate invalidating assignment exists. This follows from,
		\[
		\begin{array}{rcll}
		\Nat, \rho \notmodels_n \forall x . \phi & \implies & \Nat, \rho \models_n \exists x . \cnot \phi & \text{by Prop.~\ref{prop:tarski}.\ref{item:tarski-not}} \\
		& \implies & \exists y . ( \Nat, \rho \cup\{ x \mapsto y \} \models_n \cnot \phi ) & \text{by Prop.~\ref{prop:tarski}.\ref{item:tarski-exists}} \\
		& \implies & \exists \text{ least } y  . ( \Nat, \rho \cup\{ x \mapsto y \} \models_n \cnot \phi ) & \text{by $\CIND{\Sin{}{n+1}}$
		}\\
		& \implies & \exists \text{ least } y . ( \Nat, \rho\cup \{ x \mapsto y \} \notmodels_n \phi ) & \text{by Prop.~\ref{prop:tarski}.\ref{item:tarski-not}.}
		\end{array}
		\]
		where, in the penultimate implication, we rely on the fact that the appropriate `minimisation' property is provable in $\CIND{\Sin{}{n+1}}$ (see, e.g., \cite{Bus98:handbook-of-pt}).
		
		If $\rul$ is a left quantifier step then it is treated similarly to the two right quantifier cases above by De Morgan duality. If $\rul $ is a propositional step then the treatment is simple, following directly from Prop.~\ref{prop:tarski}. If $\rul$ is an initial sequent we immediately hit a contradiction, since all of the axioms are provably true in all assignments.
		If $\rul $ is a substitution step, then the existence of the appropriate assignment follows directly from the substitution property, Prop.~\ref{prop:tarski}.\ref{item:tarski-subst}.

		Finally, we may show that every state of the run of $\Branch_n (\pi, \rho_0 ,- , - )$ on $\mathcal A^\pi_b$ is final, by $\Sin{}{n+1}$-induction, since it always correctly follows a branch of $\pi$. Thus we have that $\Branch_n (\pi, \rho_0, -, - ) \in \lang (\mathcal A^\pi_b)$.
	\end{proof}
	
	

	Now we can give a formalised proof of the soundness of cyclic proofs:
	\begin{thm}
		[Soundness of cyclic proofs, formalised]
		\label{thm:soundness-formalised}
		For $n \geq 0$, there are $\ISn{n+2}$ proofs of size polynomial in $n$ of:
		\begin{equation}
		\label{eqn:soundness-polysize}
		\begin{array}{l}
		\forall \pi \text{ a $\CA$ preproof containing only $\Sin{} n$ formulae}. \\
		(\mathcal A^\pi_b \sqsubseteq \mathcal A^\pi_t
		\ \cimp\ 
		\forall \rho_0  . \ \Nat, \rho_0 \models_n \conc (\pi))
		\end{array}
		\end{equation}
	\end{thm}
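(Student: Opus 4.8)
The plan is to formalise the soundness argument of Proposition~\ref{prop:sound-cyclic} inside $\ISn{n+2}$, reducing it to the arithmetic consequences of the correctness criterion already established in Corollary~\ref{cor:is2-prov-aracc}. Suppose $\pi$ is a $\CA$ preproof containing only $\Sin{}{n}$ formulae with $\mathcal A^\pi_b \sqsubseteq \mathcal A^\pi_t$, and suppose for contradiction that $\Nat, \rho_0 \notmodels_n \conc(\pi)$ for some assignment $\rho_0$; we work throughout in $\ISn{n+2}$. First, by Proposition~\ref{prop:inv-branch-isn}, the branch $\Branch_n(\pi, \rho_0, -, -)$ generated by $\rho_0$ is a genuine $\omega$-word accepted by $\mathcal A^\pi_b$, i.e.\ $\Branch_n(\pi, \rho_0, -,-) \in \lang(\mathcal A^\pi_b)$. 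Since $\mathcal A^\pi_b$ is deterministic, its acceptance predicate is arithmetical in its input, so by Corollary~\ref{cor:is2-prov-aracc} (instantiating the oracle $X$ by the $\Din{}{n+1}$ formula $\Branch_n(\pi,\rho_0,-,-)$, which only multiplies the $\ISn 2(X)$ proof by $O(|\pi| + \poly(n))$, landing inside $\ISn{n+2}$ with polynomial-size proofs) we obtain $\ArAcc(\Branch_n(\pi,\rho_0,-,-), \mathcal A^\pi_t)$.

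Unwinding the definition of $\ArAcc$ from Definition~\ref{dfn:aracc}, this gives a finite run prefix $x$ of the branch automaton's trace companion ending in a final state, which can be extended to finite runs hitting arbitrarily many final states. Since accepting states of $\mathcal A^\pi_t$ correspond to progress points of a trace, and the branch is infinite, this yields a single trace $(t_i)_{i\geq m}$ along the generated branch that progresses arbitrarily often — which, because all the relevant ordinals are below $\omega$, is all we need (point~\ref{item:finite-closure-ordinals}). Now I would reproduce the final paragraph of Proposition~\ref{prop:sound-cyclic} internally: the values $\rho_x(t_x)$ of the traced terms under the invalidating assignments form a sequence of naturals that is (weakly) decreasing along the trace and strictly decreasing at each of the arbitrarily-many progress points. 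To derive a contradiction one does not even need an infinitely descending sequence: pick a progress point, get $\rho_{i}(t_{i}) \geq \rho_{i+1}(t_{i+1})$ in general and $>$ at progress points, so a trace with $k+1$ progress points forces the starting value to be $\geq k$; since the number of progress points is unbounded, the starting value $\rho_m(t_m)$ would have to exceed every natural number, contradiction. Formally this is a straightforward $\Sin{}{n+1}$-induction (available in $\ISn{n+2}$) showing ``after the $k$-th progress point the traced value is $\leq \rho_m(t_m) - k$'', using the monotonicity facts about $\models_n$ under the traced inequalities — these follow from Proposition~\ref{prop:tarski} together with the fact that $\Branch_n$ correctly tracks both sequents and assignments along precursor steps.

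The key monotonicity lemma — that if $t_{i+1}$ is a precursor of $t_i$ at step $i$ then $\rho_{i+1}(t_{i+1}) \leq \rho_i(t_i)$, with strict inequality when the atom $t_{i+1} < t$ (for $t$ a precursor of $t_i$) appears in the antecedent and the assignment invalidates the sequent — is where the three precursor clauses of the trace definition must each be checked against the behaviour of the generated branch. The substitution case uses $\rho_{i+1} = \rho_i \circ \theta$ together with the substitution property Proposition~\ref{prop:tarski}.\ref{item:tarski-subst}; the equality case $t = t'$ in the antecedent uses that $\rho_i$ invalidates the sequent hence satisfies its antecedent, so $\rho_i(t) = \rho_i(t')$; the identity case is trivial modulo Remark~\ref{rmk:traces-occurrences}. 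The strict-decrease case at a progress point similarly uses that $\rho_{i+1}$ satisfies the antecedent of the $(i{+}1)$-st sequent which contains $t_{i+1} < t$, so $\rho_{i+1}(t_{i+1}) < \rho_{i+1}(t) = \rho_i(t_i)$ (the last equality again by the precursor relation). All of this is expressible and provable using only the $\Din{}{n+1}$ truth predicate $\models_n$ and $\Sin{}{n+1}$-induction, so stays within $\ISn{n+2}$, and since the proof is parametrised by $\models_n$ and Proposition~\ref{prop:tarski}, its size is polynomial in $n$.

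\textbf{Main obstacle.} I expect the delicate point to be extracting from $\ArAcc(\Branch_n(\pi,\rho_0,-,-),\mathcal A^\pi_t)$ an actual \emph{trace} (a sequence of terms) progressing arbitrarily often, as opposed to merely arbitrarily long accepting run-prefixes of $\mathcal A^\pi_t$ — one must argue, provably in $\ISn{n+2}$ and uniformly, that an accepting run of $\mathcal A^\pi_t$ on the branch encodes a trace in the sense of the trace definition, and that the arbitrarily-often-progressing condition on runs transfers to the arbitrarily-often-progressing condition on the decoded trace values. This is implicit in Simpson's construction of $\mathcal A^\pi_t$ but needs to be made explicit enough to carry out the final induction; getting the quantifier complexity of this decoding to sit at $\Sin{}{n+1}$ (so that the concluding induction stays in $\ISn{n+2}$) is the crux.
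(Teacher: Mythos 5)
Your proposal follows essentially the same route as the paper's proof: instantiate the oracle $X$ in Cor.~\ref{cor:is2-prov-aracc} with the $\Din{}{n+1}$ formula $\Branch_n(\pi,\rho_0,-,-)$, invoke Prop.~\ref{prop:inv-branch-isn} to get membership in $\lang(\mathcal A^\pi_b)$, extract $\ArAcc(\Branch_n(\pi,\rho_0,-,-),\mathcal A^\pi_t)$, and then derive a contradiction by bounding the value of the traced term at the fixed prefix position, instantiating the $\forall y$ of $\ArAcc$ with that bound plus one, and running a $\Sin{}{n+1}$-induction along the resulting finite trace showing the value drops by at least one per progress point — exactly the paper's argument (the paper takes the maximum of $\rho_x(t)$ over all terms in the sequent at that position, which subsumes your fixed $\rho_m(t_m)$). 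Your fuller treatment of the precursor-case monotonicity and the decoding of runs of $\mathcal A^\pi_t$ into traces fills in details the paper leaves implicit, and is consistent with it.
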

	
	\begin{proof}
		First, instantiating $X$ in Cor.~\ref{cor:is2-prov-aracc} with a $\Din{}{n+1}$ formula $\phi_n$ yields $O(|\phi_n|)$-size $\ISn{n+2}$ proofs of \eqref{eqn:arith-form-prog-traces}$[\phi_n/X]$.
		Hence, setting $\phi_n$ to be $\Branch_n (\pi, \rho_0 , - , - )$ and appealing to Prop.~\ref{prop:inv-branch-isn} above, we arrive at $\ISn{n+2}$ proofs of size polynomial in $n$ of:
		\begin{equation}
		\label{eqn:rfn-princ}
		\begin{array}{l}
		\forall  \pi \text{ a $\CA$ preproof containing only $\Sin{} n$ formulae}. \\
		\mathcal A^\pi_b \sqsubseteq \mathcal A^\pi_t
		\cimp
		\forall \rho_0 .
		(\Nat, \rho_0 \notmodels_n \conc (\pi)
		\cimp 
		\ArAcc( \Branch_n (\pi, \rho_0 , - , - ) , \mathcal A^\pi_t )
		)
		\end{array}
		\end{equation}
		
		Now, working in $\ISn{n+2}$, to prove \eqref{eqn:soundness-polysize} let $\pi $ satisfy $\mathcal A^\pi_b \sqsubseteq \mathcal A^\pi_t$. 
		For contradiction
		assume, for some $\rho_0$, that $\Nat, \rho_0 \notmodels_n \conc (\pi)$.
		By Prop.~\ref{prop:inv-branch-isn} we have
		$\Branch_n (\pi, \rho_0 , - , - ) \in \lang (\mathcal A^\pi_b)$,
		so we henceforth write $\Gamma_x \seqar \Delta_x$ and $\rho_x$ for the sequent and assignment at the $x$\textsuperscript{th} position of $\Branch_n (\pi, \rho_0 , -, - )$.
		By \eqref{eqn:rfn-princ} above we have $\ArAcc( \Branch_n (\pi, \rho_0 , - , - ) , \mathcal A^\pi_t )$, so let $x$ witness its outer existential, cf.~\eqref{eqn:aracc-dfn}.
		Now, let $y$ be the maximum value of $\rho_x (t)$ for all terms $t$ occurring in $\Gamma_x \seqar \Delta_x$.
		Again by $\ArAcc( \Branch_n (\pi, \rho_0 , - , - ) , \mathcal A^\pi_t )$, we have that there is some (finite) trace $z$ beginning from $\Gamma_x \seqar \Delta_x $ that progresses $y+1$ times. 
		Writing $z(i)$ to denote the $i$\textsuperscript{th} term in the trace $z$, we may show by induction on $i\leq |z|$
		that, if there are $j$ progress points between $z(0)$ and $z(i)$, then we have that $\rho_x (z(0)) \geq \rho_{x+i} (z(i)) +j$.
		In particular, $y \geq \rho_x (z(0)) \geq \rho_{x + |z|} (z(|z|)) + (y +1) \geq y+1$, yielding a contradiction. 
	\end{proof}

	\subsection{$\PA$ exponentially simulates $\CA$}
	
	We can now give our main proof complexity result:
	\begin{thm}
		\label{thm:elementary-simulation}
		If $\pi $ is a $\CA$ proof of $\phi$, then we can construct a $\PA$ proof of $\phi$ of size exponential in $|\pi|$.
	\end{thm}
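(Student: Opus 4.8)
The plan is to run the formalised soundness statement, Thm.~\ref{thm:soundness-formalised}, as a single uniform polynomial-size $\PA$ proof, then instantiate it by the given proof $\pi$, discharge its automata-theoretic hypothesis by a direct finite computation inside $\PA$, and conclude by reflection, Prop.~\ref{prop:reflection}. The point of the detour through a \emph{uniform} theorem is precisely that the expensive normalisation steps (free-cut elimination, the conservativity of Prop.~\ref{prop:conservativity}) happen once, for a fixed arithmetical sentence, before $\pi$ is plugged in.

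In more detail: given a $\CA$ proof $\pi$ of $\phi$ with $\conc(\pi) = (\seqar\phi)$, I would first fix $n\leq |\pi|$ so that all formulae in $\pi$ are $\Sin{}n$; in particular $\phi\in\Sin{}n$. The first task is a $\PA$ proof of the correctness condition $\mathcal A^\pi_b \sqsubseteq \mathcal A^\pi_t$, i.e.\ of $\Empty(((\mathcal A^\pi_b)^c \sqcup \mathcal A^\pi_t)^c)$. Since $\pi$ is a genuine cyclic proof this is a \emph{true} sentence, by the correctness criterion Prop.~\ref{prop:reg-proog-correctness-condition}; moreover, by Obs.~\ref{obs:complexity-aut-constructions}, the automaton $((\mathcal A^\pi_b)^c \sqcup \mathcal A^\pi_t)^c$ is constructed from $\pi$ by an exponential-time computation — the only super-polynomial step being the NBA complementation of the trace automaton — and $\Empty$ is a polynomial-time predicate. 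So, on this concrete argument, $\mathcal A^\pi_b \sqsubseteq \mathcal A^\pi_t$ is a true $\Sin{}1$ statement whose standard witness has size exponential in $|\pi|$, and any such statement has a $\ISn 1\subseteq \PA$ proof of size polynomial in the witness, hence exponential in $|\pi|$. This is the step responsible for the exponential blowup, and, as remarked in Sect.~\ref{sect:sect:automata-prelims}, some such blowup is unavoidable since deciding correctness of cyclic proofs is $\pspace$.

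Next I would invoke Thm.~\ref{thm:soundness-formalised}: for our fixed $n$ it supplies a $\PA$ (in fact $\ISn{n+2}$) proof of size $\poly(n) = \poly(|\pi|)$ of
\[
\forall\, \pi'\ \text{a $\CA$ preproof containing only $\Sin{}n$ formulae}.\ \bigl(\mathcal A^{\pi'}_b \sqsubseteq \mathcal A^{\pi'}_t \cimp \forall \rho_0.\ \Nat,\rho_0 \models_n \conc(\pi')\bigr).
\]
Instantiating the leading quantifier by (the code of) $\pi$ — so that $\mathcal A^\pi_b$, $\mathcal A^\pi_t$ and $\conc(\pi)$ appear as primitive recursive terms rather than being evaluated — cutting against the $\PA$ proof of $\mathcal A^\pi_b \sqsubseteq \mathcal A^\pi_t$ from the previous step, then instantiating $\rho_0$ by the empty assignment and evaluating $\conc(\pi)$ to $\numeral{\code{(\seqar\phi)}}$ inside $\PA$, yields a $\PA$ proof of $\Nat,\emptyset \models_n \phi$ of size exponential in $|\pi|$. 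A final cut against the reflection principle $\phi \ciff (\Nat,\emptyset\models_n\phi)$, Prop.~\ref{prop:reflection} (available since $\phi\in\Sin{}n$, with a proof of size $\poly(n,|\phi|)$), gives the desired $\PA$ proof of $\phi$, still of size exponential in $|\pi|$. If $\phi$ has free variables one instead instantiates $\rho_0$ by the term coding the associated parametric assignment and uses the parametrised form of reflection obtainable from Prop.~\ref{prop:tarski}, with no change to the bound.

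I expect the only real work to be size-bookkeeping rather than new mathematics: one must check that the uniform soundness proof really is polynomial in $n$, that every automaton construction invoked stays at most exponential in $|\pi|$ (per Obs.~\ref{obs:complexity-aut-constructions}), and that a true $\Sin{}1$ fact admits a $\PA$ proof polynomial in its witness, so that the single genuinely exponential contribution is the certificate for $\mathcal A^\pi_b \sqsubseteq \mathcal A^\pi_t$. No ingredient beyond Thm.~\ref{thm:soundness-formalised} and Prop.~\ref{prop:reflection} should be needed.
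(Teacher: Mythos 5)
Your proposal is correct and follows essentially the same route as the paper's proof: derive $\mathcal A^\pi_b \sqsubseteq \mathcal A^\pi_t$ in $\Q$ by $\Sigma_1$-completeness (exponential-size via Obs.~\ref{obs:complexity-aut-constructions}), instantiate the uniform soundness statement Thm.~\ref{thm:soundness-formalised} with $\pi$, and conclude by reflection, Prop.~\ref{prop:reflection}. Your additional bookkeeping (the free-variable case, the explicit identification of the inclusion certificate as the sole exponential contribution) is consistent with, and slightly more detailed than, the paper's argument.
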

	\begin{proof}
		%
		Take the least $n\in\Nat$ such that $\pi$ contains only $\Sin{}{n}$ formulae; in particular $n\leq |\pi|$. 
		Since $\pi$ is a correct cyclic proof, there is a $\PA$ proof of $\mathcal A^\pi_b \sqsubseteq \mathcal A^\pi_t$, by exhaustive search.
		In fact, such a proof in $\Q$ may be constructed in exponential time in $|\pi|$, thanks to Obs.~\ref{obs:complexity-aut-constructions} and $\Sigma_1$-completeness of $\Q$ (see, e.g., \cite{HajPud:93}).
		Hence, by instantiating $\pi$ in Thm.~\ref{thm:soundness-formalised}, we have $\ISn{n+2}$ proofs of $\Nat, \emptyset \models_n \phi$ of size exponential in $|\pi|$. 
		Finally by the reflection property, Prop.~\ref{prop:reflection}, we have that $\ISn{n+2} \proves \phi$ with proofs of size exponential in $|\pi|$.
	\end{proof}
	%

	Notice that we already have a converse polynomial simulation of $\PA$ in $\CA$ by the results of \cite{Sim17:cyclic-arith} or, alternatively, by Prop.~\ref{prop:isn-in-csn}.
	
	\section{$\ISn{n+1}$ contains $\CSn n$}
	\label{sect:nonuniform-ind-sim-cyc}
	In fact the proof method we developed in the last section allows us to recover a result on logical complexity too. 
	By tracking precisely all the bounds therein, we obtain that $\CSn n$ is contained in $\ISn{n+2}$, which is already an improvement to Simpson's result (see Sect.~\ref{sect:conc} for a comparison).
	To derive such bounds, in this section we concern ourselves only with cyclic proofs containing $\Sin{}{n}$ formulae.
	The universal closures of the conclusions of such proofs axiomatise $\CSn{n}$, cf.~Dfn.~\ref{dfn:cphi}, so more complex theorems of $\CSn{n}$ are thence derivable by pure logic.
	
	In fact, we may actually improve this logical bound and arrive at an optimal result (given Thm.~\ref{thm:cyclic-sim-ind}).
	By more carefully analysing the proof methods of \cite{KMPS16:buchi-reverse}, namely an inspection of the proofs of Thms.~5 and 12 in that work, we have that:
	\begin{prop}
		[Implicit in \cite{KMPS16:buchi-reverse}]
		\label{prop:nonuniform-compl}
		$\RCA \proves \forall X \in A^\omega . (X\in \lang(\mathcal A^c) \ciff X \notin \lang (\mathcal A) )$, for any NBA $\mathcal A$.
	\end{prop}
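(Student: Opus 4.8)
The plan is to revisit the complement construction $\mathcal A \mapsto \mathcal A^c$ of \cite{KMPS16:buchi-reverse} and its verification (the arguments behind Thms.~5 and 12 of that work, specialised to complementation, as packaged in Lemma~5.2 there), and to isolate precisely where $\CIND{\Sin 0 2}$ was spent in Lemma~\ref{lem:aut-clos-props-in-so-arith}.\ref{item:compl-rca-s2ind}. The point is that $\CIND{\Sin 0 2}$ is used there only to carry out an induction whose length is bounded by a fixed term in the number of states $|Q|$ of $\mathcal A$: concretely, to iterate the construction of the successive `layers' of the run structure, equivalently the idempotent-power decomposition within the transition monoid $M_{\mathcal A}$, $|Q|$ (or $|M_{\mathcal A}|$) many times. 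Here, however, $\mathcal A$ is held fixed as a standard NBA, which is all that is needed: $\CSn n$ is axiomatised by the conclusions of \emph{concrete} cyclic proofs, so the branch and trace automata eventually plugged in are always standard. Hence $|Q|$ is a numeral, and that induction unwinds into a fixed, standard number of steps.

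Concretely, one would proceed as follows. Recall the two halves of correctness: \emph{soundness}, that $X \in \lang(\mathcal A^c) \cimp X \notin \lang(\mathcal A)$, and \emph{completeness}, that $(X \in A^\omega \cand X\notin\lang(\mathcal A)) \cimp X \in \lang(\mathcal A^c)$. Soundness already goes through in $\RCA$: one combines a putative accepting run of $\mathcal A$ with an accepting run of $\mathcal A^c$ along the guessed block boundaries and derives a contradiction by bounded pigeonhole -- whose range $|Q|$ is a numeral -- together with a $\CIND{\Sin 0 1}$ argument following the run within a single block. For completeness one forms, by $\CCA{\Din 0 1}$, the run structure of $\mathcal A$ on $X$, and then the finitely many auxiliary sets that \cite{KMPS16:buchi-reverse} uses to assemble an accepting run of $\mathcal A^c$ (i.e.\ to witness its nondeterministic guesses); with $\mathcal A$ fixed there are only standardly many such sets, each $\Din 0 1$-definable from its predecessors, the passage between consecutive ones being a single instance of $\CIND{\Sin 0 1}$, as is the final verification that the assembled run is accepting. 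Collecting these standardly many $\RCA$-derivations yields the $\RCA$ proof of the instance of the proposition for this $\mathcal A$.

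The main obstacle is bookkeeping rather than mathematics: one must check that, once $|Q|$ (equivalently $|M_{\mathcal A}|$) is a numeral, \emph{every} remaining use of induction or comprehension in the correctness argument of \cite{KMPS16:buchi-reverse} stays within $\CIND{\Sin 0 1} + \CCA{\Din 0 1}$ -- in particular that the $\Sin 0 2$ inductions really were spent only on iterating a fixed construction $|Q|$ many times, never on defining or verifying a single stage, and, most delicately, that forming the run structure of $\mathcal A$ on $X$ and extracting the decomposition witnessing $X \in \lang(\mathcal A^c)$ require neither arithmetical comprehension nor an infinitary pigeonhole over $\omega$. We emphasise that, unlike Lemma~\ref{lem:aut-clos-props-in-so-arith}.\ref{item:compl-rca-s2ind}, this gives only a schema over standard $\mathcal A$: there is no single $\RCA$ proof of $\forall\,\text{NBA}\ \mathcal A .\, (\ldots)$, and correspondingly no uniform bound on the sizes of the resulting $\RCA$ proofs -- which is exactly why $\CIND{\Sin 0 2}$ was needed for the uniform version used in Sect.~\ref{sect:ind-sim-cyc}.
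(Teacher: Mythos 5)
Your guiding idea is the same one the paper uses: because $\mathcal A$ is a fixed, standard automaton, the induction that forces $\CIND{\Sin 0 2}$ in the uniform statement has standard length and can be unwound into a meta-level induction, yielding an $\RCA$ proof for each concrete $\mathcal A$ (and, as you correctly note, only a schema, with no uniform bound on proof size). So the strategy is right, and your soundness/completeness split matches the two directions of the paper's argument in App.~\ref{sect:complementation-nonuniform}.

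However, there is a genuine gap, and it sits exactly where you wave at ``bookkeeping''. The $\Sin 0 2$ induction in Lemma~\ref{lem:aut-clos-props-in-so-arith}.\ref{item:compl-rca-s2ind} is not spent on ``iterating the construction of successive layers of the run structure'': it is spent on the \emph{Additive Ramsey Theorem} for the finite semigroup of transition matrices $\delta(A^*)$, i.e.\ the induction on the number of colours that produces the Ramseyan factorisation $\beta,\gamma,I$ of the input word (together with the infinite pigeonhole principle over a finite range). The non-uniform content of the proposition is precisely that, for a \emph{standard} finite semigroup $S$, both of these are $\RCA$-theorems, proved by meta-level induction on $|S|$ (Thm.~\ref{thm:additive-ramsey-nonunif} and Fact~\ref{fact:iphp-nonunif}); your proposal never identifies this, so the step you defer is the mathematical core rather than routine checking. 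Relatedly, your stated desideratum that the argument use no ``infinitary pigeonhole over $\omega$'' is both unachievable and unnecessary: the soundness direction genuinely needs a state of $Q$ recurring \emph{cofinally} among the factorisation points $I$ (two occurrences give no final state in one block only, which does not contradict acceptance), and the non-uniform infinite pigeonhole over the standard finite set $Q$ is exactly the $\RCA$-provable tool that supplies this. With the Nonuniform Additive Ramsey Theorem and Fact~\ref{fact:iphp-nonunif} in hand, the rest of your outline (defining the runs of $\mathcal A^c$ by $\CCA{\Din 0 1}$ and verifying them by $\CIND{\Sin 0 1}$) does go through as in Lemma~\ref{lem:decomposition-of-word} and the proof of Prop.~\ref{prop:nonuniform-compl}.
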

	\noindent
	Notice here that the universal quantification over NBA is \emph{external}, so that the complementation proofs are not necessarily uniform. 
	This is not a trivial result, since it relies on a version of Ramsey's theorem, the \emph{additive Ramsey theorem}, which can be proved by induction on the number of `colours'. 
	Usual forms of Ramsey's theorem are not proved by such an argument, and in fact it is well known that $\RCA$ cannot even prove Ramsey's theorem for pairs with only two colours (see, e.g., \cite{Hir14:reverse-math}).
	We include in App.~\ref{sect:complementation-nonuniform} a self-contained (and somewhat simpler) proof of Prop.~\ref{prop:nonuniform-compl} above, for completeness.

	This allows us to `un-uniformise' the results of the previous section, using Prop.~\ref{prop:nonuniform-compl} above instead of Lemma~\ref{lem:aut-clos-props-in-so-arith}.\ref{item:compl-rca-s2ind}, in order to `trade off' proof complexity for logical complexity:
	
	
	\begin{prop}
		[Soundness of cyclic proofs, non-uniformly formalised]
		\label{prop:soundness-nonunif}
		Let $n\geq 0 $ and $\pi$ be a $\CA $ proof containing only $\Sin{}{n}$ formulae. 
		$\ISn{n+1} \proves \forall \rho_0.  (\Nat, \rho_0 \models_n \conc (\pi))$.
	\end{prop}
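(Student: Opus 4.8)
Fix such a $\pi$ and let $n$ be least with every formula of $\pi$ in $\Sin{}{n}$, so that the branch and trace automata $\mathcal A^\pi_b,\mathcal A^\pi_t$ are standard finite objects with $\mathcal A^\pi_b$ a DBA. The plan is to replay the proof of Thm.~\ref{thm:soundness-formalised} for these \emph{fixed} automata, using Prop.~\ref{prop:nonuniform-compl} in place of Lemma~\ref{lem:aut-clos-props-in-so-arith}.\ref{item:compl-rca-s2ind}; since complementation of a fixed NBA is already available in $\RCA$, this removes the one use of $\CIND{\Sin 0 2}$ from the proof of Thm.~\ref{thm:arithmetisation-of-correctness}, lowering the final bound from $\ISn{n+2}$ to $\ISn{n+1}$. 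The price is non-uniformity: one can no longer package this as a single arithmetic theorem quantifying over all cyclic proofs, which is why the statement fixes $\pi$ and makes no claim about proof size.

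First, since $\pi$ is a genuine cyclic proof, $\mathcal A^\pi_b \sqsubseteq \mathcal A^\pi_t$; this is a true $\Sin{}{1}$ (indeed bounded) statement about a fixed automaton, hence provable already in $\Q \subseteq \ISn{n+1}$ by $\Sin{}{1}$-completeness. Next I would establish the non-uniform analogue of Cor.~\ref{cor:is2-prov-aracc}: running the chain of implications from the proof of Thm.~\ref{thm:arithmetisation-of-correctness} with $\mathcal A_1 := \mathcal A^\pi_b$ and $\mathcal A_2 := \mathcal A^\pi_t$, every step is available in $\RCA$ --- Lemma~\ref{lem:aut-clos-props-in-so-arith}.\ref{item:emptiness-rca}, \ref{item:union-rca}, \ref{item:compl-dba-rca} and Prop.~\ref{prop:arith-acc} are all $\RCA$ results, and the previously problematic step $Y \notin \lang((\mathcal A_1^c \sqcup \mathcal A_2)^c) \implies Y \in \lang(\mathcal A_1^c \sqcup \mathcal A_2)$ is now an instance of Prop.~\ref{prop:nonuniform-compl} for the \emph{fixed} NBA $\mathcal A_1^c \sqcup \mathcal A_2$. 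This yields $\RCA \proves (\mathcal A^\pi_b \sqsubseteq \mathcal A^\pi_t \cand X \in \lang(\mathcal A^\pi_b)) \cimp \ArAcc(X,\mathcal A^\pi_t)$; since DBA acceptance and $\ArAcc$ are both arithmetical in $X$, this is a genuine statement of first-order arithmetic with oracle $X$, so by Prop.~\ref{prop:conservativity} (noting $\RCA = \RCA + \CIND{\Sin 0 1}$) we obtain $\ISn 1(X) \proves (\mathcal A^\pi_b \sqsubseteq \mathcal A^\pi_t \cand X \in \lang(\mathcal A^\pi_b)) \cimp \ArAcc(X,\mathcal A^\pi_t)$.

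Now instantiate the oracle $X$ by the formula $\Branch_n(\pi,\rho_0,-,-)$, which is $\Din{}{n+1}$ in $\ISn{n+1}$ by Prop.~\ref{prop:inv-branch-isn}; exactly as in the proof of Thm.~\ref{thm:soundness-formalised} this raises the $\ISn 1(X)$ proof to an $\ISn{n+1}$ one. Combining with the $\ISn{n+1}$-provable facts $\mathcal A^\pi_b \sqsubseteq \mathcal A^\pi_t$ and (Prop.~\ref{prop:inv-branch-isn}) $\Nat,\rho_0 \notmodels_n \conc(\pi) \cimp \Branch_n(\pi,\rho_0,-,-) \in \lang(\mathcal A^\pi_b)$, modus ponens gives $\ISn{n+1} \proves \Nat,\rho_0 \notmodels_n \conc(\pi) \cimp \ArAcc(\Branch_n(\pi,\rho_0,-,-),\mathcal A^\pi_t)$. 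I would finish with the descent argument from the end of the proof of Thm.~\ref{thm:soundness-formalised}: assuming $\Nat,\rho_0 \notmodels_n \conc(\pi)$, take a position $x$ witnessing $\ArAcc$, let $y$ bound the values $\rho_x(t)$ over terms $t$ in the $x$-th sequent, obtain a finite trace $z$ from that sequent progressing $y+1$ times, and derive $y \geq \rho_x(z(0)) \geq \rho_{x+|z|}(z(|z|)) + (y+1) \geq y+1$ by a bounded induction on $|z|$ --- a contradiction. Hence $\ISn{n+1} \proves \forall \rho_0.\, \Nat,\rho_0 \models_n \conc(\pi)$.

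The main point to get right, as opposed to the uniform argument of Sect.~\ref{sect:ind-sim-cyc}, is the externality of Prop.~\ref{prop:nonuniform-compl}: one must fix $\pi$ before invoking it and treat $\mathcal A^\pi_b$, $\mathcal A^\pi_t$ and all derived automata as standard finite objects, so that the ``arithmetisation of correctness'' is re-proved for these particular automata rather than uniformly. Everything else is a near-verbatim replay of Sect.~\ref{sect:ind-sim-cyc} with $\RCA$ replacing $\RCA + \CIND{\Sin 0 2}$ and hence $\ISn{n+1}$ replacing $\ISn{n+2}$; there is no new mathematics, only careful reassembly and bookkeeping of which induction each sub-argument consumes.
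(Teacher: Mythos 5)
Your proposal is correct and follows essentially the same route as the paper's own (much terser) proof sketch: fix $\pi$, replace Lemma~\ref{lem:aut-clos-props-in-so-arith}.\ref{item:compl-rca-s2ind} by the non-uniform complementation of Prop.~\ref{prop:nonuniform-compl} for the fixed automaton $(\mathcal A^\pi_b)^c \sqcup \mathcal A^\pi_t$, thereby obtaining the analogues of Thm.~\ref{thm:arithmetisation-of-correctness} and Cor.~\ref{cor:is2-prov-aracc} in $\RCA$ and $\ISn 1(X)$ respectively, and then instantiate $X$ by $\Branch_n$ and rerun the descent argument in $\ISn{n+1}$. The details you supply (the $\Sigma_1$-completeness argument for $\mathcal A^\pi_b \sqsubseteq \mathcal A^\pi_t$, the conservativity step, and the oracle substitution raising $\ISn 1(X)$ to $\ISn{n+1}$) are all consistent with the surrounding development.
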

	\begin{proof}
		[Proof sketch]
		We mimic the entire argument of Thm.~\ref{thm:soundness-formalised} by instantiating the fixed proof $\pi$ and using Prop.~\ref{prop:nonuniform-compl} above instead of Lemma~\ref{lem:aut-clos-props-in-so-arith}.\ref{item:compl-rca-s2ind}. 
		In particular, the required `non-uniform' versions of Thm.~\ref{thm:arithmetisation-of-correctness} and Cor.~\ref{cor:is2-prov-aracc} become derivable in $\RCA$ and $\ISn 1 (X)$ resp., thus reducing the global induction complexity by one level.
		Hence we arrive at a `non-uniform' version of Thm.~\ref{thm:soundness-formalised}, peculiar to the fixed proof $\pi$ we began with, proved entirely within $\ISn {n+1}$, as required.
	\end{proof}

	\begin{thm}
		\label{thm:nonuniform-ind-sim-cyc}
		For $n\geq 0$, we have that $\CSn n \subseteq \ISn{n+1}$.
	\end{thm}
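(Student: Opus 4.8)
The plan is to combine the non-uniform formalised soundness result, Prop.~\ref{prop:soundness-nonunif}, with the reflection property, Prop.~\ref{prop:reflection}. Recall from Dfn.~\ref{dfn:cphi} that $\CSn n$ is the \emph{first-order theory} axiomatised by the universal closures of conclusions of cyclic proofs containing only $\Sin{}{n}$-formulae. Since $\ISn{n+1}$ is itself a first-order theory, i.e.\ closed under logical consequence, it suffices to show that each such axiom is provable in $\ISn{n+1}$; the remaining (possibly more complex) theorems of $\CSn n$ then follow by pure logic.

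Concretely, I would take a cyclic proof $\pi$ containing only $\Sin{}{n}$ formulae, with conclusion the sequent $\Gamma \seqar \Delta$, and let $\vec x$ enumerate the free variables occurring in it. By Prop.~\ref{prop:soundness-nonunif} we have $\ISn{n+1} \proves \forall \rho_0 .\ (\Nat, \rho_0 \models_n \conc(\pi))$, i.e.\ $\ISn{n+1}$ proves $\forall \rho_0 .\ ((\Nat, \rho_0 \notmodels_n \bigwedge \Gamma) \corr (\Nat, \rho_0 \models_n \bigvee \Delta))$. Now $\bigwedge \Gamma \cimp \bigvee \Delta$ is a Boolean combination of $\Sin{}{n}$ formulae, so the reflection property (Prop.~\ref{prop:reflection}, or rather its proof, which uses only the Tarski conditions of Prop.~\ref{prop:tarski}) yields, in $\ISn 1$ and hence in $\ISn{n+1}$, a proof of $\forall \vec x .\ ((\bigwedge \Gamma \cimp \bigvee \Delta) \ciff (\Nat, \{ \vec x \mapsto \vec x \} \models_n (\Gamma \seqar \Delta)))$. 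Combining the two, $\ISn{n+1}$ proves $\forall \vec x .\ (\bigwedge \Gamma \cimp \bigvee \Delta)$, which is exactly the universal closure of $\conc(\pi)$, as required.

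The only point that needs slight care is the passage from the satisfaction predicate $\models_n$ back to the object-level formula in the presence of free variables: Prop.~\ref{prop:reflection} as stated concerns closed formulae, whereas the conclusion of $\pi$ carries the parameters $\vec x$. This I would handle by the standard device of proving, for the fixed matrix $\theta \dfn (\bigwedge \Gamma \cimp \bigvee \Delta) \in \bool(\Sin{}{n})$, the ``reflection with parameters'' equivalence $\forall \vec x .\ (\theta(\vec x) \ciff (\Nat, \{ \vec x \mapsto \vec x \} \models_n \theta))$ by an external induction on the structure of $\theta$, invoking at each step the corresponding clause of Prop.~\ref{prop:tarski} and, for the atomic base cases, the substitution clause Prop.~\ref{prop:tarski}.\ref{item:tarski-subst}. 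Since everything after Prop.~\ref{prop:soundness-nonunif} is this sort of routine bookkeeping, I do not expect a genuine obstacle here; the real content is entirely contained in the non-uniform formalised soundness argument, which is where the logical-complexity bound of one level is actually extracted.
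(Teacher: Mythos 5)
Your proposal is correct and follows essentially the same route as the paper: the paper's proof likewise reduces, via Dfn.~\ref{dfn:cphi}, to deriving in $\ISn{n+1}$ the (possibly open) conclusions of $\Sin{}{n}$-only cyclic proofs, and then cites Props.~\ref{prop:soundness-nonunif} and \ref{prop:reflection}. Your additional care about reflection with free parameters is just a fuller spelling-out of what the paper leaves implicit.
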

	
	\begin{proof}
		By the definition of $\CSn n$, cf.~Dfn.~\ref{dfn:cphi}, it suffices to derive in $\ISn{n+1}$ just the (possibly open) conclusions of $\CSn{n}$ proofs containing only $\Sin {} n$ formulae,
		and so the result follows directly from Props.~ \ref{prop:soundness-nonunif} and \ref{prop:reflection}.
		%
	\end{proof}
	
	\subsection{On the proof complexity of $\CSn n$}
	\label{sect:sect:prf-comp-csn}
	One might be tempted to conclude that the elementary simulation of $\CA$ by $\PA$ should go through already for $\CSn n$ by $\ISn{n+2}$ (independently of $n$), due to the bounds implicit in the proof of Thm.~\ref{thm:elementary-simulation}. 
	Furthermore, if we are willing to give up a few more exponentials in complexity, one may even bound the size of $\ISn 1$ proofs arising from Prop.~\ref{prop:nonuniform-compl} by an appropriate elementary function (though this analysis is beyond the scope of this paper).
	%
	
	However, we must be conscious of the `robustness' of the definition of $\CSn n$ proofs in terms of complexity.
	The one we gave, which essentially requires cyclic proofs to contain only $\Sin{}{n}$-formulae, is more similar to `free-cut free' $\ISn n $ proofs than general ones, cf.~Thm.~\ref{thm:free-cut-elim}, so it seems unfair to compare these notions of proof in terms of proof complexity.
	In fact we may define a more natural notion of a $\CSn n$ proof from the point of view of complexity, while inducing the same theory.
	
	\smallskip
	First, let us recall some notions from, e.g., \cite{Bro06:phdthesis,BroSim11:seq-calc-ind-inf-desc}. 
	Any cyclic proof can be written in `cycle normal form', where any `backpointer' (e.g., the conclusions of upper sequents marked $\bullet$ until now) points only to a sequent that has occurred below it in the proof.
	Referring to the terminology of \cite{Bro06:phdthesis,BroSim11:seq-calc-ind-inf-desc} etc., we say that a sequent is a \emph{bud} in a cyclic proof if it has a backpointer pointing to an identical sequent (called the \emph{companion}). 
	\begin{prop}
		\label{prop:cyclic-sn-general-dfn}
		If $\phi $ has a $\CA$ proof whose buds and companions contain only $\Sin {} n$ formulae then $\CSn n \proves \phi$.	
	\end{prop}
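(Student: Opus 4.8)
The plan is to reduce a $\CA$ proof whose buds and companions are $\Sin{}{n}$ to one in which \emph{every} sequent is $\Sin{}{n}$, and then invoke the definition of $\CSn n$ together with the soundness machinery already established (Prop.~\ref{prop:soundness-nonunif} and Prop.~\ref{prop:reflection}). Concretely, suppose $\pi$ is a $\CA$ proof of $\phi$, written in cycle normal form, whose buds and companions all consist of $\Sin{}{n}$ formulae. Each cycle of $\pi$ passes through exactly one companion; by unfolding appropriately we may view $\pi$ as built from finitely many \emph{finite} derivations, each of which goes from a multiset of $\Sin{}{n}$ sequents (bud occurrences and possibly the companions below them) to a single $\Sin{}{n}$ sequent (a companion, or the conclusion $\seqar\phi$ if this too is arranged to be $\Sin{}n$; if $\phi$ is more complex we add a preliminary block of quantifier rules so the `inner' conclusion is $\Sin{}n$, at only polynomial cost). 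Thus the cyclic structure of $\pi$ is a finite directed graph whose nodes are labelled by $\Sin{}n$ sequents, and whose edges are labelled by finite, loop-free derivations in the sequent calculus for $\Q$ with possibly complex intermediate formulae.

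The key step is to replace each such finite loop-free derivation $\mathcal D$, from $\Sin{}n$ premisses $S_1,\dots,S_k$ to an $\Sin{}n$ conclusion $S$, by another finite loop-free derivation $\mathcal D'$ with the same premisses and conclusion but containing \emph{only} $\Sin{}n$ formulae. This is exactly a free-cut elimination argument at the level of a finite derivation: treating $S_1,\dots,S_k$ as extra nonlogical initial sequents (they are closed under nothing, but that is irrelevant — we only need them as leaves), Thm.~\ref{thm:free-cut-elim} gives a derivation of $S$ from the $S_i$ in which every formula is a substitution instance of a subformula of $S$, of some $S_i$, or of a formula in some other nonlogical step (i.e.\ an axiom of $\Q$, which we have arranged to be $\Din{}0\subseteq\Sin{}n$). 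Since $S$ and the $S_i$ are all $\Sin{}n$, and $\Sin{}n$ is closed under subformulae up to the prenexing/coding conventions discussed at the end of Sect.~\ref{sect:prelims-pa-pt}, the resulting $\mathcal D'$ contains only $\Sin{}n$ formulae. Crucially, this transformation is \emph{local} — it does not touch the premisses or conclusion — so performing it on every edge of the finite graph yields a new cyclic preproof $\pi'$ of $\phi$ in which every sequent is $\Sin{}n$, with the same underlying cycle graph as $\pi$, hence the same set of infinite branches.

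It remains to check that $\pi'$ is still a genuine $\infty$-proof, i.e.\ that every infinite branch still carries an infinitely progressing trace; this is the step I expect to be the main obstacle, since free-cut elimination rearranges the interior of each derivation and need not preserve traces through it. However, the traces that matter are the ones witnessing progress around cycles, and these can always be arranged to pass through the bud/companion sequents: along any infinite branch of $\pi'$, the branch traverses the same infinite sequence of edges (finite derivations) as in $\pi$, and in particular passes through infinitely many companion sequents from some fixed strongly connected component. Since each companion is $\Sin{}n$, the progressing trace of $\pi$ restricted to the companion occurrences gives a sequence of terms $t_0,t_1,\dots$ appearing at those companions, with strict decrease guaranteed by the original progress points; within each finite derivation $\mathcal D'$ we may reconnect $t_i$ to $t_{i+1}$ by a trace through $\mathcal D'$ — either by tracking an occurrence of $t_i$, or, appealing to Rmk.~\ref{rmk:traces-occurrences}, by simply observing that $t_i$ need not literally occur and that no problematic eigenvariable clashes arise after suitable renaming (which, again, is local to $\mathcal D'$ and does not disturb the companions). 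Thus $\pi'$ is an $\infty$-proof containing only $\Sin{}n$ formulae, so by Dfn.~\ref{dfn:cphi} its conclusion — the universal closure of which is $\phi$, or which is $\seqar\phi$ outright after the preliminary quantifier block — is a theorem of $\CSn n$, and hence $\CSn n \proves \phi$. (Alternatively, once $\pi'$ is in hand, one may bypass the appeal to the definition and simply invoke Prop.~\ref{prop:soundness-nonunif} followed by the reflection property Prop.~\ref{prop:reflection} to conclude even $\ISn{n+1}\proves\phi$, but for the stated proposition the appeal to Dfn.~\ref{dfn:cphi} suffices.)
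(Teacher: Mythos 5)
Your overall strategy coincides with the paper's: the paper's entire proof is the one-line remark that one applies free-cut elimination (Thm.~\ref{thm:free-cut-elim}) to the proof in cycle normal form, treating the backpointers as initial sequents — exactly the decomposition and local replacement you describe in your first two paragraphs. You go further than the paper in explicitly isolating the real crux, namely that the resulting preproof must still satisfy the global trace condition.

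That said, your handling of that crux is the one place the argument as written does not close. A trace in $\pi'$ progresses only where an atomic formula $t_{i+1}<t$ \emph{syntactically occurs} in an antecedent of $\pi'$; the ``strict decrease guaranteed by the original progress points'' is a fact about $\pi$, whose progress points sit in the interiors of the derivations $\mathcal D$ and need not survive free-cut elimination. Reconnecting $t_i$ to $t_{i+1}$ across $\mathcal D'$ by tracking an occurrence, or by the constant-trace device of Rmk.~\ref{rmk:traces-occurrences}, produces a legitimate trace but one with \emph{no} progress points, so it does not witness correctness of $\pi'$. To finish one must either verify that each cut-reduction step of Thm.~\ref{thm:free-cut-elim} preserves traces \emph{together with} their progress points (plausible, precisely because traces follow terms rather than formula occurrences, but it requires inspecting the reductions), or first massage $\pi$ so that the relevant inequalities are recorded at the bud and companion sequents themselves (which stay $\Sin{}{n}$, since atomic formulae are $\Din{}{0}$) and hence are untouched by the local rewriting. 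Since the paper offers only a proof idea and is silent on this point, I would call this a gap you have inherited and made visible rather than a wrong turn; but as stated, the reconnection step does not establish the progress condition.
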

	\begin{proof}
		[Proof idea]
		Once again, we simply apply free-cut elimination, Thm.~\ref{thm:free-cut-elim}, treating any backpointers as `initial sequents' for a proof in cycle normal form.
	\end{proof}
	%
	
	This result again shows the robustness of the definition of $\CSn n$ as a theory, and we would further argue that, from the point of view of proof complexity, the backpointer-condition induced by Prop.~\ref{prop:cyclic-sn-general-dfn} above constitutes a better notion of `proof' for $\CSn n$. 
	One could go yet further and argue that an even better notion of `proof' would allow pointers to any other identical sequent in the proof (not just those below it), which could potentially make for an exponential improvement in proof complexity.

	At the same time we see that it is not easy to compare the proof complexity of such systems for $\CSn n $ with those for $\ISn{n+1}$, due to the fact that we have used a free-cut elimination result for the simulations in both directions, inducing a possibly non-elementary blowup in proof size.
	It would be interesting if a more fine-grained result regarding the relative proof complexity of $\ISn{n+1}$ and $\CSn{n}$ could be established, but this is beyond the scope of the current work.

	%
	

	\section{Some metamathematical results}
	\label{sect:further-metalogical}
	
	In this section we make some further observations on various properties of the cyclic theories in this work, which are later applied in Sect.~\ref{sect:red-to-det}. The exposition we give is brief, since we follow standard methods, but we provide appropriate references for the reader.

	\subsection{Provably recursive functions of $\CDn 0$}
	\label{sect:sect:prov-rec-fns-cd0}
	As a corollary of the Thms.~\ref{thm:cyclic-sim-ind} and \ref{thm:nonuniform-ind-sim-cyc} we have that, for $n\geq 1$, the provably recursive functions of $\CSn n $ coincide with those of $\ISn{n+1}$. Such functions were characterised by Parsons in \cite{Par72:n-quant-ind} as certain fragments of G\"odel's system $\mathsf T$ or, equivalently, by recursion up to suitably bounded towers of $\omega$, e.g.\ in \cite{Buss1995witness}. 
	This apparently leaves open a gap for the case of $\CDn 0 $ (a.k.a.\ $ \CSn 0$).
	However notice that from the definition of $\CSn n $ and Thm.~\ref{thm:cyclic-sim-ind} we have:
	
	\begin{cor}
		[of Thm.~\ref{thm:cyclic-sim-ind}]
		\label{cor:pi-axiomatised}
		$\CSn n$ is axiomatised by the $\Pin{}{n+ 1}$-consequences of $\ISn {n+1}$, for $n\geq 0$.
	\end{cor}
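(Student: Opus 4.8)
The plan is to combine the two main inclusions established so far with the robustness of the axiomatisation of $\CSn n$. Concretely, by Definition~\ref{dfn:cphi}, $\CSn n$ is axiomatised by the universal closures of conclusions of cyclic proofs containing only $\Sin{}{n}$-formulae. Since prenexing a block of universal quantifiers over a $\Sin{}{n}$ matrix (itself a positive Boolean combination of $\Sin{}{n}$ formulae, hence equivalent to a $\Sin{}{n}$ formula) yields a $\Pin{}{n+1}$ formula, every axiom of $\CSn n$ is (logically equivalent to) a $\Pin{}{n+1}$ sentence. By Theorem~\ref{thm:nonuniform-ind-sim-cyc}, $\CSn n \subseteq \ISn{n+1}$, so each such axiom is a $\Pin{}{n+1}$-consequence of $\ISn{n+1}$. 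This gives the inclusion from left to right: $\CSn n$ is contained in (the theory axiomatised by) the $\Pin{}{n+1}$-consequences of $\ISn{n+1}$.

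For the converse, I would invoke Theorem~\ref{thm:cyclic-sim-ind}: $\ISn{n+1} \subseteq \CSn n$ over $\Pin{}{n+1}$ theorems. This says precisely that every $\Pin{}{n+1}$ sentence provable in $\ISn{n+1}$ is already provable in $\CSn n$. Hence the set of $\Pin{}{n+1}$-consequences of $\ISn{n+1}$ is contained in $\CSn n$, and a fortiori the theory it axiomatises is contained in $\CSn n$ (since $\CSn n$ is closed under logical consequence by definition). Combining the two directions yields that $\CSn n$ and the theory axiomatised by the $\Pin{}{n+1}$-consequences of $\ISn{n+1}$ prove exactly the same sentences, i.e.\ $\CSn n$ is axiomatised by the $\Pin{}{n+1}$-consequences of $\ISn{n+1}$, for $n \geq 0$.

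There is essentially no hard step here — the corollary is a bookkeeping consequence of Theorems~\ref{thm:cyclic-sim-ind} and~\ref{thm:nonuniform-ind-sim-cyc}. The only point requiring a word of care is the observation that the conclusions of cyclic proofs containing only $\Sin{}{n}$ formulae have $\Pin{}{n+1}$ universal closures; this uses the fact, noted at the end of Sect.~\ref{sect:prelims-pa-pt}, that $\Sin{}{n}$ (and $\Pin{}{n}$) are closed under positive Boolean combinations up to logical equivalence, and that a sequent $\Gamma \seqar \Delta$ of $\Sin{}{n}$ formulae is interpreted as $\bigwedge\Gamma \cimp \bigvee\Delta$, which is equivalent to a $\Sin{}{n}$ formula, so its universal closure is $\Pin{}{n+1}$. (One could equally well phrase $\CSn n$ via $\CPn n = \CSn n$, cf.~Prop.~\ref{prop:csn-eq-cpn}, and work with $\Pin{}{n}$ sequents directly.) Given that, the corollary is immediate.
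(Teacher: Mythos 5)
Your argument is correct and is exactly the bookkeeping the paper intends: the axioms of $\CSn n$ (universal closures of $\Sin{}{n}$-sequent conclusions, which are $\Pin{}{n+1}$) are $\Pin{}{n+1}$-consequences of $\ISn{n+1}$ by Thm.~\ref{thm:nonuniform-ind-sim-cyc}, and conversely every $\Pin{}{n+1}$-consequence of $\ISn{n+1}$ lies in $\CSn n$ by Thm.~\ref{thm:cyclic-sim-ind}. One small imprecision: for a $\Sin{}{n}$ sequent $\Gamma \seqar \Delta$, the formula $\bigwedge\Gamma \cimp \bigvee\Delta$ is a Boolean combination of $\Sin{}{n}$ formulae in which the antecedent is negated (so it is a disjunction of a $\Pin{}{n}$ and a $\Sin{}{n}$ formula), hence not in general equivalent to a $\Sin{}{n}$ formula as you claim — but it is $\Din{}{n+1}$, so its universal closure is still $\Pin{}{n+1}$ and your conclusion stands.
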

	
	\noindent
	In particular, since $\IDn 0 $ is known to be $\Pin{}1$-axiomatised (see, e.g., Thm.~1.27 in \cite{HajPud:93}), we have that in fact $\CDn 0 \supseteq \IDn 0 $ (i.e.\ over \emph{all} theorems).
	Thus $\CDn 0 $ can prove recursive \emph{at least} the functions (bitwise computable) in the linear-time hierarchy, from the analogous result for $\IDn 0$ (see, e.g., Theorem~III.4.8 in \cite{Cook:2010:LFP:1734064}).
	 Conversely, since $\CDn 0 $ is also $\Pin{}1$-axiomatised by the above observation, it has no more `computational content' than $\IDn 0 $, as we will now see.
	 
	 Recall that the {linear-time hierarchy} ($\mathbf{LTH}$) is the class of predicates expressible by a $\Din{}0$ formula in the language of arithmetic.\footnote{Equivalently it is the class of predicates recognised by an alternating Turing machine with random access in a linear number of steps with only boundedly many alternations. See, e.g., \cite{CloKra02} for more details.}
	 A function is said to be in $\flth$ just if it has linear growth rate (in terms of the bit string representation) and is bitwise computable in $\lth$. 
	
	\begin{prop}
		\label{prop:cd0-lth}
		If $\CDn 0 \proves \forall \vec x . \exists ! y . \phi(\vec x , y)$ for a $\Sin{}1$-formula $\phi$, then $\phi$ computes the graph of a function in $\flth$.
	\end{prop}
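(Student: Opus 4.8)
The plan is to use the two facts recorded just before the statement: that $\CDn 0$ is $\Pin{}1$-axiomatised (by Cor.~\ref{cor:pi-axiomatised} its axioms may be taken to be the $\Pin{}1$-consequences of $\ISn 1$, each a $\Pin{}1$ sentence) and that $\CDn 0 \supseteq \IDn 0$. Thus $\CDn 0$ is $\IDn 0$ augmented by a set of $\Pin{}1$ sentences, all true by soundness of cyclic proofs (Prop.~\ref{prop:sound-cyclic}). If $\CDn 0 \proves \forall \vec x . \exists ! y . \phi(\vec x, y)$, then by compactness already $\IDn 0 + \theta \proves \forall \vec x . \exists ! y . \phi(\vec x, y)$ for a single true $\Pin{}1$ sentence $\theta = \forall z . \eta(z)$, with $\eta \in \Din{}0$; write $\phi(\vec x, y)$ as $\exists w . \phi_0(\vec x, y, w)$ with $\phi_0 \in \Din{}0$, coding a block of existentials as one via $\beta$-functions as usual.

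The crux is a Parikh-style bounding argument that tolerates the extra $\Pin{}1$ axiom. From $\IDn 0 \proves \theta \cimp \forall \vec x . \exists y . \phi(\vec x, y)$ one obtains, purely logically, $\IDn 0 \proves \forall \vec x . \exists y . \exists z . \exists w . (\cnot \eta(z) \corr \phi_0(\vec x, y, w))$; the matrix here is $\Din{}0$, so coding $y, z, w$ as one variable and applying Parikh's theorem for $\IDn 0$ (see, e.g., \cite{HajPud:93}) yields a term $t$ with $\IDn 0$ proving $\forall \vec x . \exists y . \exists z . \exists w . (\cnot \eta(z) \corr \phi_0(\vec x, y, w))$ where $y, z, w$ are all bounded by $t(\vec x)$. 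Re-introducing $\theta$ falsifies each $\cnot \eta(z)$, giving
\[
\IDn 0 + \theta \ \proves \ \forall \vec x . \exists y {<} t(\vec x) . \exists w {<} t(\vec x) . \phi_0(\vec x, y, w) .
\]

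To finish, pass to $\Nat$: by soundness $\Nat$ satisfies both $\forall \vec x . \exists ! y . \phi(\vec x, y)$ and the bounded statement above, so for each $\vec x$ the unique $y$ with $\phi(\vec x, y)$ lies below $t(\vec x)$ and is the unique $y {<} t(\vec x)$ with $\exists w {<} t(\vec x) . \phi_0(\vec x, y, w)$ (uniqueness transfers, since a bounded witness yields a genuine one). Hence over $\Nat$ we have $\phi(\vec x, y) \ciff ( y {<} t(\vec x) \cand \exists w {<} t(\vec x) . \phi_0(\vec x, y, w))$, a $\Din{}0$ formula, so the function $\vec x \mapsto {}$``the $y$ with $\phi(\vec x, y)$'' has a $\Din{}0$ graph; since terms over $\{0, \succ, +, \times\}$ grow polynomially it has linear growth rate in bit-length, and its bits are extracted by a $\Din{}0$, hence $\lth$, predicate. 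That is exactly membership in $\flth$; this also recovers the ``$\leq$'' half of the analogous characterisation for $\IDn 0$ (cf.\ Thm.~III.4.8 of \cite{Cook:2010:LFP:1734064}).

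I expect the one genuinely non-routine point to be the Parikh-style bounding over the $\Pin{}1$-extension $\IDn 0 + \theta$: one must check that pushing $\cnot \theta$ into the existential block, bounding, and then re-asserting $\theta$ is legitimate --- morally, a $\Pin{}1$ axiom adds no term-bounded content beyond its instances below any given term, so Parikh survives. The remaining steps --- coding blocks of quantifiers via $\beta$-functions, handling the uniqueness clause, and matching the outcome to the definition of $\flth$ --- are routine.
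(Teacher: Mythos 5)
Your proof is correct and follows essentially the same route as the paper, which also rests on the $\Pin{}1$-axiomatisation of $\CDn 0$ (Cor.~\ref{cor:pi-axiomatised}) together with Parikh's theorem for $\Pin{}1$-axiomatised theories to get a term bound on the witness, and then reads off membership in $\flth$. The only difference is that the paper cites that generalised Parikh theorem as a black box, whereas you re-derive it from the basic version by folding $\cnot\theta$ into the existential block --- which is precisely the standard proof of the cited result.
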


%
%

%

	\begin{proof}
		[Proof sketch]
%
Suppose $\phi $ is $\exists \vec y . \phi_0$ with $\phi_0 \in \Din{} 0 $.
			By `Parikh's theorem' for $\Pin{}1$-axiomatised theories (see, e.g., Thm.~III.2.3 in \cite{Cook:2010:LFP:1734064}) we moreover have $\CDn 0 \proves \forall \vec x . \exists! y < t . \exists \vec y < \vec t .  \phi_0$ for some terms $t, \vec t$.
		This means we may simply search for a witness $y<t$ of linear size verifying $\exists \vec y < \vec t . \phi_0 $, an $\lth$ property, and thus $\exists \vec y < \vec t . \phi_0 $ computes the graph of a function in $\flth$.
\end{proof}


Interestingly, we cannot strengthen the above proposition to ``$\CDn 0 $ and $\IDn 0 $ prove the same $\Pin{}2$ theorems''. This is because $\ISn 1$ proves the \emph{consistency} of $\IDn 0 $, a $\Pin{} 1$ sentence, so $\CDn 0 $ does too by Thm.~\ref{thm:cyclic-sim-ind}.
Thus the aforementioned stronger statement would contradict G\"odel's second incompleteness theorem for $\IDn 0 $.
	\smallskip
	
	See, e.g., \cite{Bus98:handbook-of-pt,Cook:2010:LFP:1734064} for further discussions on the provably recursive functions of fragments of (bounded) arithmetic, and see, e.g., \cite{CloKra02} for more details on relationships between the language of arithmetic and recursive function classes.

	\subsection{Failure of cut-admissibility}
	\label{sect:sect:cut-ad-fails}
	As a corollary of our results, we may formally conclude that the cut rule is not admissible in $\CA$, or indeed any of its fragments $\CSn n$.\footnote{This observation was pointed out to me by Stefano Berardi.}
	In fact the situation is rather worse than that:
	
	\begin{cor}
		[of Thms.~\ref{thm:cyclic-sim-ind} and \ref{thm:nonuniform-ind-sim-cyc}]
		\label{prop:cut-non-admiss}
		Let $n\geq 1$. The class of $\CA$ proofs with only $\Sin {} {n-1} $ cuts is not complete for even the $\Pin{}1$ theorems of $\CSn {n}$. 
	\end{cor}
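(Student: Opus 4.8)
I would prove this by exhibiting a concrete true $\Pin{}1$ sentence that is a theorem of $\CSn n$ but has no $\CA$ proof with only $\Sin{}{n-1}$ cuts; the candidate is $\cons{\ISn n}$. On the positive side: for $n\geq 1$, $\ISn{n+1}\proves\cons{\ISn n}$ (the usual consequence of uniform $\Sin{}n$-reflection for $\ISn n$, cf.~\cite{HajPud:93}), and since $\cons{\ISn n}$ is $\Pin{}1$, hence $\Pin{}{n+1}$, Thm.~\ref{thm:cyclic-sim-ind} (equivalently Cor.~\ref{cor:pi-axiomatised}) gives $\CSn n\proves\cons{\ISn n}$.

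The heart of the argument is the following claim: \emph{if $\phi\in\Pin{}1$ and some $\CA$ proof $\pi$ of $\phi$ uses only $\Sin{}{n-1}$ cut formulae, then $\ISn n\proves\phi$.} First I would carry out a subformula analysis: since the calculus for $\Q$ has no induction rule and its only nonlogical initial sequents are $\Din{}0$, a downward trace of any formula occurrence in $\pi$ ends either at a subformula of the $\Pin{}1$ conclusion $\phi$ or at a subformula of a $\Sin{}{n-1}$ cut formula — weakening never deletes formulae reading downwards, so no other source of complexity can arise — whence every formula occurring in $\pi$ has logical complexity at most $n-1$ (for $n=1$, every formula is $\Din{}0$, apart from $\phi$ itself). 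I would then re-run the non-uniform soundness formalisation of Sect.~\ref{sect:nonuniform-ind-sim-cyc} (Prop.~\ref{prop:soundness-nonunif} together with Props.~\ref{prop:inv-branch-isn} and \ref{prop:tarski}) for this particular $\pi$, but with the truth predicate $\models_{n-1}$ in place of $\models_n$: this is legitimate because $\models_{n-1}$ correctly evaluates all of $\bool(\Sin{}{n-1})$, which includes every formula of $\pi$, and the Tarski conditions of Prop.~\ref{prop:tarski} needed at the quantifier, propositional, substitution and initial-sequent steps are exactly those available at level $n-1$; the global induction then drops to $\Sin{}n$ and the automaton ingredients (Thm.~\ref{thm:arithmetisation-of-correctness} and its corollary) to $\RCA$, so the whole argument goes through in $\ISn n$. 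This yields $\ISn n\proves\forall\rho_0.\ \Nat,\rho_0\models_{n-1}\conc(\pi)$, and the reflection property Prop.~\ref{prop:reflection} (applicable as $\phi\in\Pin{}1\subseteq\Pin{}{n-1}$) gives $\ISn n\proves\phi$. For the residual case $n=1$ — where $\models_0$ cannot evaluate the unbounded $\Pin{}1$ sentence $\phi$ — I would instead argue directly that $\pi$, containing only $\Din{}0$ formulae below the $\rigrul\forall$ steps introducing $\phi$, is in essence a $\CDn 0$ proof of $\phi$, so $\phi\in\CDn 0\subseteq\ISn 1$ by Thm.~\ref{thm:nonuniform-ind-sim-cyc}.

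Combining the two halves finishes the proof: were there a $\CA$ proof of $\cons{\ISn n}$ with only $\Sin{}{n-1}$ cuts, the claim would give $\ISn n\proves\cons{\ISn n}$, contradicting G\"odel's second incompleteness theorem for the sound, recursively axiomatised theory $\ISn n$. Hence $\cons{\ISn n}$ is a $\Pin{}1$ theorem of $\CSn n$ with no $\Sin{}{n-1}$-cut $\CA$ proof, so the class of such proofs is not complete even for the $\Pin{}1$ fragment of $\CSn n$; specialising to $n=1$ (cuts restricted to bounded formulae) sharpens the failure of cut-admissibility announced before the statement.

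The main obstacle I anticipate is the second paragraph: re-deriving the soundness formalisation with $\models_{n-1}$ in place of $\models_n$ while checking that every automaton construction, every instance of the Tarski conditions, and the induction proving totality of the invalid branch specialise correctly to complexity level $n-1$, plus handling the degenerate $n=1$ case cleanly. An alternative that sidesteps re-running the machinery is to first prenex $\pi$ into a $\CA$ proof containing only $\Sin{}{n-1}$ formulae, via the on-the-fly prenexing of Sect.~\ref{sect:prelims-pa-pt} (cutting against finite derivations, which preserves regularity), so that Prop.~\ref{prop:soundness-nonunif} applies verbatim and shows $\phi\in\CSn{n-1}\subseteq\ISn n$; then the delicate point becomes verifying that this prenexing leaves the trace/progress condition intact, appealing to Rmk.~\ref{rmk:traces-occurrences}.
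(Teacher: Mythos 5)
Your proposal is correct and follows essentially the same route as the paper: the witness is $\cons{\ISn n}$, the positive half is Thm.~\ref{thm:cyclic-sim-ind}, and the negative half reduces a hypothetical $\Sin{}{n-1}$-cut proof — via the subformula property and degeneralisation for $n=1$ — to $\CSn{n-1}$-provability and hence $\ISn n$-provability of $\cons{\ISn n}$, contradicting G\"odel's second incompleteness theorem. The ``alternative'' you sketch at the end (subformula property, then $\CSn{n-1}\subseteq\ISn n$ by Thm.~\ref{thm:nonuniform-ind-sim-cyc}) is in fact exactly the paper's one-sentence argument, of which your main route is just a hands-on unpacking.
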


	\begin{proof}
		For a recursively axiomatised theory $T$, let $\cons T$ be an appropriate $\Pin{}{1}$ sentence expressing that ``$T$ does not prove $0=1$''.
		It is well-known that $\ISn{n+1} \proves \cons {\ISn n}$ (see, e.g., \cite{Kay91:models-of-pa,Bus98:handbook-of-pt,HajPud:93}), so also $\CSn n \proves \cons {\ISn n}$ by Thm.~\ref{thm:cyclic-sim-ind}. 
		For contradiction, suppose $\cons{\ISn n }$ concludes some $\CA$ proof with only $\Sin{}{n-1}$ cuts; then in fact $\CSn{n-1} \proves \cons {\ISn n}$ by degeneralising (for the case $n=1$) and the subformula property. However this implies $\ISn n \proves \cons{\ISn n }$ by Thm.~\ref{thm:nonuniform-ind-sim-cyc}, which is impossible by G\"odel's second incompleteness theorem for $\ISn n $. 
	\end{proof}

	See, e.g., \cite{Bus98:handbook-of-pt,Kay91:models-of-pa,HajPud:93} for further discussions on the provability of consistency principles for fragments of arithmetic.

	\subsection{Reflection and consistency}
	\label{sect:sect:rfn-cons}
	
	Thanks to the uniformity of the results from Sect.~\ref{sect:ind-sim-cyc}, we can give some fundamental metalogical properties regarding provable soundness and consistency of cyclic proofs.
	First we will fix our formalisation of $\CSn n $-provability (of arbitrary sequents, not just $\Sin{}{n}$) in the language of arithmetic. 
	
	\smallskip
	Let $n\geq 0$. 
	We will fix some appropriate formula $\prf_n (\pi,\phi)$ expressing that $\pi$ is a $\CSn n $ proof of $\phi$. 
	We suppose that proofs are written as usual derivations (finite trees or dags)
	whose leaves are either axiom instances from $\Q$, or otherwise some (possibly open) $\Sin{}n$ sequent labelled by an associated cyclic proof that derives it (containing only $\Sin{}{n}$ formulae).
	Descriptively, $\prf_n (x,y)$ checks that $\pi$ is a proof of $\phi$ by first checking that it is a well-formed derivation, then checking that each premiss is either an axiom instance from $\Q$ or is labelled by a correct cyclic proof deriving it.
	In the latter case it must search for a certificate verifying that the cyclic proof satisfies the automaton-inclusion condition, i.e.\ that $\mathcal A_b \sqsubseteq \mathcal A_t$.
	%
	
	While $\prf_n (\pi,\phi)$ is recursive in $\pi$ and $\phi$, this may not be provably the case in weak theories such as $\IDn 0 $.
	%
	Thus we fix $\prf_n $ to be an appropriate $\Sin{}1$ formula, as described above,
	and we write $\Prov \phi$ for $\exists \pi . \prf_n (\pi, {{\phi}})$.
	%
	We write $\Rfn{\Pin{}{k}}{\CSn n }$ for the (local) \emph{$\Pin{}{k}$-reflection principle} of $\CSn n $. I.e.\
	\[
	\Rfn{\Pin{}{k}}{\CSn n }
	\  \dfn \ 
	\{\Prov \phi \cimp \phi : \phi \in \Pin{}{k} \}
	\]

	\begin{cor}
		[of Thm.~\ref{thm:soundness-formalised}]
		\label{cor:reflection}
		For $n\geq 0$, we have $\ISn{n+2}\proves \Rfn{\Pin{}{n+1}}{\CSn n}$. 
	\end{cor}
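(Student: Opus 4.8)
The plan is to prove, for each fixed closed $\phi\in\Pin{}{n+1}$, the schema instance $\Prov\phi\cimp\phi$ inside $\ISn{n+2}$, essentially by formalising the soundness of the finite-proof ``glue'' that a $\CSn n$ derivation puts on top of its cyclic sub-derivations and combining this with Thm.~\ref{thm:soundness-formalised}. Write $\phi=\forall\vec x.\psi(\vec x)$ with $\psi\in\Sin{}n$. Reasoning in $\ISn{n+2}$, I would assume $\Prov\phi$ and fix $\pi$ with $\prf_n(\pi,\phi)$; recall that $\pi$ is a finite FO derivation of $\seqar\phi$ whose leaves are either $\Q$-axiom instances or (possibly open) $\Sin{}n$ sequents, each labelled by a cyclic proof (containing only $\Sin{}n$ formulae) carrying a certificate for the automaton-inclusion $\mathcal A_b\sqsubseteq\mathcal A_t$. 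The first step is to normalise $\pi$: by the formalisation of free-cut elimination, Thm.~\ref{thm:free-cut-elim} (provable already in $\ISn 1$, as noted in the excerpt), one may assume $\pi$ is free-cut free when its $\Sin{}n$ leaf sequents are regarded as nonlogical initial sequents. Consequently every formula occurring in $\pi$ is a subformula of $\phi$ --- hence a partial prefix $\forall x_i.\cdots\forall x_l.\psi$ of $\phi$, so $\Pin{}{n+1}$ --- or a subformula of one of the leaf sequents, hence at most $\Sin{}n$; in either case it is a Boolean combination of $\Sin{}{n+1}$ formulae, so lies in the domain of the partial truth predicate $\models_{n+1}$, whose Tarski conditions and substitution property are available in $\ISn{n+2}$ by Prop.~\ref{prop:tarski} instantiated at level $n+1$.

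Next I would show, by induction on the height of subderivations, that every sequent occurring in $\pi$ is $\models_{n+1}$-satisfied under all assignments, i.e.\ that $\forall\rho.\ \Nat,\rho\models_{n+1}(\Gamma'\seqar\Delta')$ for each sequent $\Gamma'\seqar\Delta'$ of $\pi$. For the base cases, a $\Q$-axiom instance is a true $\Din{}0$ sequent and so is $\models_{n+1}$-true under all assignments (provably in $\Q$), while a $\Sin{}n$ leaf sequent $S$ labelled by a cyclic proof $\pi_S$ satisfies $\mathcal A^{\pi_S}_b\sqsubseteq\mathcal A^{\pi_S}_t$, so Thm.~\ref{thm:soundness-formalised} gives $\forall\rho.\ \Nat,\rho\models_n S$, whence $\forall\rho.\ \Nat,\rho\models_{n+1}S$ since $\models_n$ and $\models_{n+1}$ provably agree on Boolean combinations of $\Sin{}n$ formulae. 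For the inductive step I would check that each FO inference rule preserves ``$\models_{n+1}$-truth under all assignments'': the propositional and $\cut$ cases follow directly from Prop.~\ref{prop:tarski}.\ref{item:tarski-not}--\ref{item:tarski-and}, and the quantifier cases ($\rigrul{\forall}$, $\lefrul{\forall}$, $\rigrul{\exists}$, $\lefrul{\exists}$ and the bounded variants) from Prop.~\ref{prop:tarski}.\ref{item:tarski-exists}--\ref{item:tarski-subst} together with freshness of eigenvariables, essentially as in the corresponding cases of the proof of Prop.~\ref{prop:inv-branch-isn}; note that no choice of witnesses is needed here, since we quantify over \emph{all} assignments. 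The statement being proved by this induction has the form $\forall(\text{sequents of }\pi)\,\forall\rho.\ \Nat,\rho\models_{n+1}(\Gamma'\seqar\Delta')$, which is $\Pin{}{n+2}$ (a universal quantification over a $\Din{}{n+2}$ matrix built from $\models_{n+1}$), so the induction is legitimate in $\IPn{n+2}=\ISn{n+2}$ by Prop.~\ref{prop:ipn-equals-isn}.

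Finally, applying the conclusion of that induction to the root sequent $\seqar\phi$ of $\pi$ gives $\Nat,\emptyset\models_{n+1}\phi$, and then the reflection property Prop.~\ref{prop:reflection}, applied at level $n+1$ and using $\phi\in\Pin{}{n+1}$, yields $\phi$. Since $\phi$ was an arbitrary $\Pin{}{n+1}$ formula, this establishes $\ISn{n+2}\proves\Rfn{\Pin{}{n+1}}{\CSn n}$.

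The hard part, I expect, is the bookkeeping in the first paragraph rather than any single deep argument: one must carry out the normalisation of $\pi$ \emph{within} the theory so as to bound the quantifier complexity of the formulae occurring in its finite-proof part, and then pick the truth predicate of exactly the right level --- $\models_{n+1}$, not $\models_n$, because the $\Pin{}{n+1}$ conclusion is assembled by the FO glue from $\Sin{}n$ data --- so that the resulting soundness induction comes out as precisely $\Pin{}{n+2}$, no more and no less. That balance is what pins the argument to $\ISn{n+2}$ (and, in view of Thm.~\ref{thm:cyclic-sim-ind}, no lower).
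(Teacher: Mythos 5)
Your proposal is correct and follows essentially the same route as the paper: formalised free-cut elimination (available in $\ISn 1$) to control the logical complexity of the formulae occurring, Thm.~\ref{thm:soundness-formalised} applied to the cyclic components, and the Tarski conditions plus the reflection property Prop.~\ref{prop:reflection} to pass from $\models_{n+1}$-truth to $\phi$. The only difference is one of bookkeeping: the paper compresses the handling of the finite first-order derivation sitting above the cyclic sub-proofs into a single invocation of Thm.~\ref{thm:soundness-formalised} on a normalised, all-$\Sin{}{n}$ proof of the open sequent $\seqar\phi(\vec x)$ followed by one application of Prop.~\ref{prop:tarski}.\ref{item:tarski-forall}, whereas you carry out an explicit $\Pin{}{n+2}$ soundness induction over that ``glue'' at truth-level $n+1$ --- a legitimate and, if anything, more careful rendering of the same argument.
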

	\begin{proof}
		Let $\phi(\vec x )$ be a $\Sin{}{n}$ formula. Working in $\ISn{n+2} $, suppose that $\Prov \forall \vec x . \phi$ (so that $\CSn n \proves \forall \vec x . \phi (\vec x)$).
		We may assume that every formula occurring in a $\CSn{n}$ proof of the sequent $\seqar \phi(\vec x)$ is $\Sin{}{n}$, thanks to free-cut elimination, Thm.~\ref{thm:free-cut-elim} (recall that this result is provable already in $\ISn 1$). 
		Thus we have $\Nat, \emptyset \models_{n+1} \forall \vec x . \phi(\vec x)$ by Thm.~\ref{thm:soundness-formalised} and Prop.~\ref{prop:tarski}.\ref{item:tarski-forall}, whence the result follows by the reflection property, Prop.~\ref{prop:reflection}.
	\end{proof}
	

	\noindent
	Notice that, while the statement of Cor.~\ref{cor:reflection} above is peculiar to the current formulation of a $\CSn n$ proof, $\prf_n$, it holds also under any other notion of proof that is provably equivalent in $\ISn{n+2}$.
	In particular, in Sect.~\ref{sect:sect:prf-comp-csn} we discussed another notion of proof for $\CSn{n}$ which, morally, allowed ``free cuts'' to occur inside cycles.
	Since the equivalence of the two formulations, Prop.~\ref{prop:cyclic-sn-general-dfn}, is proved using only free-cut elimination and basic reasoning, all formalisable in $\ISn 1$,
	the version of Cor.~\ref{cor:reflection} for that more liberal notion of a $\CSn n $ proof holds too.
	
	\smallskip
	
	As usual, we may see $\Pin{}{1}$-reflection as another formulation of `consistency'.
	Let us write $\cons{\CSn n }$ for the sentence $\cnot \Prov 0=1$ (notice that this is a $\Pin{}{1}$ sentence).

	\begin{cor}
		[of Thm.~\ref{thm:soundness-formalised}]
		\label{cor:isn+2-prv-cons-csn}
		For $n\geq 0$, we have $\ISn{n+2} \proves \cons{\CSn n}$
	\end{cor}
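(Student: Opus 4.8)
The plan is to read off $\cons{\CSn n}$ as the $\Pin{}1$ instance of the reflection principle already established in Cor.~\ref{cor:reflection}. Since $n \geq 0$ we have $1 \leq n+1$, and the sentence $0=1$ is $\Din{}0$, hence in particular $\Pin{}{n+1}$; so Cor.~\ref{cor:reflection} (the case $k = n+1$ of $\Pin{}{k}$-reflection) gives $\ISn{n+2} \proves \Prov 0=1 \cimp 0=1$.

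It then remains to combine this with the trivial fact that $\ISn{n+2}$, extending $\Q$, proves $\cnot(0=1)$ --- this is just axiom (Q1) at $0$, recalling that $1$ abbreviates $\succ 0$. By propositional logic (modus tollens) we obtain $\ISn{n+2} \proves \cnot \Prov 0=1$, which is $\cons{\CSn n}$ by definition.

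Alternatively, and more self-containedly, one can argue directly from Thm.~\ref{thm:soundness-formalised}: working inside $\ISn{n+2}$, assume $\Prov 0=1$; unwinding $\prf_n$ this yields a $\CSn n$ derivation of a sequent containing $0=1$, which --- using free-cut elimination, formalisable already in $\ISn 1$ --- may be taken to consist of cyclic proofs containing only $\Sin{}n$ formulae together with correctness certificates $\mathcal A^\pi_b \sqsubseteq \mathcal A^\pi_t$; Thm.~\ref{thm:soundness-formalised} then gives $\Nat, \emptyset \models_n \conc (\pi)$, and by Prop.~\ref{prop:tarski} together with the reflection property Prop.~\ref{prop:reflection} this forces the truth of each formula in $\conc (\pi)$, in particular of $0=1$, contradicting $\cnot(0=1)$.

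I do not anticipate any real difficulty: the statement is genuinely a corollary, and the only things to watch are the hierarchy bookkeeping (that $0=1$ is low enough in the arithmetical hierarchy for Cor.~\ref{cor:reflection} to apply) and that the formalisation $\prf_n$ in play is the one --- or, as noted in the discussion after Cor.~\ref{cor:reflection}, one provably $\ISn{n+2}$-equivalent to the one --- for which Thm.~\ref{thm:soundness-formalised} was proved.
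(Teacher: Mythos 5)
Your proposal is correct and follows the paper's own route: the paper likewise derives this corollary by substituting $0=1$ for $\phi$ in Cor.~\ref{cor:reflection} and invoking $\cnot(0=1)$. The hierarchy bookkeeping and the remark about provably equivalent formalisations of $\prf_n$ are fine, and the alternative direct argument is just an unwinding of the same proof.
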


	\begin{proof}
		Follows immediately from Cor.~\ref{cor:reflection} above by substituting $0=1$ for $\phi$.
	\end{proof}

	\noindent
	We will see in the next subsection that this result is, in fact, optimal with respect to logical complexity.
	
	\subsection{Incompleteness}
	\label{sect:sect:incompleteness-csn}
	Unsurprisingly, all the theories $\CSn n$ suffer from G\"odel's incompleteness theorems.
	Even though $\CSn n$ is not explicitly defined axiomatically, it does have a recursively enumerable notion of provability, namely $\Prov$, and so must be incomplete with respect to this notion (see, e.g., Thm.~2.21 in \cite{HajPud:93}):

	\begin{thm}
		[G\"odel's second incompleteness theorem, for cyclic theories]
		\label{thm:goedel-second-incompleteness}
		For $n\geq 0$, as long as $\CSn n $ is consistent (i.e.\ $\CSn n \notproves 0=1$), we have $\CSn{n} \notproves \cons{\CSn n }$.
	\end{thm}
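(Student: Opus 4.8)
The statement is the standard Gödel second incompleteness phenomenon, now applied to the recursively enumerable notion of provability $\Prov[\cdot]$ that we have fixed for $\CSn n$. The plan is to reduce to the textbook version of the theorem by checking that $\CSn n$, together with its provability predicate $\prf_n$, satisfies the hypotheses needed to run the argument: (i) $\CSn n$ is a theory containing enough arithmetic (it contains $\IDn 0$, indeed $\ISn n$, so certainly $\Q$ and $\ISn 1$); (ii) $\prf_n$ is a $\Sin{}1$ formula representing the proof relation, so $\Prov[\cdot] = \exists\pi.\prf_n(\pi,\cdot)$ is a $\Sin{}1$ provability predicate; and (iii) the Hilbert--Bernays--L\"ob derivability conditions hold for $\Prov[\cdot]$ over $\CSn n$ — i.e.\ $\CSn n \proves \phi \Rightarrow \CSn n \proves \Prov[\phi]$, $\CSn n \proves \Prov[\phi\cimp\psi]\cimp(\Prov[\phi]\cimp\Prov[\psi])$, and $\CSn n \proves \Prov[\phi]\cimp\Prov[\Prov[\phi]]$.

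For condition (iii) the key point is that $\CSn n$ is $\Pin{}{n+1}$-axiomatised — indeed, by Cor.~\ref{cor:pi-axiomatised}, it is axiomatised by the $\Pin{}{n+1}$-consequences of $\ISn{n+1}$, and for $n=0$ it contains $\IDn 0$, which is $\Pin{}1$-axiomatised. Since $\CSn n$ in particular extends $\ISn 1$ (via $\ISn n \subseteq \CSn n$, Prop.~\ref{prop:isn-in-csn}), and since $\prf_n$ is $\Sin{}1$, provable $\Sin{}1$-completeness of $\ISn 1$ gives the first and third derivability conditions, and the second is a routine formalisation of the fact that proofs can be concatenated and a modus ponens step appended. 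Here one must be slightly careful: our "proofs" are hybrid objects — finite derivations whose leaves may be labelled by cyclic (sub)proofs carrying B\"uchi-automaton inclusion certificates — so one checks that the operations of (a) substituting a proof of $\phi$ for an open leaf, (b) appending a cut, and (c) taking universal closures are all primitive recursive and provably so in $\ISn 1$, using that the certificate-checking part is $\Sin{}1$ (by $\Sigma_1$-completeness of $\Q$ and Obs.~\ref{obs:complexity-aut-constructions}). None of this is conceptually new; it is the same verification one does for ordinary Hilbert-style or sequent proofs, transported to the present formalism.

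Given (i)--(iii), the proof is the usual diagonalisation: by the fixed-point lemma (applicable since $\CSn n \supseteq \ISn 1$) obtain a sentence $G$ with $\CSn n \proves G \ciff \cnot\Prov[G]$, show $\CSn n \proves \cons{\CSn n} \cimp \cnot\Prov[G]$ using the derivability conditions in the standard way, hence $\CSn n \proves \cons{\CSn n}\cimp G$; if $\CSn n \proves \cons{\CSn n}$ then $\CSn n \proves G$, so $\CSn n \proves \Prov[G]$ by the first derivability condition, so $\CSn n \proves \cnot G$, contradicting consistency. (Alternatively, one simply invokes Thm.~2.21 of \cite{HajPud:93} verbatim once (i)--(iii) are in place, as the excerpt already hints.)

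The main obstacle — really the only non-routine point — is verifying the derivability conditions for the particular provability predicate $\prf_n$ attached to cyclic proofs, i.e.\ that the proof-combining operations on hybrid (finite-derivation-plus-cyclic-leaves) objects are formalisable with $\Sin{}1$-provable correctness in $\ISn 1$. In practice this is unproblematic: the finite-derivation skeleton is handled exactly as in ordinary proof theory, and the cyclic-proof leaves are simply carried along untouched, with their correctness certificates ($\mathcal A_b \sqsubseteq \mathcal A_t$) re-used verbatim; no re-verification of the automaton inclusion is needed when one merely plugs a proof into an open leaf or appends logical inferences below. So the verification, though it must be stated, reduces to standard bookkeeping, and the theorem follows.
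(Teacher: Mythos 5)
Your overall strategy --- treat $\Prov$ as a recursively enumerable provability predicate for a theory containing enough arithmetic, verify the derivability conditions, and run the standard diagonalisation --- is in spirit exactly what the paper does: the paper offers no proof at all beyond the remark that $\CSn n$ has an r.e.\ notion of provability and a pointer to Thm.~2.21 of \cite{HajPud:93}.

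There is, however, one concrete flaw in your justification of the derivability conditions, which you yourself identify as the only non-routine point. You ground the second and third conditions in the claim that $\CSn n$ extends $\ISn 1$, ``via $\ISn n \subseteq \CSn n$''. That containment is not available: Prop.~\ref{prop:isn-in-csn} only gives $\ISn n \subseteq \CSn n$ over $\Pin{}{n+1}$ theorems, and by Cor.~\ref{cor:pi-axiomatised} the theory $\CSn n$ is $\Pin{}{n+1}$-axiomatised. In particular $\CDn 0$ is $\Pin{}{1}$-axiomatised, so it cannot prove the $\Pin{}{2}$ sentence asserting the totality of exponentiation and hence does not contain $\ISn 1$; for $n\geq 1$ the containment $\ISn 1\subseteq\CSn n$ is likewise not delivered by anything in the paper (being $\Pin{}{n+1}$-axiomatised, $\CSn n$ need not prove the induction scheme itself). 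Since formalised provable $\Sin{}{1}$-completeness (your third condition) is precisely the sort of statement whose provability is delicate in theories below $\IDn 0 + \exp$, the ``routine bookkeeping'' does not go through as written. The standard repair is to use only what is actually established --- that $\CSn n$ is a consistent r.e.\ theory containing $\IDn 0 \supseteq \Q$ --- and invoke the form of the second incompleteness theorem valid for all such theories (via definable cuts interpreting $\IDn 0 + \Omega_1$, as in \cite{HajPud:93}); this is exactly the version the paper's citation supplies. With that substitution your argument is sound, and your observations about the hybrid proof objects (concatenation of derivations, verbatim reuse of the automaton-inclusion certificates at cyclic leaves) are correct and worth making explicit.
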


	\noindent
	Consequently we have that Cor.~\ref{cor:isn+2-prv-cons-csn} is, in fact, optimal in terms of logical complexity:
	
	\begin{cor}
		\label{cor:isn+1-nprv-cons-csn}
		For $n\geq 0 $, we have $\ISn{n+1} \notproves \cons{\CSn n}$.
	\end{cor}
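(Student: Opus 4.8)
The plan is to combine the two simulation results, Thm.~\ref{thm:cyclic-sim-ind} and Thm.~\ref{thm:nonuniform-ind-sim-cyc}, with G\"odel's second incompleteness theorem for cyclic theories, Thm.~\ref{thm:goedel-second-incompleteness}. First I would record the (entirely syntactic) observation that $\cons{\CSn n}$ is a $\Pin{}1$ sentence, and hence a fortiori a $\Pin{}{n+1}$ sentence, so that Thm.~\ref{thm:cyclic-sim-ind} is applicable to it: whenever $\ISn{n+1}\proves \cons{\CSn n}$ we may conclude $\CSn n \proves \cons{\CSn n}$.

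Next, suppose for contradiction that $\ISn{n+1}\proves \cons{\CSn n}$. By the observation just made, Thm.~\ref{thm:cyclic-sim-ind} yields $\CSn n \proves \cons{\CSn n}$. It remains to note that $\CSn n$ is consistent, so that the hypothesis of Thm.~\ref{thm:goedel-second-incompleteness} is met: this is immediate from soundness of cyclic proofs for the standard model, Prop.~\ref{prop:sound-cyclic} (alternatively, from $\CSn n \subseteq \ISn{n+1}$ together with soundness of $\PA$, or from Cor.~\ref{cor:isn+2-prv-cons-csn} and soundness of $\ISn{n+2}$). Hence Thm.~\ref{thm:goedel-second-incompleteness} gives $\CSn n \notproves \cons{\CSn n}$, contradicting the previous line, and therefore $\ISn{n+1}\notproves \cons{\CSn n}$.

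I do not expect any genuine obstacle: the substantive content is already packaged inside Thm.~\ref{thm:cyclic-sim-ind} and Thm.~\ref{thm:goedel-second-incompleteness}. The only point requiring (trivial) attention is the quantifier complexity of $\cons{\CSn n}$ — it must be low enough for Thm.~\ref{thm:cyclic-sim-ind} to bite, which is precisely why the corollary bounds $\ISn{n+1}$ rather than the full $\CSn n$, and matches Cor.~\ref{cor:isn+2-prv-cons-csn} to show that $\ISn{n+2}$ is the exact level at which $\cons{\CSn n}$ becomes provable.
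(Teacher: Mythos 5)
Your proposal is correct and is essentially the paper's own argument: assume $\ISn{n+1}\proves\cons{\CSn n}$, transfer this to $\CSn n$ via the $\Pin{}{1}$-conservativity of Thm.~\ref{thm:cyclic-sim-ind}, and contradict Thm.~\ref{thm:goedel-second-incompleteness}. Your explicit check that $\CSn n$ is consistent (so the hypothesis of the incompleteness theorem applies) is a small point the paper leaves implicit, but otherwise the two proofs coincide.
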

	\begin{proof}
		Suppose otherwise. Then also $\CSn n \proves \cons{\CSn n }$ by $\Pin{}{1}$-conservativity, cf.\ Thm.~\ref{thm:cyclic-sim-ind}, which contradicts G\"odel's second incompleteness above, Thm.~\ref{thm:goedel-second-incompleteness}.
	\end{proof}
	
	\noindent
	We will see in the next section that this has a curious consequence for the reverse mathematics of results in $\omega$-automaton theory.
	
	\section{On the logical strength of McNaughton's theorem}
	\label{sect:red-to-det}
	
	In this section we show how the results of this work yield an unexpected corollary: certain formulations of \emph{McNaugton's theorem}, that every NBA has an equivalent deterministic `parity' or `Muller' automaton, are not provable in $\RCA$. The general question of the logical strength of McNaughton's theorem was notably left open in the recent work \cite{KMPS16:buchi-reverse}.

	Our result is non-uniform in the sense that unprovability holds for any \emph{explicit} primitive recursive determinisation construction. As far as the author is aware this accounts for all known proofs of McNaughton's theorem, suggesting that it is unlikely to be provable at all, in its usual uniform version, in $\RCA$.
	That said, we point out that the statement of McNaughton's theorem itself is arguably not so well-defined in the context of reverse mathematics: it is not clear in $\RCA$ that different versions of the theorem coincide, namely with respect to the choice of (a) acceptance conditions (parity, Muller, etc.) and (b) formulation of the set of states infinitely often hit during a run (negative, $\forall$, vs.\ positive, $\exists$).
	
	Our argument is based on an alternative route to proving the soundness of $\CSn n$. Assuming that an appropriate version of McNaughton's theorem is indeed provable in $\RCA$, we are in fact able to formalise the soundness argument for $\CSn n $ already in $\ISn{ n+1}$. However, consequently we have that $\ISn {n+1}$ proves the consistency of $\CSn n $, and so $\CSn n $ proves its own consistency by $\Pin{}1$-conservativity, cf.~Thm.~\ref{thm:cyclic-sim-ind}, which is absurd by G\"odel's second incompleteness theorem for $\CSn n$, Thm.~\ref{thm:goedel-second-incompleteness}.

	
	%
	
	%
	%
	%
	%
	%
	%
	%
	%
	%

	\subsection{Deterministic parity automata and universality}
	Due to space considerations, we only briefly present the details of parity automata. The reader is encouraged to consult, e.g., \cite{Tho97:aut-chapt}, for further details on automaton theory for $\omega$-languages.

	A (non-deterministic) \textbf{Rabin} or \textbf{parity} automaton (NRA) is a just a NBA where, instead of a set of final states $F$, we have a function $c: Q \to \Nat$, called a \textbf{colouring}.
	A word is accepted by a NRA if it has a run in which the least colour of a state occurring infinitely often is even.
	The notion of deterministic parity automaton (DRA) is analogous to that of a DBA, i.e.\ requiring the transition relation to be deterministic and total.
	
	\begin{thmC}
		[McNaughton, \cite{McN66:determinisation}]
		For every NBA $\mathcal A$, we can effectively construct a DRA accepting the same language.
	\end{thmC}
	
	\noindent
	Actually, McNaughton gave this result for deterministic \emph{Muller} automata rather than parity automata. The equivalence of these two models is well-known though, as we previously mentioned, it is not clear whether $\RCA$ can prove their equivalence.
	The fact that we use parity automata here is arbitrary; we believe a similar exposition could be carried out for Muller automata.
	
	\smallskip

	As for DBA, we may naturally express language acceptance for a DRA $\mathcal A = (A, Q, \delta, q_0, c)$ by an {arithmetical} formula, i.e.\ without SO quantifiers. 
	For our purposes, it will be useful to take a `negative' formulation of acceptance:
	\[
	X \in \lang (\mathcal A)
	\  \dfn \ 
	\forall q \in Q .
	\left(
	\left(
	\begin{array}{rl}
	& \forall x . \exists x' > x .\  q_X (x') = q \\
	\cand & \exists x . \forall x'> x .\  c(q_X (x')) \geq c(q)
	\end{array}
	\right)
	\cimp \text{``$c(q)$ is even''}
	\right)
	\]
	We write $\sigma : q_1 \trarr * \delta q_2$ if a word $\sigma \in A^*$ determines a path along $\delta $ starting at $q_1$ and ending at $q_2$. We write $\trarr + \delta$ when the path is nonempty.
	A \emph{simple loop} about a state $q\in Q$ is a nonempty path along $\delta$ beginning and ending at $q$ that visits no intermediate state more than once.
	
	Recall that we call an $\omega$-automaton \emph{universal} if it accepts all $\omega$-words over its alphabet.
	We write $\univ (\mathcal A)$ for a standard recursive procedure testing {universality} of a DRA $\mathcal A$: ``for every odd-coloured state $q$ reachable from $q_0$, any simple loop about $q$ contains a state coloured by an even number $<c(q)$''. More formally, writing $\sigma' \leq \sigma$ if $\sigma'$ is a prefix of $\sigma$:
	$$
	\univ (\mathcal A)
	\ \dfn \ 
	\begin{array}{l}
	\ \ \forall q \trrra * \delta q_0 . \ \forall \sigma : q \trarr + \delta q. \\
	\left(
	\begin{array}{rl}
	&
	\text{``$\sigma$ is a simple loop''} \cand \text{``$c(q) $ is odd''} \\
	\cimp &
	\exists \sigma' \leq \sigma . \ \exists  q'\in Q . \ (\sigma' : q \trarr + \delta q' \ \cand  \text{``$c(q') $ even''} \cand c(q')<c(q))
	\end{array}
	\right)
	\end{array}
	$$
	Clearly this formula is provably $\Din 0 1 $ in $\RCA$.
	%
	%
	%
	Furthermore:
	\begin{prop}
		\label{prop:rca-prov-corr-univ-dra}
		$\RCA \proves \forall \text{ DRA } \mathcal A . \ (\univ(\mathcal A) \ciff \forall X \in A^\omega . X \in \lang (\mathcal A))$.
	\end{prop}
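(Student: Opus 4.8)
The plan is to argue, working in $\RCA$, both directions of the equivalence $\univ(\mathcal A) \ciff \forall X \in A^\omega . X \in \lang(\mathcal A)$ for a fixed DRA $\mathcal A = (A,Q,\delta,q_0,c)$. Note first that since $\mathcal A$ is deterministic and total, for every $X \in A^\omega$ the run $q_X$ is a well-defined function (provably recursive in $\RCA$), so we may speak freely of $q_X$. The key combinatorial fact connecting the two sides is: for a run $\rho = (q_X(x))_{x\geq 0}$, the states occurring infinitely often in $\rho$ eventually form a strongly connected subgraph of $\delta$, and the minimum colour among them is the colour determining acceptance of $X$. Thus $\univ(\mathcal A)$, which quantifies over simple loops reachable from $q_0$, is exactly designed to rule out the `bad' scenario where the run settles into a loop whose minimum colour is odd.

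For the right-to-left direction (contrapositive), suppose $\cnot\univ(\mathcal A)$: there is an odd-coloured state $q$ reachable from $q_0$ via some finite word $\tau : q_0 \trarr * \delta q$, and a simple loop $\sigma : q \trarr + \delta q$ all of whose states have colour $\geq c(q)$ (i.e.\ no intermediate state has an even colour $< c(q)$). Then I would use $\Din 0 1$-comprehension to define the word $X \dfn \tau\sigma\sigma\sigma\cdots \in A^\omega$ (concatenation of $\tau$ with infinitely many copies of $\sigma$; this is primitive recursive given $\tau,\sigma$, hence available in $\RCA$). One shows by a straightforward induction that $q_X$ follows $\tau$ then cycles through $\sigma$ forever, so the states occurring infinitely often in the run are exactly those on $\sigma$, and among these the least colour is $c(q)$, which is odd. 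Hence $X \notin \lang(\mathcal A)$ by the negative formulation of DRA acceptance (taking the witnessing state to be $q$ itself: $q$ recurs infinitely often, and eventually all colours on the run are $\geq c(q)$), so $\cnot\forall X \in A^\omega . X \in \lang(\mathcal A)$.

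For the left-to-right direction, assume $\univ(\mathcal A)$ and fix $X \in A^\omega$; I must show $X \in \lang(\mathcal A)$, i.e.\ for every $q \in Q$, if $q$ occurs infinitely often in $q_X$ and from some point on all states of the run are coloured $\geq c(q)$, then $c(q)$ is even. Suppose such a $q$ with $c(q)$ odd exists, for contradiction. Since $Q$ is finite and $q$ recurs infinitely often while the run stays (eventually) in colours $\geq c(q)$, I can extract from $q_X$, using bounded search (an $\lth$/$\Din 0 1$ operation, fine in $\RCA$), a finite segment of the run from one occurrence of $q$ to a later occurrence of $q$ lying entirely in the `tail' region; this segment is a nonempty path $q \trarr + \delta q$ whose states all have colour $\geq c(q)$. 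Decomposing this path into simple loops about $q$ (peeling off inner repetitions, formalised by induction on the length of the segment), at least one such simple loop $\sigma$ about $q$ must have all its states coloured $\geq c(q)$ — contradicting $\univ(\mathcal A)$, since $q$ is reachable from $q_0$ (a prefix of $q_X$ witnesses $q_0 \trarr * \delta q$) and $\sigma$ is a simple loop about the odd-coloured state $q$ with no even colour $< c(q)$ on it.

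The main obstacle I anticipate is the loop-decomposition step in the left-to-right direction: showing in $\RCA$ that a path $q \trarr + \delta q$ all of whose vertices avoid even colours $< c(q)$ can be reduced to a \emph{simple} such loop about $q$. This is the place where one must be careful that the induction (on path length, or on the number of repeated intermediate vertices) stays within $\Sin 0 1$-induction, and that the minimum-colour invariant is preserved when excising an inner sub-loop. One also needs the basic graph-reachability facts ($q_0 \trarr * \delta q$ from a prefix of the run, finiteness pigeonhole for recurrence) to be available, but these are standard in $\RCA$ since the relevant objects are finite and the searches are bounded. The other direction is essentially a routine induction verifying that the explicitly-constructed ultimately-periodic word has the claimed run.
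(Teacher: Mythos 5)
Your proposal is correct and follows essentially the same route as the paper's proof: the right-to-left direction builds the ultimately periodic word $\tau\sigma^\omega$ by $\Din 0 1$-comprehension and verifies the negative acceptance condition with witness $q$, and the left-to-right direction extracts a finite return path to $q$ inside the tail where all colours are $\geq c(q)$ and reduces it to a simple loop by induction on path length, contradicting $\univ(\mathcal A)$. The loop-excision step you flag as the delicate point is exactly the intermediate lemma the paper isolates and proves the same way.
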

	
	
	\begin{proof}
		Working in $\RCA$, let $\mathcal A = (A, Q, \delta, q_0, c)$ be a DRA.
		%
		For the left-right implication, suppose there is some $X \in A^\omega$ such that $X \notin \lang (\mathcal A)$. Thus we have some $q \in Q$ such that $c(q)$ is odd and the following hold:
		\begin{equation}
		\label{eqn:inf-occ}
		\forall x . \exists x' > x .\ q_X (x') = q
		\end{equation}
		\begin{equation}
		\label{eqn:event-lb}
		\exists x . \forall x'> x.\ c(q_X (x') ) \geq c(q)
		\end{equation}
		Let $x_0$ be a witness to \eqref{eqn:event-lb}, and let $x_0 < x_1 < x_2$ such that $q_X (x_1) = q_X (x_2) = q$, by two applications of \eqref{eqn:inf-occ}.
		We will need the following intermediate 
		(arithmetical) 
		result,
		\[
		\text{If $\sigma: q \trarr + \delta q$, there is a subsequence $\sigma'$ of $\sigma$ that is a simple loop on $q$.}
		\]
		
		
		\noindent
		which follows directly by induction on $|\sigma|$, eliminating intermediate loops at each inductive step in the case of non-simplicity.
		Now we apply this result to the sequence $(X(x))^{x_2}_{x = x_1}$ to obtain a simple loop about $q$; moreover since this will be a subsequence of $(X(x))^{x_2}_{x = x_1}$, we have that any even-coloured state occurring in it is coloured $> c(q)$, since $x_0$ witnesses \eqref{eqn:event-lb} and $x_0 < x_1 < x_2 $, so $\cnot \univ (\mathcal A)$.
		
		For the right-left implication, we proceed again by contraposition. Suppose $\cnot \univ (\mathcal A)$, and let $\sigma: q_0 \trarr * \delta  q$ and $\tau : q \trarr + \delta q$ such that $c(q)$ is odd, and $\tau$ is a simple loop containing no states coloured $< c(q)$. 
		We may now set $X = \sigma \tau^\omega$ (which is easily defined by comprehension) and show that $X\notin \lang (\mathcal A)$.
		For this it suffices to show \eqref{eqn:inf-occ} and \eqref{eqn:event-lb} above. For the former, given $x$ we set $x ' = |\sigma | + m|\tau| > x$, for some sufficiently large $m$. 
		For the latter, we set $x = |\sigma|$ as the witness to the outer existential, whence \eqref{eqn:event-lb} follows by construction of $\tau$.
		%
		%
		%
		%
		%
	\end{proof}

	
	\subsection{Reducing soundness of $\CDn 0 $ to a version of McNaughton's theorem}
	Henceforth we may write $\lang (\mathcal A) = \lang (\mathcal B)$ as shorthand for $\forall X . (X \in \lang (\mathcal A) \ciff X\in \lang(\mathcal B))$, where $\mathcal A$ and $\mathcal B$ may be any type of automaton thus far encountered with respect to their associated notions of membership.
	Based on our `negative' formulation of DRA acceptance, we define for a (definable) function $d$: $$\McN_d \  \dfn \ 
	\forall \text{ NBA }\mathcal A . 
	(
	\text{``$d(\mathcal A)$ is a DRA''} 
	\cand  
	\lang (\mathcal A) = \lang (d(\mathcal A))
	$$
	
	Assuming this is provable in $\RCA$ for some primitive recursive function $d$, we will reproduce a version of Thm.~\ref{thm:arithmetisation-of-correctness} in $\ISn{n+1}$.
	The idea is that, rather than expressing the fact that $\lang (\mathcal A_1 )\subseteq \lang(\mathcal A_2)$ by saying ``$(\mathcal A_1^c \sqcup \mathcal A_2 )^c $ is empty'', as we did in Sects.~\ref{sect:ind-sim-cyc} and \ref{sect:nonuniform-ind-sim-cyc}, we may rather express it as ``$\mathcal A_1^c \sqcup \mathcal A_2$ is universal'', relying on 
	$\McN_d$ and Prop.~\ref{prop:rca-prov-corr-univ-dra} above.
	
	
	\medskip
	
	For a DBA $\mathcal A_1 $ and a NBA $\mathcal A_2$ we define $\mathcal A_1 \sqsubseteq_d \mathcal A_2$ as
	\(
	\univ (d(\mathcal A_1^c \sqcup \mathcal A_2 )))
	\).
	Let us write $(\ref*{eqn:arith-form-prog-traces}_d)$ for the equation \eqref{eqn:arith-form-prog-traces} with $\sqsubseteq$ replaced by $\sqsubseteq_d$, i.e.:
	\begin{equation}
	\tag{$\ref*{eqn:arith-form-prog-traces}_d$}
	\label{eqn:arith-form-prog-traces'}
	\forall\ \text{DBA}\ \mathcal A_1 , \forall \ \text{NBA}\ \mathcal A_2 .\ 
	\left( 
	(\mathcal A_1 \sqsubseteq_d \mathcal A_2 \cand X \in \lang(\mathcal A_1) )
	\cimp 
	\ArAcc (X, \mathcal A_2 ) 
	\right)
	\end{equation}
	We have the following analogue to Thm.~\ref{thm:arithmetisation-of-correctness}:
	
	\begin{prop}
		\label{prop:rca+mcn-prov-aracc'}
		$\RCA + \McN_d\proves \eqref{eqn:arith-form-prog-traces'}$.
	\end{prop}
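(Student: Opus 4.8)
The plan is to mirror the chain of implications proving Thm.~\ref{thm:arithmetisation-of-correctness}, but to replace the two steps there that invoked emptiness-checking and \emph{nondeterministic} B\"uchi complementation --- the latter being exactly what forced the use of $\CIND{\Sin 0 2}$ --- by a single appeal to $\McN_d$ together with the $\RCA$-correctness of the universality test, Prop.~\ref{prop:rca-prov-corr-univ-dra}. So I would work in $\RCA + \McN_d$, fix a DBA $\mathcal A_1$ and an NBA $\mathcal A_2$, and assume the antecedent of \eqref{eqn:arith-form-prog-traces'}, i.e.\ $\mathcal A_1 \sqsubseteq_d \mathcal A_2$ and $X \in \lang (\mathcal A_1)$; note that the latter already entails $X \in A^\omega$ (by the definition of DBA acceptance), so the first conjunct of $\ArAcc (X, \mathcal A_2)$ is immediate.

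The argument then proceeds as follows. By definition, $\mathcal A_1 \sqsubseteq_d \mathcal A_2$ is the formula $\univ (d(\mathcal A_1^c \sqcup \mathcal A_2))$. By $\McN_d$ applied to the NBA $\mathcal A_1^c \sqcup \mathcal A_2$, the object $d(\mathcal A_1^c \sqcup \mathcal A_2)$ is a DRA and $\lang (d(\mathcal A_1^c \sqcup \mathcal A_2)) = \lang (\mathcal A_1^c \sqcup \mathcal A_2)$. Hence Prop.~\ref{prop:rca-prov-corr-univ-dra} applies to $d(\mathcal A_1^c \sqcup \mathcal A_2)$ and gives $\forall Y \in A^\omega .\ Y \in \lang (d(\mathcal A_1^c \sqcup \mathcal A_2))$, and therefore $\forall Y \in A^\omega .\ Y \in \lang (\mathcal A_1^c \sqcup \mathcal A_2)$. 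Now Lemma~\ref{lem:aut-clos-props-in-so-arith}.\ref{item:union-rca} converts this into $\forall Y \in A^\omega .\ (Y \in \lang (\mathcal A_1^c) \corr Y \in \lang (\mathcal A_2))$, and Lemma~\ref{lem:aut-clos-props-in-so-arith}.\ref{item:compl-dba-rca} (DBA complementation, provable already in $\RCA$) turns the first disjunct into a hypothesis, yielding $\forall Y \in A^\omega .\ (Y \in \lang (\mathcal A_1) \cimp Y \in \lang (\mathcal A_2))$. Instantiating $Y$ with $X$, using $X \in A^\omega$ and $X \in \lang (\mathcal A_1)$, we get $X \in \lang (\mathcal A_2)$, whence $\ArAcc (X, \mathcal A_2)$ follows by Prop.~\ref{prop:arith-acc}.

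Every ingredient besides $\McN_d$ itself is used in its $\RCA$-provable form --- Prop.~\ref{prop:rca-prov-corr-univ-dra}, Lemma~\ref{lem:aut-clos-props-in-so-arith} items \ref{item:union-rca} and \ref{item:compl-dba-rca}, and Prop.~\ref{prop:arith-acc} --- and no induction beyond what $\RCA$ provides enters the argument. The only points needing a little care are bookkeeping: checking that $\mathcal A_1^c$, $\mathcal A_1^c \sqcup \mathcal A_2$ and $d(\mathcal A_1^c \sqcup \mathcal A_2)$ all have alphabet $A$ so that the several `$\forall Y \in A^\omega$' quantifications genuinely line up, and that the `$d(\mathcal A_1^c \sqcup \mathcal A_2)$ is a DRA' clause supplied by $\McN_d$ is exactly the side condition demanded by Prop.~\ref{prop:rca-prov-corr-univ-dra}. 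There is no real obstacle here: the whole point is that $\McN_d$ absorbs precisely the nondeterministic-complementation content that previously cost $\CIND{\Sin 0 2}$ in Thm.~\ref{thm:arithmetisation-of-correctness}, so the remaining reasoning drops down to $\RCA$.
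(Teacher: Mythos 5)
Your proposal is correct and follows exactly the same chain of implications as the paper's proof: unfold $\sqsubseteq_d$, use $\McN_d$ together with Prop.~\ref{prop:rca-prov-corr-univ-dra} to get universality of $\mathcal A_1^c \sqcup \mathcal A_2$, then apply Lemma~\ref{lem:aut-clos-props-in-so-arith}.\ref{item:union-rca} and \ref{item:compl-dba-rca}, instantiate at $X$, and finish with Prop.~\ref{prop:arith-acc}. The only difference is that you spell out the intermediate passage through $\lang(d(\mathcal A_1^c \sqcup \mathcal A_2))$, which the paper compresses into a single step.
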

	
	\begin{proof}
		Mimicking the proof of Thm.~\ref{thm:arithmetisation-of-correctness},  we work in $\RCA $ and suppose $X\in \lang (\mathcal A_1)$ and $\mathcal A_1 \sqsubseteq_d \mathcal A_2$.
		We have:
		\[
		\begin{array}{rll}
		& \univ (d(\mathcal A_1^c \sqcup \mathcal A_2)) & \text{since $\mathcal A_1 \sqsubseteq_d \mathcal A_2$} \\
		\implies & \forall Y \in A^\omega . Y \in \lang (\mathcal A_1^c \sqcup \mathcal A_2) & \text{by 
			Prop.~\ref{prop:rca-prov-corr-univ-dra} and 
			$\McN_d$} \\
		\implies & \forall Y \in A^\omega. ( Y \in \lang (\mathcal A^c_1) \corr Y \in \lang (\mathcal A_2) ) & \text{by Lemma~\ref{lem:aut-clos-props-in-so-arith}.\ref{item:union-rca} } \\
		\implies &\forall Y \in A^\omega. ( Y \in \lang (\mathcal A_1) \cimp Y \in \lang (\mathcal A_2) )  & \text{by Lemma~\ref{lem:aut-clos-props-in-so-arith}.\ref{item:compl-dba-rca}} \\
		\implies & X \in \lang (\mathcal A_2) & \text{since $X \in \lang(\mathcal A_1)$} \\
		\implies & \ArAcc (X, \mathcal A_2) & \text{by Prop.~\ref{prop:arith-acc}.} \qedhere
		\end{array}
		\]
	\end{proof}
	
	We may use this result to reconstruct the entire formalised soundness argument for $\CSn n$ of Sect.~\ref{sect:ind-sim-cyc} in $\ISn{n+1}$ instead of $\ISn{n+2}$, assuming $\McN_d$. In particular, using Prop.~\ref{prop:rca+mcn-prov-aracc'} above instead of Thm.~\ref{thm:arithmetisation-of-correctness}, we may recover versions of Cor.~\ref{cor:is2-prov-aracc}, Thm.~\ref{thm:soundness-formalised} and Cor.~\ref{cor:reflection} for $\ISn{n+1}$ instead of $\ISn{n+2}$, with respect to \eqref{eqn:arith-form-prog-traces'} instead of \eqref{eqn:arith-form-prog-traces}.
	Formally, let $\Rfnd{\Pin{}{k}}{\CSn n }$ denote the formulation of the $\Pin{}{k}$-reflection principle for $\CSn n $ induced by using $\sqsubseteq_d$ instead of $\sqsubseteq$ throughout Sect.~\ref{sect:sect:rfn-cons}, with respect to the definitions of $\prf_n$ and $\Prov$.
	Similarly, let $\consd{\CSn n }$ be the induced consistency principle.

	\begin{prop}
		\label{prop:if-mcn-then-rfn}
		For $n\geq 0$, if $\RCA \proves \McN_d$ for some primitive recursive function $d$, then $\ISn{n+1} \proves \Rfnd{\Pin{}{n+1}}{\CSn n} $, so
		in particular $\ISn{n+1} \proves \consd{\CSn n }$.
	\end{prop}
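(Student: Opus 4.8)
The plan is to run the uniform soundness argument of Sect.~\ref{sect:ind-sim-cyc} essentially verbatim, but with every occurrence of $\sqsubseteq$ replaced by $\sqsubseteq_d$ and with Prop.~\ref{prop:rca+mcn-prov-aracc'} used in place of Thm.~\ref{thm:arithmetisation-of-correctness}. The crucial gain is that, since by hypothesis $\RCA \proves \McN_d$, Prop.~\ref{prop:rca+mcn-prov-aracc'} shows that $\RCA$ proves \eqref{eqn:arith-form-prog-traces'} outright, whereas the original principle \eqref{eqn:arith-form-prog-traces} required $\RCA + \CIND{\Sin 0 2}$. This is precisely because the universality route of Prop.~\ref{prop:rca+mcn-prov-aracc'} only invokes DBA complementation (Lemma~\ref{lem:aut-clos-props-in-so-arith}.\ref{item:compl-dba-rca}) and union (Lemma~\ref{lem:aut-clos-props-in-so-arith}.\ref{item:union-rca}), both available in plain $\RCA$, while the NBA complementation of Lemma~\ref{lem:aut-clos-props-in-so-arith}.\ref{item:compl-rca-s2ind} --- the sole source of $\Sin 0 2$-induction in Thm.~\ref{thm:arithmetisation-of-correctness} --- is now avoided.

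First I would apply the conservativity result Prop.~\ref{prop:conservativity} in its $n=1$ instance --- noting that $\RCA$ already contains $\CIND{\Sin 0 1}$ --- to pass from $\RCA$ proving \eqref{eqn:arith-form-prog-traces'} to $\ISn 1 (X)$ proving \eqref{eqn:arith-form-prog-traces'}; this is the $\sqsubseteq_d$-analogue of Cor.~\ref{cor:is2-prov-aracc}, now one induction level lower. Next, instantiating the oracle $X$ by the $\Din{}{n+1}$ formula $\Branch_n(\pi, \rho_0, -, -)$ and invoking the parametrisation observation from Sect.~\ref{sect:so-theories}, I obtain $\ISn{n+1}$ proofs of \eqref{eqn:arith-form-prog-traces'}$[\Branch_n/X]$. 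Combining this with Prop.~\ref{prop:inv-branch-isn} (itself an $\ISn{n+1}$ result, giving that $\Branch_n(\pi, \rho_0, -, -) \in \lang(\mathcal A^\pi_b)$ whenever $\Nat, \rho_0 \notmodels_n \conc(\pi)$) and with the trace-descent argument from the proof of Thm.~\ref{thm:soundness-formalised} --- which derives a contradiction from $\ArAcc(\Branch_n(\pi, \rho_0, -, -), \mathcal A^\pi_t)$ by a length induction over the finite trace $z$, and so stays within $\ISn{n+1}$ --- yields the $\sqsubseteq_d$-version of Thm.~\ref{thm:soundness-formalised} inside $\ISn{n+1}$: for every $\CA$ preproof $\pi$ containing only $\Sin{}n$ formulae with $\mathcal A^\pi_b \sqsubseteq_d \mathcal A^\pi_t$, $\ISn{n+1} \proves \forall \rho_0 .\ \Nat, \rho_0 \models_n \conc(\pi)$.

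From here the reflection principle follows as in the proof of Cor.~\ref{cor:reflection}. Working in $\ISn{n+1}$, fix $\phi = \forall \vec x . \psi$ with $\psi \in \Sin{}n$ (so that $\phi$ ranges over $\Pin{}{n+1}$) and suppose $\Prov \phi$, in the sense of the formula $\prf_n$ defined using $\sqsubseteq_d$, so that $\CSn n$ proves $\phi$. By free-cut elimination, Thm.~\ref{thm:free-cut-elim} --- formalisable already in $\ISn 1$ --- we may assume the cyclic subproofs involved contain only $\Sin{}n$ formulae; applying the previous paragraph together with Prop.~\ref{prop:tarski}.\ref{item:tarski-forall} gives $\Nat, \emptyset \models_{n+1} \phi$, and the reflection property Prop.~\ref{prop:reflection} then delivers $\phi$. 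Hence $\ISn{n+1} \proves \Prov \phi \cimp \phi$ for each $\phi \in \Pin{}{n+1}$, that is, $\ISn{n+1} \proves \Rfnd{\Pin{}{n+1}}{\CSn n}$; substituting $0=1$ for $\phi$ (a $\Pin{}{1} \subseteq \Pin{}{n+1}$ sentence) then gives $\ISn{n+1} \proves \consd{\CSn n}$, exactly as Cor.~\ref{cor:isn+2-prv-cons-csn} follows from Cor.~\ref{cor:reflection}.

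I expect the main obstacle to be the bookkeeping of induction levels: one must check that, once the $\Sin 0 2$-induction inside Thm.~\ref{thm:arithmetisation-of-correctness} has been removed, nothing else along the route from \eqref{eqn:arith-form-prog-traces'} through the $\sqsubseteq_d$-soundness theorem covertly requires $\Sin{}{n+2}$-induction --- in particular, that the totality argument of Prop.~\ref{prop:inv-branch-isn} (using only $\Sin{}{n+1}$-induction and the associated minimisation principle) and the descent along $z$ genuinely stay within $\ISn{n+1}$, and that substituting the $\Din{}{n+1}$ formula $\Branch_n(\pi, \rho_0, -, -)$ for the set parameter $X$ raises the induction complexity by exactly $n$ levels. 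A secondary point to spell out is that $\prf_n$, and hence the induced principles $\Rfnd{\Pin{}{k}}{\CSn n}$ and $\consd{\CSn n}$, are by definition obtained by replacing $\sqsubseteq$ with $\sqsubseteq_d$ throughout Sect.~\ref{sect:sect:rfn-cons}, so that the $\sqsubseteq_d$-soundness statement derived above is precisely what discharges the $\Prov$-hypothesis.
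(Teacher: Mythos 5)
Your proposal is correct and follows essentially the same route as the paper's (much briefer) proof sketch: replace $\sqsubseteq$ by $\sqsubseteq_d$ throughout Sect.~\ref{sect:ind-sim-cyc}, substitute Prop.~\ref{prop:rca+mcn-prov-aracc'} for Thm.~\ref{thm:arithmetisation-of-correctness} so that only $\CIND{\Sin 0 1}$ is needed at the SO level, then descend via conservativity, the instantiation of $X$ by $\Branch_n$, and the arguments of Thm.~\ref{thm:soundness-formalised} and Cors.~\ref{cor:reflection} and \ref{cor:isn+2-prv-cons-csn}. Your identification of Lemma~\ref{lem:aut-clos-props-in-so-arith}.\ref{item:compl-rca-s2ind} as the sole source of $\Sin 0 2$-induction being eliminated is exactly the point the paper relies on.
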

	\begin{proof}
		[Proof sketch]
		The argument goes through just like that of Cors.~\ref{cor:reflection} and \ref{cor:isn+2-prv-cons-csn} of Thm.~\ref{thm:soundness-formalised}, except that we use Prop.~\ref{prop:rca+mcn-prov-aracc'} instead of Thm.~\ref{thm:arithmetisation-of-correctness}. Appealing to the assumption that $\RCA \proves \McN_d$, this version of the argument requires only $\CIND{\Sin{0}{1}}$ instead of $\CIND{\Sin{0}{2}}$, thus yielding $\ISn{n+1}$ proofs overall once we substitute the appropriate formulae for $X$.
	\end{proof}
	
	\begin{thm}
		$\RCA \notproves \McN_d$, for any primitive recursive function $d$.
	\end{thm}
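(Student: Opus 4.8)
The plan is to argue by contradiction, using the conditional result Prop.~\ref{prop:if-mcn-then-rfn} to manufacture a proof by a cyclic theory of its own consistency. Suppose $\RCA \proves \McN_d$ for some primitive recursive $d$, and fix any $n \geq 0$. Then Prop.~\ref{prop:if-mcn-then-rfn} gives $\ISn{n+1} \proves \consd{\CSn n}$. Since $\consd{\CSn n}$ is a $\Pin{}1$ sentence, hence in particular a $\Pin{}{n+1}$ sentence, Thm.~\ref{thm:cyclic-sim-ind} (which gives $\ISn{n+1} \subseteq \CSn n$ over $\Pin{}{n+1}$-theorems) transfers this to the cyclic side: $\CSn n \proves \consd{\CSn n}$. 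The remaining task is to see that this contradicts G\"odel's second incompleteness theorem for $\CSn n$: since $\CSn n$ is axiomatised by universal closures of conclusions of $\infty$-proofs, all its axioms are true in $\Nat$ by Prop.~\ref{prop:sound-cyclic}, so $\CSn n$ is consistent, and hence by Thm.~\ref{thm:goedel-second-incompleteness} it cannot prove its own consistency.

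The one point needing care is that $\consd{\CSn n}$ is formulated via the predicate $\prf_n$ with the inclusion test $\mathcal A_b \sqsubseteq \mathcal A_t$ replaced by $\mathcal A_b \sqsubseteq_d \mathcal A_t$, i.e.\ by $\univ(d(\mathcal A_b^c \sqcup \mathcal A_t))$, whereas Thm.~\ref{thm:goedel-second-incompleteness} is stated for the original $\prf_n$; so I must verify that the incompleteness argument re-runs for this variant. This is where the contradiction hypothesis is used a second time: since $\RCA$ is sound and proves $\McN_d$, the sentence $\McN_d$ is true, so $d$ genuinely determinises NBA into DRA. Combining this with the provable correctness of $\univ$ for DRA (Prop.~\ref{prop:rca-prov-corr-univ-dra}), of union (Lemma~\ref{lem:aut-clos-props-in-so-arith}.\ref{item:union-rca}) and of DBA complementation (Lemma~\ref{lem:aut-clos-props-in-so-arith}.\ref{item:compl-dba-rca}), together with the correctness criterion Prop.~\ref{prop:reg-proog-correctness-condition}, one checks that $\mathcal A_b \sqsubseteq_d \mathcal A_t$ holds exactly when the underlying cyclic preproof is a genuine $\infty$-proof. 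Hence the variant predicate is itself a $\Sin{}1$ proof predicate enumerating precisely the theorems of $\CSn n$; it is a $\Sin{}1$ formula and the base theory is (provably) $\Sigma_1$-complete, so the induced provability predicate satisfies the usual derivability conditions, and so G\"odel's second incompleteness theorem applies verbatim, yielding $\CSn n \notproves \consd{\CSn n}$ and thus the desired contradiction.

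I expect the main obstacle to be exactly this metamathematical bookkeeping: ensuring that the three imported facts being chained here --- Prop.~\ref{prop:if-mcn-then-rfn}, the inclusion in Thm.~\ref{thm:cyclic-sim-ind}, and Thm.~\ref{thm:goedel-second-incompleteness} --- all concern the \emph{same} notion of $\CSn n$-provability, notwithstanding the passage from $\sqsubseteq$ to $\sqsubseteq_d$. All of the genuinely substantial content --- the arithmetised soundness argument for cyclic proofs, the reverse-mathematical facts about B\"uchi and parity automata, and the incompleteness machinery --- is already isolated in the cited statements, so the proof itself is a short chain of invocations once this coherence check is in place.
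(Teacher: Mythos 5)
Your proof is correct and follows essentially the same route as the paper's: contraposition of Prop.~\ref{prop:if-mcn-then-rfn}, transfer of $\consd{\CSn n}$ to $\CSn n$ by $\Pin{}{1}$-conservativity (Thm.~\ref{thm:cyclic-sim-ind}), and G\"odel's second incompleteness theorem for $\CSn n$, exactly as in Cor.~\ref{cor:isn+1-nprv-cons-csn}. The coherence check you single out --- that the incompleteness argument re-runs for the $\sqsubseteq_d$-variant of the proof predicate, since under the contradiction hypothesis it is a correct $\Sin{}{1}$ proof predicate for the same theory --- is precisely what the paper compresses into ``the same argument holds for our revised notion of consistency,'' and your justification of it is sound.
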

	\begin{proof}
		[Proof sketch]
		The same argument as Cor.~\ref{cor:isn+1-nprv-cons-csn} holds for our revised notion of consistency; in particular we have that $\ISn 1 \notproves \consd{\CDn 0 }$.
		The result now follows immediately by the contraposition of Prop.~\ref{prop:if-mcn-then-rfn} above, for $n=0$.
	\end{proof}

	\section{Conclusions and further remarks}
	\label{sect:conc}
	In this work we developed the theory of cyclic arithmetic by studying the logical complexity of its proofs.
	We showed that inductive and cyclic proofs of the same theorems require similar logical complexity, and obtained tight quantifier complexity bounds in both directions.
	We further showed that the proof complexity of the two frameworks differs only elementarily, although it remains unclear how to properly measure proof complexity for the fragments $\CSn n$, even if the theory itself seems well-defined and robust.
	Many of these issues constitute avenues for further work.

	\subsection{Comparison to the proofs of \cite{BerTat17:lics} and \cite{Sim17:cyclic-arith}}
	One reason for our improved quantifier complexity compared to \cite{Sim17:cyclic-arith}, is that Simpson rather relies on \emph{Weak K\"onig's Lemma} ($\wkl$) to obtain an infinite branch when formalising the soundness argument for cyclic proofs.
	This, \emph{a priori}, increases quantifier complexity of the argument, since $\wkl$ is known to be unprovable in $\RCA$ even in the presence of $\CIND{\Sin{0}{2}}$;
	in fact, it is incomparable to $\CIND{\Sin{0}{2}}$ (see, e.g., \cite{KMPS16:buchi-reverse}).
	That said, we believe that the `bounded-width $\wkl$' ($\bwwkl$) of \cite{KMPS16:buchi-reverse} should suffice to carry out Simpson's proof, and this principle is provable already in $\RCA + \CIND{\Sin 0 2}$.
	Applying this strategy to his proof yields only that $\CSn n \subseteq \ISn{n+3}$, since $\bwwkl$ is applied to a $\Pin{0}{n+1}$ set, though this should improve to a $\ISn{n+2}$ bound by using the non-uniform version of NBA complementation implicit in \cite{KMPS16:buchi-reverse}, cf.~Prop.~\ref{prop:nonuniform-compl}.
	We reiterate that the main improvement here is in giving a uniform formulation of those results; not only does this lead to a better proof complexity result, cf.~Thm.~\ref{thm:elementary-simulation}, but we also recover a metamathematical account of the theories $\CSn n$, cf.~\ref{sect:further-metalogical}.
	%
	%
	%
	%

	Berardi and Tatsuta's approach, \cite{BerTat17:lics}, is rather interesting since it is arguably more `structural' in nature, relying on proof-level manipulations rather than reflection principles.
	That said there are still crucial sources of logical complexity blowup, namely in an `arithmetical' version of \emph{Ramsey's theorem} (Thm.~5.2) and the consequent \emph{Podelski-Rybalchenko termination theorem} (Thm.~6.1).
	Both of these apparently increase quantifier complexity by several levels, and so their approach does not seem to yield comparable logical bounds to this work.
	Since proof complexity is not a primary consideration of their work, it is not simple to track the precise bounds in \cite{BerTat17:lics}. There are some apparent sources of exponential blowups,\footnote{For instance, Lemma 8.4 in that work yields a set of apparently exponential size in the worst case, and this bounds from below the size of the overall translation, e.g.\ as in Lemma 8.7.} 		
	though it seems that the global simulation is elementary.
	As before, we reiterate that the major improvement in the present work is in the uniformity of our exposition: the approach of \cite{BerTat17:lics} is fundamentally non-uniform so does not yield any metamathematical account of cyclic arithmetic.
	

	\subsection{On the correctness criteria for cyclic proofs}
	Since the algorithms used to check correctness of a cyclic preproof reduce to the inclusion of B\"uchi automata, the exponential simulation of $\CA$ by $\PA$ is optimal, unless there is a nondeterministic subexponential-time algorithm for $\pspace$ or, more interestingly, there is an easier way to check cyclic proof correctness. (In fact, technically, it would suffice to have an easier criterion for a \emph{larger} class of preproofs that were, nonetheless, sound.)
	As far as we know, $\pspace$ remains the best known upper bound for checking the correctness of general cyclic preproofs, although efficient algorithms have recently been proposed for less general correctness criteria, cf.~\cite{Stratulat17,Nol-Sau-Tas:18:local-validity}. Thus, it would be interesting to prove a corresponding lower bound or otherwise improve the upper bound.
	Conditional such results could be obtained via, say, certain polynomial upper bounds on proof complexity in $\CA$: for instance, if $\CA$ were to have polynomial-size proofs of each correct B\"uchi inclusion then cyclic proof correctness would not be polynomial-time checkable, unless $\np =\pspace$.
	Unfortunately na\"ive attempts at this approach fail, but the general question of whether $\PA$ and $\CA$ are exponentially separated seems pertinent.
	
	On the other hand, the translation of Lemma~\ref{lem:ind-to-cyc-trans} from inductive proofs to cyclic proofs is rather structured. In light of the converse result in Sect.~\ref{sect:ind-sim-cyc} it might make sense in further work, from the point of view of logical complexity, to consider only cyclic proofs accepted by some weaker more efficiently verified criterion, such as \cite{Stratulat17,Nol-Sau-Tas:18:local-validity}.
	
	\subsection{Interpreting ordinary inductive definitions in arithmetic}
	\label{sect:sect:interp-folid-arith}
	
	
	%
	%
	%
	%
	%
	%

	In earlier work by Brotherston and Simpson, cyclic proofs were rather considered over a system of FO logic extended by `ordinary' \emph{Martin-L\"of} inductive definitions \cite{Pml71:haupt-int-iid}, known as $\FOLID$ \cite{Bro06:phdthesis,BroSim07:comp-seq-calc-ind-inf-desc,BroSim11:seq-calc-ind-inf-desc}.
	Berardi and Tatsuta showed in \cite{BerTat17:lics} that the cyclic system $\CLKID^\omega$ for $\FOLID$ is equivalent to the inductive system $\LKID$, when at least arithmetic is present, somewhat generalising Simpson's result \cite{Sim17:cyclic-arith}.
	We point out that 
	ordinary Martin-L\"of inductive definitions can be \emph{interpreted} in arithmetic
	in the usual way by a $\Sin{}{1}$ inductive construction of `approximants', and a proof of $\CLKID^\omega$ may be similarly interpreted line-by-line in $\CA$.
	{(This is similar to the role of the `stage number predicates' in \cite{BerTat17:lics}.)}
	In particular, this means that $\CLKID^\omega (+ \PA)$ is \emph{conservative} over $\CA$.
	We reiterate that the interest behind the results of \cite{BerTat17:lics} is rather the structural nature of the transformations, but this observation also exemplifies why $\CA$ is a natural and canonical object of study, as argued in \cite{Sim17:cyclic-arith}.
	
	
	
	%

	\subsection{Cyclic propositional proof complexity}
	
	
One perspective gained from this work comes in the setting of \emph{propositional proof complexity} (see, e.g., \cite{Cook:2010:LFP:1734064,Krajicek:1996:BAP:225488}).
Thm.~\ref{thm:cyclic-sim-ind} of Sect.~\ref{sect:cyc-sim-ind} should relativise to theories with oracles too. For instance, we may formalise in $\CDn 0 (f)$, where $f$ is a fresh (uninterpreted) function symbol, a proof of the relativised version of the (finitary) pigeonhole principle (see App.~\ref{sect:php-case-study}).
This formula is known to be unprovable in $\IDn 0 (f)$ due to lower bounds on propositional proofs of bounded depth \cite{KraPudWood:95:ExpPHPbdF,PitBeaImp:93:ExpLBPHP}.
	
	At the same time the `Paris-Wilkie' translation \cite{paris19810}, which fundamentally links $\IDn{0}(f)$ to bounded-depth propositional proofs, works locally on an arithmetic proof, at the level of formulae.
	Consequently one may still apply the translation to the lines of a $\CDn{ 0}(f)$ proof to obtain small `proof-like' objects containing only formulae of bounded depth, and a cyclic proof structure.
	One would expect that this corresponds to some strong form of `extension', since it is known that adding usual extension to bounded systems already yields full `extended Frege' proofs.
	However at the same time, some of this power has been devolved to the proof structure rather than simply at the level of the formula, and so could yield insights into how to prove simulations between fragments of Hilbert-Frege systems with extension.
	
	We point out that recent work, \cite{AtsLaur:18}, relating cyclic proof structures to proof complexity has already appeared, albeit with a different correctness criterion.

	\section*{Acknowledgments}
	\noindent 
	I am indebted to Alex Simpson for encouraging me to pursue this work and for his valuable feedback. Similarly, I would like to thank Stefano Berardi for several illuminating discussions on metalogical matters regarding cyclic proofs.
	Finally, I would like to thank James Brotherston, Guilhem Jaber, Alexis Saurin and the anonymous reviewers for this and previous versions of this work for all their helpful comments and insights.

	\bibliographystyle{alpha}
	\bibliography{cyclicarithmetic-refs}
	
	\newpage
	\appendix
	
	\section{Case study: the relativised pigeonhole principle}
	\label{sect:php-case-study}
	In this section we will give an example of the translation from Sect.~\ref{sect:cyc-sim-ind} in a relativised setting.
	Simpson already gave an example of a separation between $\CSn 1$ and $\ISn{1}$ via the totality of the Ackermann-P\'eter function \cite{Sim17:cyclic-arith}, a $\Pin {} 2$ sentence. Logically simpler $\Pin{}{1}$ separations are obtainable in the form of consistency principles, as we discussed in Sect.~\ref{sect:further-metalogical}.
	
	In this section we consider the well-known \emph{pigeonhole principle},
	%
	defined by the following FO formula with an uninterpreted function symbol $f$:
	\[
	\PHP(f)
	\ \dfn \ 
	\forall n . ( \forall x \leq n . f(x)<n 
	\ \cimp\ 
	\exists x \leq n .\exists x'<x . f(x)=f(x')
	)
	\]
	
	\noindent
	It is well-known that $\IDn 0 (f)$ does not prove $\PHP(f)$, due to lower bounds on propositional proofs of bounded depth \cite{KraPudWood:95:ExpPHPbdF,PitBeaImp:93:ExpLBPHP}. On the other hand, by relativising the constructions in Sect.~\ref{sect:cyc-sim-ind}, it is provable in $\CDn 0 (f)$ thanks to the known simple proofs in $\ISn 1 (f)$.
	
	\subsection{A simple proof of $\PHP (f)$ in $\ISn 1 (f)$}
	First we recall a simple well-known proof of $\PHP(f)$ in $\ISn{1} (X)$.
	In fact, as in Sect.~\ref{sect:cyc-sim-ind}, we will work with $\CIND{\Pin{}{1}}$ rather than $\CIND{\Sin{}{1}}$.

	Temporarily, let us write $A,B$ for first-order variables that we interpret as the codes of finite sets.
	For such codes we may use set-theoretic symbols such as $\inn$ and $\setminus$ with their usual interpretations, with the understanding that their basic properties are provable in $\IDn 0 $.
	\begin{lem}
		\label{lem:php-ind}
		$\ISn 1 (f)$ proves the following:
		\begin{equation}
		\label{eqn:php-ih}
		\forall A , B . (|A|>|B| \cimp 
		(\forall x \inn A . f(x) \inn B \cimp \exists x ,x' \inn A . (x \neq x' \cand f(x)=f(x')))
		)\end{equation}
	\end{lem}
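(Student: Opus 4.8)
The plan is to prove \eqref{eqn:php-ih} by $\Pin{}{1}$-induction on $|B|$, which is legitimate in $\ISn 1(f)$ since the statement, once we fix $A$ and the hypothesis that $f$ maps $A$ into $B$ injectively-to-be-refuted, can be put in $\Pin{}1(f)$ form after bounding all the set-code quantifiers (finite sets coded as numbers are $\Din{}0$ objects, and $x \in A$, $|A| > |B|$ etc.\ are $\Din{}0(f)$ predicates). More precisely, I would fix $A$ and argue, by induction on a variable $b$, the statement $\forall B.(|B| = b \cimp \forall A.(|A| > |B| \cand \forall x \in A.\, f(x) \in B \cimp \exists x, x' \in A.(x \neq x' \cand f(x) = f(x'))))$. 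Quantifying $A$ inside the induction is harmless because its basic arithmetic is available, and the whole induction formula is (equivalent to) a $\Pin{}1(f)$ formula.

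First I would handle the base case $b = 0$: if $|B| = 0$ then $B = \emptyset$, so the hypothesis $\forall x \in A.\, f(x) \in B$ together with $|A| > 0$ (so $A \neq \emptyset$, pick any $x \in A$) gives $f(x) \in \emptyset$, a contradiction; hence the conclusion holds vacuously. For the inductive step, assume the claim for $b$ and suppose $|B| = b + 1$ and $|A| > |B|$ and $\forall x \in A.\, f(x) \in B$. Pick some element $c \in B$ (possible since $|B| = b+1 > 0$). Split on whether two distinct elements of $A$ are sent to $c$: if $\exists x, x' \in A.(x \neq x' \cand f(x) = c = f(x'))$, we are done immediately. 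Otherwise at most one $x \in A$ has $f(x) = c$; remove it (if it exists) to get $A' \subseteq A$ with $|A'| \geq |A| - 1 > |B| - 1 = |B \setminus \{c\}| = b$, and now $\forall x \in A'.\, f(x) \in B \setminus \{c\}$. Applying the induction hypothesis with $B \setminus \{c\}$ (whose size is $b$) and $A'$ yields distinct $x, x' \in A' \subseteq A$ with $f(x) = f(x')$, as required. Finally, \eqref{eqn:php-ih} follows by instantiating this with arbitrary $A, B$.

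The main obstacle is purely bookkeeping rather than conceptual: one must check that the induction formula genuinely lies in (or is $\ISn 1(f)$-provably equivalent to) a $\Pin{}1(f)$ formula, so that $\CIND{\Pin{}1}$ — equivalently $\CIND{\Sin{}1}$ by Prop.~\ref{prop:ipn-equals-isn} — applies, and that all the finite-set manipulations ($|A| > |B|$, $A \setminus \{x\}$, $|A \setminus \{x\}| \geq |A| - 1$, $|B \setminus \{c\}| = |B| - 1$) are carried out via their $\IDn 0$-provable coding properties. Since we are working over $\ISn 1(f)$, which proves the totality of primitive recursive functions and all standard facts about finite-set codes, none of this is problematic; the only care needed is that the uninterpreted symbol $f$ occurs only in atomic subformulae, which it does, so the $\Din{}0(f)$ matrix stays $\Din{}0(f)$ and the quantifier prefix stays $\Pin{}1$. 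I would therefore present the argument in the informal set-theoretic style above, remarking once that it formalises in $\ISn 1(f)$ by the standard coding of hereditarily finite sets.
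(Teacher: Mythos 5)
Your proposal is correct and follows essentially the same route as the paper: induction on $|B|$ (formulated as $\Pin{}1(f)$-induction, legitimate since $\IPn 1 = \ISn 1$), a vacuous base case, and an inductive step that removes a chosen $b\in B$ together with its at most one preimage in $A$ unless two distinct preimages already witness the conclusion. The only cosmetic difference is that you merge the paper's two subcases (no preimage of $b$ versus exactly one) into a single ``at most one'' case, which changes nothing of substance.
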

	\begin{proof}
		Working in $\ISn 1 (f)$, we reason by induction on $|B|$.
		If $B$ is empty and $|A|>|B|$ then $A$ is nonempty and so \eqref{eqn:php-ih} is vacuously true by falsity of the premiss.
		
		Otherwise $B$ is nonempty, so let $b \in B$ and let $|A|>|B|$.
		\begin{itemize}
			\item If $\exists x \in A . f(x) =b$ then, let $a\in A$ such that $f(a)=b$.
			\begin{itemize}
				\item If $\exists x\in A . (x'\neq a \cand f(x')=b)$ then we are done.
				\item Otherwise suppose $\forall x \in A. (f(x)=b\cimp x=a)$.
				Then we have $\forall x \in A\setminus \{a\} . f(x) \in B \setminus \{b\} $. Since we still have that $|A\setminus\{a\}|>|B\setminus\{b \}$ we may conclude by the inductive hypothesis.
			\end{itemize}
			\item Otherwise $\forall x \inn A . f(x) \neq b$, so in fact $\forall x \in A . f(x) \inn B \setminus\{b\}$ and still $|A|>|B\setminus\{b\}$. Hence we conclude by the inductive hypothesis.\qedhere
		\end{itemize}
	\end{proof}
	
	From here there is a simple proof of $\eqref{eqn:php-ih} \cimp \PHP(f)$ in $\IDn 0 (f)$, by instantiating $A$ and $B$ as $[0,n]$ and $[0,n)$ resp. Thus we have that $\ISn 1 (f) \proves \PHP(f)$.

	\subsection{A proof of $\PHP(f)$ in $\CDn 0 (f)$}
	To show that $\CDn 0 (f) \proves \PHP(f)$ it suffices to give $\CDn 0 (f)$ proofs of Lemma~\ref{lem:php-ind}. The remainder of the argument may be carried out in $\IDn 0 (f)$, and so also in $\CDn 0 (f)$ by Prop.~\ref{prop:isn-in-csn}, 
	
	\begin{lem}
		$\CDn 0 (f) \proves \eqref{eqn:php-ih}$.
	\end{lem}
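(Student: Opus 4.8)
The plan is to invoke the relativised form of Theorem~\ref{thm:cyclic-sim-ind}. By Lemma~\ref{lem:php-ind} we already have a proof of \eqref{eqn:php-ih} using only $\CIND{\Pin{}1}$ over the language with the oracle $f$, i.e.\ an $\IPn 1(f)$ proof, and hence (by the relativisation of Prop.~\ref{prop:ipn-equals-isn}) an $\ISn 1(f)$ proof. It therefore suffices to observe that \eqref{eqn:php-ih} is a $\Pin{}1$ sentence and that the translation of Lemma~\ref{lem:ind-to-cyc-trans}, and so Theorem~\ref{thm:cyclic-sim-ind} itself, goes through unchanged in the presence of the uninterpreted symbol $f$. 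Applying this relativised theorem with $n = 0$ to the $\ISn 1(f)$ proof of \eqref{eqn:php-ih} then gives $\CDn 0(f) \proves \eqref{eqn:php-ih}$.

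For the complexity bookkeeping: $A$ and $B$ are first-order variables interpreted as codes of finite sets, so $x \inn A$, $|A|$, $|A| > |B|$, and the like are all $\Din 0 0$ (membership in a coded finite set is bounded by the code), while $f(x)$ is a term. Thus the quantifiers $\forall x \inn A$ and $\exists x, x' \inn A$ occurring in \eqref{eqn:php-ih} are bounded, the matrix is $\Din 0 0 (f)$, and \eqref{eqn:php-ih} has the shape $\forall A, B . \chi$ with $\chi$ bounded --- a $\Pin{}1$ sentence over the language with $f$, so within the scope of the relativised Theorem~\ref{thm:cyclic-sim-ind}.

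The only delicate point is checking that adding $f$ does not disturb the construction. But Lemma~\ref{lem:ind-to-cyc-trans} is purely structural: the case analysis on the last rule, the eigenvariable bookkeeping, and the assembly of infinitely progressing traces never refer to the non-logical vocabulary beyond the axioms of $\Q$; introducing $f$ only adds the extra initial sequents~\ref{item:eq-ax-fn} for $f$, which are closed under substitution, are untouched by the translation, and survive free-cut elimination, Thm.~\ref{thm:free-cut-elim}. Since traces are defined in terms of occurrences of terms and the atomic predicate $<$, and $f$ never appears as an eigenvariable target, the trace/progress machinery underpinning correctness of the resulting cyclic proof is unaffected. Hence $\ISn 1(f) \subseteq \CSn 0(f) = \CDn 0(f)$ over $\Pin{}1$ theorems, which yields the claim.
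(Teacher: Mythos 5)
Your proposal is correct and follows essentially the same route as the paper: the paper's proof also obtains the cyclic derivation by pushing the $\IPn 1(f)$ proof of Lemma~\ref{lem:php-ind} through the (relativised) translation of Lemma~\ref{lem:ind-to-cyc-trans}, exactly the mechanism underlying Thm.~\ref{thm:cyclic-sim-ind}. The only difference is one of explicitness --- the paper displays the resulting $\CDn 0 (f)$ derivation concretely (Fig.~\ref{fig:php}) rather than appealing to the relativised theorem abstractly, but your justification that the oracle $f$ does not disturb free-cut elimination or the trace machinery covers the same ground.
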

	\begin{proof}
		As abbreviations, let us write $f(A)\subseteq B$ for $\forall x \in A.f(x) \in B$ and $\Inj (A)$ for $\forall x , x' \in A.( f(x)= f(x') \cimp x  = x')$.
We give an appropriate derivation in Fig.~\ref{fig:php}, mimicking the argument of Lemma~\ref{lem:php-ind} under Lemma~\ref{lem:ind-to-cyc-trans}, where $\pi_0$ is a $\IDn 0 (f)$ proof of $B= \emptyset , |A| > |B| , f(A)\subseteq B  \seqar \cnot \Inj (A)$ and $\pi_1$ is a proof of $a'\in A , a'\neq a , f(a')=b, a \in A, f(a)=b, b\in B, |A| > |B| , f(A)\subseteq B  \seqar \cnot \Inj (A) $. 
	\end{proof}

\begin{landscape}
\begin{figure}
	\vspace{.2\textheight}
	\[
		\small
	\vlderivation{
		\vlin{\cimp, \forall}{}{\seqar \forall A , B . (|A|>|B| \cimp 
			(f(A) \subseteq B \cimp \cnot \Inj (A))) }{
			\vliin{}{\bullet}{|A| > |B| , f(A)\subseteq B  \seqar \cnot \Inj (A) }{\vlhy{\pi_0}}{
				\vliin{}{}{b\in B, |A| > |B| , f(A)\subseteq B  \seqar \cnot \Inj (A) }{
					\vliin{}{}{a \in A , f(a)=b, b\in B, |A| > |B| , f(A)\subseteq B  \seqar \cnot \Inj (A)}{
						\vlin{}{}{\forall x \in A . (f(x) = b \cimp x = a), a \in A, f(a)=b, |A| > |B| , f(A)\subseteq B  \seqar \cnot \Inj (A)}{
							\vlin{\sub}{}{|A\setminus\{a\}| > |B\setminus\{b\}|, f(A\setminus \{a\}) \subseteq B\setminus\{b\}\seqar \cnot \Inj (A\setminus\{a\})}{
								\vlin{}{\bullet}{|A| > |B| , f(A)\subseteq B  \seqar \cnot \Inj (A) }{\vlhy{\vdots}}
							}
						}
					}{\vlhy{\pi_1}	}
				}{
					\vlin{}{}{\forall x \in A . \cnot f(x) = b , b \in B, |A| > |B| , f(A)\subseteq B  \seqar \cnot \Inj (A) }{
						\vlin{\sub}{}{ |A| > |B\setminus \{b\}| , f(A)\subseteq B\setminus \{b\}  \seqar \cnot \Inj (A) }{
							\vlin{}{\bullet}{|A| > |B| , f(A)\subseteq B  \seqar \cnot \Inj (A) }{\vlhy{\vdots}}
						}
					}
				}
			}
		}
	}
	\]
	\caption{A $\CDn 0 (f)$ proof of $\PHP(f)$, where $\pi_0$ is a $\IDn 0 (f)$ proof of $B= \emptyset , |A| > |B| , f(A)\subseteq B  \seqar \cnot \Inj (A)$ and $\pi_1$ is a proof of $a'\in A , a'\neq a , f(a')=b, a \in A, f(a)=b, |A| > |B| , f(A)\subseteq B  \seqar \cnot \Inj (A) $.}
	\label{fig:php}
\end{figure}
\end{landscape}

\section{Non-uniform complementation of B\"uchi automata}
\label{sect:complementation-nonuniform}
In this section we give a self-contained proof of Prop.~\ref{prop:nonuniform-compl}, which is only implicit in \cite{KMPS16:buchi-reverse}. 
One novel contribution here is a much simpler proof of the (Nonuniform) Additive Ramsey Theorem.
The remainder of the complementation argument is standard and follows closely \cite{KMPS16:buchi-reverse}, though we present it here with more structure and proof details.

\subsection{Nonuniform Additive Ramsey Theorem in $\RCA$}
Let us write $\binom \Nat 2$ for the set of unordered pairs of natural numbers.
We in fact write its elements as ordered pairs $(i,j)$ where always $i<j$.

Let $(S, \bullet)$ be a finite semigroup and consider a `colouring' $C : \binom \Nat 2 \to S$.
We may omit the group operation symbol $\bullet$ when composing elements of $S$.
We say that $C$ is \emph{additive} if, whenever $i<j<k$, we have $C(i,j)C(j,k) = C(i,k)$.
We say that $I\subseteq \Nat$ is an \emph{$a$-clique} (under $C$) if, $\forall i,j\in I$ we have $C(i,j) =a $.

\begin{thm}
	[Nonuniform Additive Ramsey Theorem]
	\label{thm:additive-ramsey-nonunif}
	Let $(S,\bullet)$ be a finite semigroup. Then:
	\[
	\RCA \proves \forall C : \binom \Nat 2 \to S . (\text{``$C$ is additive''} \cimp \exists a \in S. \exists I \subseteq \Nat . 
	\text{``$I$ is an infinite $a$-clique''}   
	)
	\]
	\end{thm}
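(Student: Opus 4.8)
The plan is to run the usual Ramsey-style argument, but organised so that the only real appeal to a combinatorial principle is an infinite pigeonhole on the finite set $S$, which is available in $\RCA$ (it is $\Din 0 1$-comprehension plus $\Sin 0 1$-induction). First I would fix an additive colouring $C:\binom{\Nat}2\to S$. For each fixed $i$, the map $j\mapsto C(i,j)$ takes values in the finite set $S$, so by infinite pigeonhole there is some colour $a_i\in S$ and an infinite set $J_i$ with $C(i,j)=a_i$ for all $j\in J_i$; moreover we can take $J_i$ to be $\Din 0 1$-definable from $C$ (it is just the set of $j>i$ realising the least colour that occurs infinitely often, or any canonical choice). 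The subtlety is that building an infinite \emph{clique} rather than just a fan is where one would normally invoke $\mathsf{RT}^2_{|S|}$, which is not available; the additivity hypothesis is exactly what lets us avoid this.

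The key idea is: construct recursively a sequence $i_0<i_1<i_2<\cdots$ together with a fixed colour $a$ such that $C(i_k,i_{k+1})=a$ for all $k$, using only that $S$ is finite. Concretely, one first shows by infinite pigeonhole that there is a colour $a\in S$ and an infinite set $H\subseteq\Nat$ such that for every $i\in H$, the set $\{\,j>i : C(i,j)=a\,\}$ is infinite and in fact meets $H$ infinitely often — this ``stabilised'' $H$ and $a$ are obtained by iterating the fan construction of the previous paragraph inside a $\Din 0 1$ set and taking a colour that recurs infinitely often; this iteration is a primitive recursion with a $\Sin 0 1$ search at each stage, hence formalisable in $\RCA$. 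Then one thins $H$ to $I=\{i_0<i_1<\cdots\}$ by always choosing $i_{k+1}\in H$ with $i_{k+1}>i_k$ and $C(i_k,i_{k+1})=a$; this is again a $\Din 0 1$-definable sequence. Finally, additivity upgrades ``$C$ is $a$ on consecutive pairs of $I$'' to ``$C$ is $a$ on all pairs of $I$'': for $i_k<i_l$ with $l=k+m$, induct on $m$, using $C(i_k,i_l)=C(i_k,i_{k+1})C(i_{k+1},i_l)=a\cdot a^{?}$ — here one must also know $a$ is idempotent, $aa=a$, which follows since $C(i_0,i_2)=C(i_0,i_1)C(i_1,i_2)=aa$ while also $C(i_0,i_2)=a$ by the consecutive-pair property applied after a further thinning, or more cleanly by first passing to a sub-clique on which the consecutive colour is idempotent (again finite pigeonhole picks such a colour among those occurring infinitely often). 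So $I$ is an infinite $a$-clique, and $a=aa$.

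The induction showing $C(i_k,i_l)=a$ for all $k<l$ in $I$ is a straightforward $\Sin 0 1$-induction (the statement $\forall l>k.\,C(i_k,i_l)=a$ is $\Pin 0 1$ with a set parameter coding $I$ and $C$, but the per-$m$ unfolding is bounded, so $\CIND{\Sin 0 1}$ suffices); the existence of $I$ and $a$ is witnessed by $\Din 0 1$-comprehension applied to a primitive recursive construction. \textbf{The main obstacle} is carrying out the ``stabilisation'' step — obtaining the infinite set $H$ and colour $a$ on which consecutive colours are both available infinitely often \emph{and} idempotent — entirely within $\RCA$, i.e.\ checking that the nested infinite-pigeonhole arguments really only need $\CIND{\Sin 0 1}$ and bounded comprehension and do not smuggle in $\mathsf{RT}^2_2$ or $\mathsf{ACA}_0$. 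This is precisely the point where additivity of $C$ is indispensable: it collapses the $2$-dimensional Ramsey problem to a $1$-dimensional pigeonhole on the finite semigroup, and the proof must make that collapse explicit before any unbounded search is performed.
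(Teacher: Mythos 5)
Your high-level strategy (build a sequence on which consecutive pairs are monochromatic with an idempotent colour, then use additivity to upgrade to a clique) is a legitimate classical route, and your handling of idempotency and of the final induction upgrading consecutive pairs to all pairs is essentially fine. But there is a genuine gap exactly at the step you yourself flag as ``the main obstacle'': the construction of the stabilised set $H$ and colour $a$. What you describe there --- iterating the fan construction, at each stage selecting ``a colour that recurs infinitely often'' and restricting to the corresponding colour class --- is precisely the standard proof of $\mathsf{RT}^2_{|S|}$, and it is not formalisable in $\RCA$: deciding which colours occur infinitely often in $C(i,\cdot)$ is a $\Pin 0 2$ question, the uniform sequence $(a_i)_i$ of such colours is not recursive in $C$, and the nested restriction to successive colour classes is exactly why $\mathsf{RT}^2_2$ fails in $\RCA$ (already for recursive colourings there need be no infinite recursive homogeneous set). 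The construction you sketch makes no actual use of additivity, so if it went through it would prove full $\mathsf{RT}^2_k$ in $\RCA$, which is false; saying that additivity ``collapses the problem'' is the right slogan, but the mechanism is never exhibited. Relatedly, the claim that $J_i$ is ``$\Din 0 1$-definable from $C$'' as the class of the least infinitely-recurring colour is not correct: $J_i$ is $\Din 0 1$ only once the colour $a_i$ is supplied as a parameter, and the passage $i\mapsto a_i$ is where all the logical strength hides.

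For contrast, the paper avoids any stabilisation of this kind by a meta-level induction on $|S|$ exploiting the semigroup structure. After arranging that every colour occurs with arbitrarily large first coordinate (so that certain recursive searches provably terminate), one distinguishes two cases. If some $a\in S$ has $aS\subsetneq S$, one extracts by primitive recursion an increasing sequence $i(0)<j(0)<i(1)<j(1)<\cdots$ with $C(i(n),j(n))=a$; additivity then forces $C(i(m),i(n))=a\cdot C(j(m),i(n))\in aS$, so the induced colouring on the $i(n)$ lands in a strictly smaller set and the induction hypothesis applies. In the remaining case $aS=S$ for every $a$, left translations are injective, and a \emph{single} application of the infinite pigeonhole principle to $j\mapsto C(0,j)$ already yields the clique: if $C(0,i)=C(0,j)=a$ with $i<j$ then $a=a\cdot C(i,j)$, and left cancellation pins $C(i,j)$ down to the unique $b$ with $ab=a$. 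That single pigeonhole is the only combinatorial principle invoked, which is why the whole argument fits in $\RCA$. To repair your proof you would need to replace the stabilisation step by an argument of this kind (or by the ordered Ramsey / Green-relations analysis of \cite{KMPS16:buchi-reverse}); as it stands that step is asserted rather than proved.
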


\noindent
Before we give the proof, we better state the following fact:
\begin{fact}
	[Nonuniform Infinite Pigeonhole Principle]
	\label{fact:iphp-nonunif}
	Let $S$ be a finite set. Then:
$$
\RCA \proves \forall f: \Nat \to S. \exists a \in S . \forall m. \exists x > m . f(x) = a
$$
In particular:
\[
\RCA\proves \forall f : \Nat \to S .\  \exists Y \text{infinite}.\ \forall x,y \in Y .\  f(x) = f(y)
\]
\end{fact}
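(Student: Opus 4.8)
The plan is to treat this Fact as a theorem schema in the fixed, standard-finite set $S$, so that every bounded quantifier ``$\forall a \in S$'' or ``$\exists a \in S$'' unwinds into a finite conjunction or disjunction; this lets us avoid any induction on $|S|$ and keeps everything well within $\RCA$ (in fact within $\IDn 0$ augmented by the graph of $f$).

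First I would prove $\RCA \proves \forall f : \Nat \to S.\ \exists a \in S.\ \forall m.\ \exists x > m.\ f(x) = a$ by contradiction. Writing $S = \{b_1, \dots, b_k\}$ with $k$ a fixed numeral, the negation of the conclusion gives, for each $i \leq k$, some $m_i$ with $f(x) \neq b_i$ for all $x > m_i$. Put $M := \max(m_1, \dots, m_k)$, which is definable already in $\IDn 0$. Since $M + 1 > m_i$ for every $i \leq k$, we get $f(M+1) \neq b_i$ for every $i \leq k$; but $\RCA$ proves that $f$ is a total single-valued map into $S$, so $f(M+1) = b_i$ for some $i \leq k$, contradicting the conjunction $\bigwedge_{i \leq k} f(M+1) \neq b_i$. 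This step uses only finitely many instances of the hypothesis together with the propositional fact $\bigvee_{i \leq k} f(M+1) = b_i$, so no induction is needed.

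For the ``in particular'' assertion, given $f : \Nat \to S$ I would apply the statement just proved to obtain some $a \in S$ with $\forall m.\ \exists x > m.\ f(x) = a$. The predicate ``$f(x) = a$'' is $\Din 0 1$, being recursive in the graph of $f$ (which $\RCA$ verifies is total and functional), so by $\CCA{\Din 0 1}$ the set $Y := \{x : f(x) = a\}$ exists. Then $Y$ is infinite in the reverse-mathematics sense --- for each $m$ the witness $x > m$ with $f(x) = a$ lies in $Y$ --- and for all $x, y \in Y$ we have $f(x) = a = f(y)$, which is the desired monochromatic set. I do not anticipate a genuine obstacle: the only points needing care are the recognition that the bounded quantifiers over the meta-level finite $S$ are finite Boolean combinations (so the pigeonhole step is essentially propositional once $k$ witnesses are chosen) and the routine check that ``$f(x) = a$'' really is $\Din 0 1$, so that recursive comprehension produces $Y$.
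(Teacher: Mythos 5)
Your proof is correct and is essentially the argument the paper has in mind: the paper only gestures at a ``routine meta-level induction on the size of $S$'', and your finite-max contradiction (instantiate the $k$ thresholds $m_1,\dots,m_k$, take $M=\max_i m_i$, and contradict $\bigvee_{i\le k} f(M+1)=b_i$) is exactly what that induction unwinds to, since both exploit that $S$ is a standard finite set so the quantifiers over $S$ are finite Boolean combinations. Your treatment of the second display via $\CCA{\Din 0 1}$ applied to $Y=\{x : f(x)=a\}$ coincides with the paper's one-line justification.
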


\noindent
This result is well-known in reverse mathematics and can be proved by a routine meta-level induction on the size of $S$.
The second display follows from the first part by $\CCA{\Din 0 1}$.


We can now give a proof of the Nonuniform Additive Ramsey Theorem.
This argument differs from and is somewhat simpler than the analogous argument from \cite{KMPS16:buchi-reverse} (Prop.~4.1), which requires a detour via the Ordered Ramsey Theorem and `Green' theory. 
In particular, since we only care about nonuniform provability here, we are not conerned about the quantifier complexity of the inductive invariant, since this induction will take place at the meta-level.

\begin{proof}
	[Proof of Thm.~\ref{thm:additive-ramsey-nonunif}]
	We proceed by a meta-level induction on the size of $S$.
	If $S$ is empty the statement is vacuously true, so we proceed to the inductive step.
	
	If for some $a \in S$ there are only finitely many $i$ such that $C(i,j) = a$ for some $j>i$. 
	Then, letting $n$ be such that $C(i,j) \neq a $ for $i,j\geq n$, we may simply apply the inductive hypothesis to the colouring $C(\cdot + n, \cdot + n)$.
	Thus we may assume henceforth that:
	\begin{equation}
	\label{eqn:each-colour-occurs-inf-high}
		\forall a \in S . \exists^\infty i \in \Nat. \exists j>i . C(i,j) = a
	\end{equation}

	Now, suppose that for some $a \in S$ we have $aS \subsetneq S$.
	We will define, assuming the above display, a certain `increasing' enumeration of elements of $\binom \Nat 2$ that map to $a$.
%
%
	Consider the functions $i:\Nat \to \Nat$ and $j: \Nat \to \Nat$ defined simultaneously as follows:
	\[
	\begin{array}{rcl}
	i(0) & \dfn & \beta_0 \mu k. (k = \pair{i'}{j'}  \cand j'> i' \cand C(i', j') = a) \\
	i(n+1) & \dfn & \beta_0 \mu k. (k=\pair{i'}{j'} \cand j'> i'>j(n) \cand C(i', j') = a) \\
	\noalign{\medskip}
	j(0) & \dfn & \mu j'>i(0) . C(i(0),j' ) = a \\
	j(n+1) & \dfn & \mu j'>i(n+1) . C(i(n+1), j') = a
	\end{array}
	\]
	where we write $\mu x . \phi(x)$ for the least $x$ such that $\phi(x)$. Notice that this definition is a form of simultaneous primitive recursion with provably terminating blind searches, assuming Eqn.~\eqref{eqn:each-colour-occurs-inf-high}, so $i $ and $j$ are provably recursive and their defining equations above are provable (in $\RCA + \eqref{eqn:each-colour-occurs-inf-high}$).
	Thus we have that for all $n \in \Nat$:
	\[
	\begin{array}{l}
	i(n) < j(n) < i(n+1) \\
	C(i(n),j(n)) = a
	\end{array}
	\]
	But now, for $m<n$, notice that, 
	$$
	\begin{array}{rcl}
	C(i(m),i(n))& = & C(i(m),j(m))  C(j(m),i(n)) \\
	& = & aC(j(m), i(n)) \\
	& \in & aS \subsetneq S
	\end{array} 
	$$ 
	by the additivity property and assumption. 
	So from here we may apply the inductive hypothesis to the colouring $C(i(\cdot),i(\cdot))$ and conclude.
%
%
%
%
%
%
	Thus we may henceforth assume that:
	\begin{equation}
	\label{eqn:latin-square}
	\forall a \in S. aS = S
	\end{equation}
	In other words, $a \bullet \cdot$ is a bijection on $S$, for any $a \in S$.
	Thus we have the left cancellation property: if $ab = ac $ then $b=c$, and for any $a$ there is a unique $b \in S$ such that $ab=a$.
	
	Now, $C(0, \cdot)$ must take some value $a \in S$ infinitely often, by the non-uniform infinite pigeonhole principle, Fact~\ref{fact:iphp-nonunif}.
	In this case we may simply set $I  = \{ i : C(0,i)=a \}  $.
	Notice that, for $i,j \in I$ with $i<j$, we have $ a= C(0,j) = C(0,i)C(i,j) = aC(i,j)$, and so indeed each $C(i,j)$ is identical, by left-cancellation.
\end{proof}

%

\subsection{Characterising rejection via the Ramseyan factorisation of $\omega$-words}
For the remainder of this section let us fix an NBA $\mathcal A = (A, Q, \delta, q_0, F)$.

Since we deal with non-deterministic automata, it no longer makes sense to use the notation $\sigma : q \trarr{*}{\delta} q'$ from Sect.~\ref{sect:red-to-det}, since there may be several paths through $\delta$ from $q$ to $q'$ following $\sigma$.
Instead, for $\sigma \in A^*$, we write:
\[
x : q\trarr{\sigma}{\delta}q'
\quad \dfn \quad
\text{``$x$ is a path through $\delta$ from $q$ to $q'$ following $\sigma$''}
\]

%
%

Now, for each finite word $\sigma \in A^*$ we may consider its \emph{transition matrix} $\delta(\sigma)$, which is the graph whose nodes are those of $Q$ with an edge from $q $ to $q' $ if there exists some $x:q \trarr \sigma \delta q'$.
The edge is labelled with $\final$ if there is such a path that hits an accepting state (after $q$).
We construe such graphs as functions of type $Q \times Q \to \{0,1,\final \}$ in the natural way.
Let us call the set of all such graphs $\delta(A^*)$, which we note must be finite.
Notice that $\delta: A^* \to (Q \times Q \to \{0,1,\final \})$ is provably recursive in $\RCA$ by the following definition:
\[
\delta(\sigma) (q, q') = 
\begin{cases}
	0 & \nexists x : q \trarr \sigma \delta q' 
\\
\infty & \exists x: q \trarr \sigma\delta q' . \exists i<|x| . x(i) \in F
\\
1 & \text{otherwise}
\end{cases}
\]

We may compose such graphs by a variation of the usual relational composition accounting for the labelling in a natural way: for $\beta, \gamma : Q \times Q \to \{ 0,1,\final \}$, we define $\beta\gamma : Q \times Q \to \{ 0,1,\final \}$ by:
$$(\beta\gamma)(q,q' ) = \max\limits_{p\in Q} \left(\beta(q,p)\cdot \gamma(p,q')\right)$$
Here we define $\max$ and $\cdot$ as expected, in particular setting $0\cdot \final = 0 = \final \cdot 0$.
Intuitively, this is just the same as relational composition, only recording if it is possible to hit a final state en route.
Under these operations
we have that $\delta$ is in fact a homomorphism $A^* \to (Q \times Q \to \{0,1,\final \})$, provably in $\RCA$:

\begin{prop}
\label{prop:trans-matrix-additive}
$\RCA  \proves \forall \sigma, \tau \in A^*  . \delta(\sigma\tau) = \delta(\sigma)\delta(\tau)$.
\end{prop}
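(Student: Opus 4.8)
The plan is to prove $\delta(\sigma\tau) = \delta(\sigma)\delta(\tau)$ by a straightforward unwinding of the definitions, reasoning in $\RCA$. First I would fix $\sigma,\tau \in A^*$ and two states $q,q' \in Q$, and argue that $\delta(\sigma\tau)(q,q')$ and $(\delta(\sigma)\delta(\tau))(q,q') = \max_{p\in Q}(\delta(\sigma)(q,p)\cdot\delta(\tau)(p,q'))$ agree by a case analysis on the value in $\{0,1,\infty\}$. The key observation is that a path $x$ through $\delta$ from $q$ to $q'$ following $\sigma\tau$ splits uniquely, at position $|\sigma|$, into a path $x_1 : q \trarr{\sigma}{\delta} p$ and a path $x_2 : p \trarr{\tau}{\delta} q'$ where $p = x(|\sigma|)$; conversely any such pair of paths concatenates to a path following $\sigma\tau$. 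This `splitting/gluing' correspondence is recursive and its basic properties are provable in $\RCA$ (it is really just a statement about finite sequences).

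The case analysis then goes as follows. For the value $0$: $\delta(\sigma\tau)(q,q') = 0$ iff there is no path $q \trarr{\sigma\tau}{\delta} q'$, iff for every $p$ either there is no path $q\trarr{\sigma}{\delta}p$ or no path $p\trarr{\tau}{\delta}q'$, iff for every $p$, $\delta(\sigma)(q,p) = 0$ or $\delta(\tau)(p,q') = 0$, iff $\max_p(\delta(\sigma)(q,p)\cdot\delta(\tau)(p,q')) = 0$ (using $0\cdot\infty = \infty\cdot 0 = 0$). For the value $\infty$: $\delta(\sigma\tau)(q,q') = \infty$ iff some path $q\trarr{\sigma\tau}{\delta}q'$ hits an accepting state after $q$; splitting at $|\sigma|$, this accepting state lies in the $\sigma$-part or the $\tau$-part, so this holds iff for some $p$ either ($\delta(\sigma)(q,p) = \infty$ and $\delta(\tau)(p,q') \in \{1,\infty\}$) or ($\delta(\sigma)(q,p)\in\{1,\infty\}$ and $\delta(\tau)(p,q') = \infty$), which is exactly $\max_p(\delta(\sigma)(q,p)\cdot\delta(\tau)(p,q')) = \infty$ given the convention that a product is $\infty$ precisely when both factors are nonzero and at least one is $\infty$. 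The value $1$ case then follows by elimination, since the three values are mutually exclusive and exhaustive. Each of these equivalences is a bounded (indeed $\Din{0}{1}$) manipulation once we have the quantifier `$\exists$ path of length $|\sigma\tau|$', so no induction beyond what $\RCA$ provides is needed; the existential over paths $x$ is over a set that is primitive recursively bounded in $|\sigma\tau|$ and $|Q|$.

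I do not expect a serious obstacle here; the statement is essentially bookkeeping. The one point requiring mild care is making the `unique split at position $|\sigma|$' rigorous as a claim about the coded finite sequences representing paths, together with the associated equivalence between `there exists a path following $\sigma\tau$ hitting $F$' and the disjunction over the split point $p$ of the corresponding statements for $\sigma$ and $\tau$; but this is a routine fact about finite sequences formalisable in $\RCA$ (indeed already in $\IDn{0}$), and it is the natural analogue, for paths in a graph, of the trivial observation that a word of length $|\sigma|+|\tau|$ decomposes uniquely as a length-$|\sigma|$ prefix followed by a length-$|\tau|$ suffix. Having established Prop.~\ref{prop:trans-matrix-additive}, I would note that it immediately gives that $\sigma\mapsto\delta(\sigma)$ is a semigroup homomorphism from $(A^+,\cdot)$ into the finite semigroup $(\delta(A^*),\cdot)$, setting up the application of the Nonuniform Additive Ramsey Theorem (Thm.~\ref{thm:additive-ramsey-nonunif}) to the induced colouring $C(i,j) \dfn \delta(X(i)X(i+1)\cdots X(j-1))$ of $\binom{\Nat}{2}$ in the subsequent complementation argument.
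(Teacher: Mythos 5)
Your proposal is correct, and it reaches the conclusion by a slightly different route than the paper. The paper's proof is an induction on the length of $\tau$: the only interesting case is the base case $\tau = a$ for a single letter $a \in A$ (i.e.\ the one-step splitting of a path following $\sigma a$ into a path following $\sigma$ plus one transition), and the inductive step is then discharged by associativity of word composition. You instead avoid the induction on $\tau$ altogether and argue directly: a path $x : q \trarr{\sigma\tau}{\delta} q'$ splits uniquely at position $|\sigma|$ through the intermediate state $p = x(|\sigma|)$, and the three values $0$, $1$, $\final$ are then matched against $\max_{p}\bigl(\delta(\sigma)(q,p)\cdot\delta(\tau)(p,q')\bigr)$ by an exhaustive case analysis. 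The combinatorial content is the same in both arguments (it is the splitting/gluing of coded finite sequences, a $\Din{0}{1}$ fact with primitive-recursively bounded quantifiers, so unproblematic in $\RCA$); the trade-off is that the paper's version localises all the bookkeeping to the single-letter case at the cost of an explicit (meta-routine) induction, whereas yours needs the general splitting lemma but handles all of $\tau$ at once. Your remark about the boundary state $p$ is the one place where care is genuinely needed — with the paper's convention that $\final$ records an accepting state \emph{after} the source of the path, an accepting $p$ is charged to the $\sigma$-factor and not double-counted in the $\tau$-factor, which is exactly what your disjunction for the $\final$ case requires. Either proof is acceptable.
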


\noindent
The proof follows by a straightforward induction on the length of $\tau$. Only the base case when $\tau$ is some $a\in A$ is interesting, with the inductive case following by associativity of word composition.

%
%
%

Now, for $\beta,\gamma: Q\times Q \to \{ 0,1,\final \}$, let us say that the pair $(\beta,\gamma)$ is \emph{rejecting} if:
\begin{enumerate}
\item\label{item:ram-fac-reaches} $\beta\gamma = \beta$; and,
\item\label{item:ram-fac-loops} $\gamma\gamma = \gamma$; and,
\item\label{item:ram-fac-nonfinal} $\forall q \in Q. (\beta(q_0,q) >0 \cimp \gamma(q,q) < \final)$
\end{enumerate}

Again, the property of being rejecting pair is clearly $\Delta^0_1$, since we have fixed $Q$ in advance so there are only finitely many cases to consider.
Let us adopt the interval notation $[i,j]$ for the set $\{i, i+1, \dots, j \}$ and $[i,j)$ for the set $\{ i, i+1, \dots, j-1  \}$.
For an infinite sequence $X$ we also write $X[i,j]$ and $X[i,j)$ for the finite subsequences $(X(i), X(i+1), \dots, X(j))$ and $(X(i), X(i+1), \dots , X(j-1))$ respectively.

\begin{lem}
	[Ramseyan factorisation (in $\RCA$)]
	\label{lem:decomposition-of-word}
	For any $X \in A^\omega$, there are $\beta, \gamma \in \delta(A^*)$ and
	an infinite set $I \subseteq \Nat$ such that:
	\begin{enumerate}
\item\label{item:exists-beta}  $\delta( X[0,i) ) = \beta$, for $i\in I$.
\item\label{item:exists-gamma}  $\delta(X[i , j) ) = \gamma$, for $i,j \in I$ with $i<j$.
\end{enumerate}
Moreover, for any such $\beta,\gamma,I$ satisfiying \eqref{item:exists-beta} and \eqref{item:exists-gamma} above, we have that:
\begin{enumerate}
			\setcounter{enumi}{2}
	\item\label{item:rejecting-pair} $(\beta,\gamma)$ is a rejecting pair if and only if $X \notin \lang (\mathcal A)$. 
	\end{enumerate}
\end{lem}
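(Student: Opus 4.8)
The plan is to establish parts \eqref{item:exists-beta} and \eqref{item:exists-gamma} by a direct application of the Nonuniform Additive Ramsey Theorem (Thm.~\ref{thm:additive-ramsey-nonunif}), and then to verify \eqref{item:rejecting-pair} by an elementary analysis of runs, all carried out inside $\RCA$. For the first two clauses, fix $X\in A^\omega$ and define the colouring $C:\binom\Nat 2\to \delta(A^*)$ by $C(i,j)\dfn \delta(X[i,j))$. Since $X[i,k) = X[i,j)\cdot X[j,k)$ for $i<j<k$, Prop.~\ref{prop:trans-matrix-additive} gives $C(i,k)=\delta(X[i,j))\,\delta(X[j,k)) = C(i,j)C(j,k)$, so $C$ is additive; note $(\delta(A^*),\cdot)$ is a finite semigroup. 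Applying Thm.~\ref{thm:additive-ramsey-nonunif} yields some $\gamma\in\delta(A^*)$ and an infinite $\gamma$-clique $I_0$. Then for $i,j\in I_0$ with $i<j$ we have $\delta(X[i,j))=\gamma$, which is \eqref{item:exists-gamma}. For \eqref{item:exists-beta}, observe that $i\mapsto \delta(X[0,i))$ is a map from $I_0$ into the finite set $\delta(A^*)$, so by the Nonuniform Infinite Pigeonhole Principle (Fact~\ref{fact:iphp-nonunif}) it is constant, say equal to $\beta$, on an infinite $I\subseteq I_0$; this $I$ still satisfies \eqref{item:exists-gamma}. (Alternatively one colours $\binom{(\Nat\cup\{-\infty\})}{2}$ with an extra point for the prefix from $0$, but the pigeonhole step is cleaner.)

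For \eqref{item:rejecting-pair}, fix any $\beta,\gamma,I$ as in \eqref{item:exists-beta}--\eqref{item:exists-gamma}; write $I=\{i_0<i_1<i_2<\cdots\}$. From the factorisation and Prop.~\ref{prop:trans-matrix-additive} we get the algebraic identities $\beta\gamma=\beta$ (since $\delta(X[0,i_1))=\delta(X[0,i_0))\delta(X[i_0,i_1))$) and $\gamma\gamma=\gamma$ (since $\delta(X[i_0,i_2))=\delta(X[i_0,i_1))\delta(X[i_1,i_2))$). So conditions \eqref{item:ram-fac-reaches} and \eqref{item:ram-fac-loops} of ``rejecting'' hold automatically, and the content of \eqref{item:rejecting-pair} reduces to: condition \eqref{item:ram-fac-nonfinal}, i.e.\ $\forall q.(\beta(q_0,q)>0 \cimp \gamma(q,q)<\final)$, holds iff $X\notin\lang(\mathcal A)$. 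I would prove the contrapositive in both directions. If $X\in\lang(\mathcal A)$, take an accepting run $Y\in Q^\omega$, cf.~\eqref{eqn:nd-memb}; let $q=Y(i_0)$. The prefix of $Y$ up to $i_0$ witnesses $\beta(q_0,q)>0$. Since $Y$ hits $F$ infinitely often, by the pigeonhole principle some state is hit at infinitely many points of $I$ while $F$ is hit strictly between consecutive such points; more carefully, using Fact~\ref{fact:iphp-nonunif} on $n\mapsto Y(i_n)$ we find an infinite $J\subseteq I$ on which $Y$ takes a constant value $q'$, and between two consecutive elements of $J$ the run $Y$ visits $F$ (shrinking $J$ further if necessary, using that $F$-visits are cofinal), so $\gamma(q',q')=\final$; and $\beta(q_0,q')>0$, contradicting \eqref{item:ram-fac-nonfinal}. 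Conversely, if \eqref{item:ram-fac-nonfinal} fails, fix $q$ with $\beta(q_0,q)>0$ and $\gamma(q,q)=\final$. Then $\beta(q_0,q)>0$ gives a finite path $x_0:q_0\trarr{X[0,i_0)}\delta q$, and $\gamma(q,q)=\final$ gives, for each $k$, a finite path $x_k:q\trarr{X[i_{k-1},i_k)}\delta q$ that visits $F$. Concatenating $x_0 x_1 x_2\cdots$ (defined by primitive recursion / comprehension, gluing at the points $i_k\in I$, each block chosen least) produces $Y\in Q^\omega$ with $Y(0)=q_0$, $(Y(m),X(m),Y(m+1))\in\delta$ for all $m$, and $F$ visited infinitely often, so $Y$ witnesses $X\in\lang(\mathcal A)$ via \eqref{eqn:nd-memb}.

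The main obstacle I expect is the bookkeeping in the ``$X\in\lang(\mathcal A)\Rightarrow$ condition \eqref{item:ram-fac-nonfinal} fails'' direction: one must pass from an accepting run to a pair of positions in $I$ at which the run returns to the same state \emph{and} between which an accepting state is visited, and this requires combining the clique structure of $I$ with the cofinality of $F$-visits via (possibly two nested applications of) Fact~\ref{fact:iphp-nonunif}, being careful to do the arguments with the $\Delta^0_1$ predicates available in $\RCA$. The converse direction's obstacle is purely notational: defining the infinite run $Y$ by stitching together the finite witnessing paths on the blocks $[i_{k-1},i_k)$, which is a standard primitive-recursive construction (choosing at stage $k$ the least code of a suitable finite path, whose existence is guaranteed by $\gamma(q,q)=\final$ and Prop.~\ref{prop:trans-matrix-additive}) and so is available in $\RCA$ via $\CCA{\Din 0 1}$. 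All inductions and minimisations invoked are at the level of $\RCA$, and the only genuinely combinatorial input is Thm.~\ref{thm:additive-ramsey-nonunif}, already proved above.
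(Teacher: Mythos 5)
Your proposal is correct and follows essentially the same route as the paper: additivity of $C(i,j)=\delta(X[i,j))$ plus the Nonuniform Additive Ramsey Theorem for clauses \eqref{item:exists-beta}--\eqref{item:exists-gamma}, then a pigeonhole argument on the states at positions in $I$ for one direction of \eqref{item:rejecting-pair} and a stitched-together run (via $\CCA{\Din 0 1}$) for the other. The only (harmless) deviations are that you obtain $\beta$ by an extra pigeonhole application where the paper just notes $C(0,i)=C(0,i_0)\gamma$ for $i>i_0$ in the clique, and you make explicit --- which the paper leaves implicit --- that conditions \eqref{item:ram-fac-reaches} and \eqref{item:ram-fac-loops} of ``rejecting'' hold automatically by additivity, so only condition \eqref{item:ram-fac-nonfinal} is at stake.
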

\begin{proof}
	Working inside $\RCA$, let $C:\binom \Nat 2 \to \delta(A^*)$ by $C(i,j) = \delta(X[i,j))$.
	By Prop.~\ref{prop:trans-matrix-additive} we have that $C$ is an additive colouring, and thus we may apply the Nonuniform Additive Ramsey Theorem, Thm.~\ref{thm:additive-ramsey-nonunif}, to obtain some infinite set $I_0$ with $\gamma = C(i,j) $, for all $i,j \in I_0$ with $i<j$, yielding \eqref{item:exists-gamma}.
	Now, for an arbitrary $i_0\in I_0$ we may set $I = \{ i \in I_0 : i>i_0 \}$.
	Notice that, now, $C(0,i) = C(0,i_0)C(i_0,i) = C(0,i_0)\gamma$, so we may set $\beta = C(0,i)$ for some/any $i \in I$, yielding \eqref{item:exists-beta}.
%
%

For \eqref{item:rejecting-pair},
first suppose $(\beta,\gamma)$ is a rejecting pair and let
$Y \in Q^\omega$ 
be a run of $\mathcal A $ on $X$. 
We will show that $Y$ cannot be accepting.
By the Non-Uniform Infinite Pigeonhole Principle, Fact~\ref{fact:iphp-nonunif}, 
there is some $q \in Q $ and some infinite subset $I'\subseteq I$ such that $Y(i) = q $ for each $i \in I'$.
We claim that, for any $i,j \in I'$ and $k\in \Nat$ such that $i<k<j$, we have that $q_k \notin F$.
For this notice that:
\begin{itemize}
	\item $\beta(q_0,q)>0$ since $Y[0,i]: q_0 \longtrarr{X[0,i)}\delta q$.
	\item  $\gamma (q,q) <\final$ since $(\beta,\gamma)$ is a rejecting pair.
\end{itemize}
So, if $q_k$ lies on the path $Y[i,j]:q \longtrarr{X[i,j)}{\delta}q$, we must have that $q_k \notin F$, appealing to Prop.~\ref{prop:trans-matrix-additive}.
Thus, for any $i\in I'$, we have that $\forall k> i . Y(k)\notin F$, and so $Y$ is not an accepting run.


Now, suppose that $(\beta,\gamma)$ is not a rejecting pair and, by definition, let $q$ be such that $\beta(q_0,q)>0$ and $\gamma(q,q) = \final$.
We may enumerate infinite sets in $\RCA$ so let $I = (i_j)_{j\in \Nat}$.
We may now simply define an accepting run $Y$ of $\mathcal A$ on $X$ by recursive comprehension by insisting that,
\begin{itemize}
	\item in the interval $[0,i_0)$, $Y$ follows the `least' path through $\delta$ from $q_0$ to $q$; and,
	\item in the interval $[i_j, i_{j+1})$, $Y$ follows the `least' path through $\delta$ from $q$ to $q$ hitting a final state.
\end{itemize} 
Such paths must exist since $(\beta,\gamma)$ is not a rejecting pair, and the set of all paths of bounded length may be enumerated in $\RCA$.
By construction, $Y$ is accepting.
\end{proof}

\subsection{The complement NBA and proof of correctness}
Now we are ready to define the complement automaton of $\mathcal A$, which simply guesses a rejecting Ramseyan factorisation of an input $\omega$-word:
\begin{defi}
	
	We define the NBA ${\mathcal A^c} = (A, Q^c, \delta^c, q_0^c, F^c )$ as follows:
	\begin{itemize}
		\item $ Q^c = \{q_0 \} \cup \delta(A^*) \cup \delta(A^*)^2 \cup \delta(A^*)^3$.
		\item $\delta^c$ consists of the following transitions:
		\begin{itemize}
			\item $(q_0, a , (\beta,\gamma, \delta(a))$, for each rejecting pair $(\beta,\gamma)$.
			\item $((\beta,\gamma, \zeta), a , ( \beta, \gamma, \zeta\delta(a) ) )$
			\item $((\beta,\gamma,\zeta) ,a, \gamma )$ if $\zeta\delta(a)=\beta$.
			\item $(\gamma, a, (\gamma, \delta(a)) )$.
			\item $((\gamma, \zeta),a,(\gamma, \zeta\delta(a)))$.
			\item $((\gamma, \zeta),a,\gamma)$ if $\zeta \delta(a)=\gamma$.
			\item $(\gamma,a,\gamma)$ if $\delta(a)=\gamma$.
		\end{itemize}
		\item $q_0^c = q_0$.
		\item $F^c = \delta(A^*)$.
	\end{itemize}
	\end{defi}
\noindent
Now we are ready to prove the non-uniform complementation result.
\begin{proof}
	[Proof of Prop.~\ref{prop:nonuniform-compl}]
By Lemma~\ref{lem:decomposition-of-word}, it suffices to show that $\mathcal A^c$ accepts an $\omega$-word just if has a Ramseyan factorisation that is rejecting.
Let us proceed working inside $\RCA$.

First, suppose $X \in A^\omega$ with $X \notin \lang (\mathcal A)$ and
let $\beta, \gamma, I$ be obtained from Lemma~\ref{lem:decomposition-of-word}. 
Again we may enumerate $I = (i_j)_{j \in \Nat}$.
Define the run $Y$ of $X$ on $\mathcal A^c$ as follows:
\begin{itemize}
	\item $Y(0) = q_0$ and $Y(1) = (\beta, \gamma, \delta(X(0)))$;
	\item in the interval $[1,i_0)$, $Y$ follows the (unique) transitions of the form $((\beta, \gamma, \zeta), a, (\beta, \gamma, \zeta \delta(a) ))$;
		\item $Y(i_j) = \gamma$, for all $j\in \Nat$;
	\item in an interval $(i_{j}, i_{j+1})$, $Y$ follows the (unique) transitions of the form $( (\gamma, \zeta) , a, (\gamma, \zeta\delta(a) ))$.
\end{itemize}
$Y$ is clearly recursive and so is indeed definable by $\CCA{\Din 0 1 }$.
Moreover, $Y$ hits the final state $\gamma$ infinitely often (at each $i_j \in I$), so $Y$ is an accepting run for $X$ on $\mathcal A^c$.

Conversely, suppose $X \in \lang(\mathcal A^c)$ and let $Y$ be an accepting run with $Y(1)=(\beta,\gamma,\delta(X(0)))$.
By a routine induction, $Y$ must hit $\gamma$ infinitely often, since that is the only state in $\delta(A^*)$ that can ever be hit.
Thus, by $\CCA{\Din 0 1}$, we have an infinite set $I = \{ i \in \Nat : Y(i) = \gamma\}$, which we again enumerate $I = (i_j)_{j\in \Nat}$.
We may now show the properties \eqref{item:ram-fac-reaches} and \eqref{item:ram-fac-loops} of Lemma~\ref{lem:decomposition-of-word} with respect to the $\beta, \gamma, I$ thus defined by induction.
More precisely, we prove by induction on $i\in \Nat$ that $Y(i)$ has the following form:
\begin{itemize}
	\item $q_0$, if $i=0$;
	\item $(\beta,\gamma,\delta(X[0,i)))$, if $i<i_0$;
	\item $\gamma$, if $i \in I$; 
	\item $(\gamma, \delta(X[i_j,i)))$ if $i_j < i < i_{j+1}$.
\end{itemize}

\noindent
Properties
\eqref{item:ram-fac-reaches} and \eqref{item:ram-fac-loops} from Lemma~\ref{lem:decomposition-of-word} now follow as special cases. 
Since $(\beta,\gamma)$ was a rejecting pair, by definition of $\mathcal A^c$, we have from Lemma~\ref{lem:decomposition-of-word}.\eqref{item:ram-fac-nonfinal} that $\mathcal A$ rejects $X$.
This concludes the proof.
\end{proof}

\end{document}